\documentclass[11pt,a4paper]{article}
\usepackage[margin=3cm]{geometry}
\usepackage[utf8]{inputenc}
\usepackage[T1]{fontenc}
\usepackage{amsmath}
\usepackage{amsfonts}
\usepackage{amssymb}
\usepackage{amsthm}
\usepackage{graphicx}
\usepackage{lmodern}
\usepackage{hyperref}
\usepackage[dvipsnames]{xcolor}
\usepackage{bm}
\usepackage{stmaryrd}
\usepackage{framed} 
	\definecolor{shadecolor}{gray}{0.9}  
\usepackage{mathrsfs} 
\usepackage{textcomp}
\usepackage{authblk}

\usepackage{tikz}
	\usetikzlibrary{arrows,positioning}

\theoremstyle{plain}
	\newtheorem{theorem}{Theorem}
	\newtheorem{lemma}{Lemma}
	\newtheorem{proposition}{Proposition}
	\newtheorem{definition}{Definition}
\theoremstyle{definition}
	
\theoremstyle{remark}

\newcommand{\tot}{\leftrightarrow}

\newcommand{\To}{\Rightarrow}

\usepackage{bussproofs}
\usepackage{pifont}
\usepackage[style=numeric, backend=biber]{biblatex}
\usepackage{multicol}
\usepackage[british]{babel}
\usepackage{enumitem}
\usepackage[size=\footnotesize,loadshadowlibrary,shadow,color=orange!50]{todonotes}
\usetikzlibrary{babel}
\usetikzlibrary{cd}
\usetikzlibrary{arrows.meta, positioning}
\setlist{nolistsep,leftmargin=*}
\allowdisplaybreaks
\newcommand{\True}{1}
\newcommand{\False}{0}
\newcommand{\M}{\mathfrak{M}}
\newcommand{\vTrue}{\vDash^{\True}}
\newcommand{\vFalse}{\vDash^{\False}}
\newcommand{\Tot}{\Leftrightarrow}
\newcommand{\xmark}{\ding{55}}
\newcommand{\Th}{\mathrm{Th}}

\theoremstyle{plain}
\newtheorem{fact}{Fact}

\author{Igor Sedl\'{a}r}
\author{Pietro Vigiani}
\affil{Institute of Computer Science, Czech Academy of Sciences}
\title{Modal Extensions of Generalised Nelson Logics}

\begin{document}
\maketitle

\begin{abstract}
    Motivated by relevant epistemic logic, we extend Dunn's simple Kripke-style binary relational semantics for the semi-relevant logic $\mathbf{RM}$ to fit a modal extension of $\mathbf{RM}$. It is shown that the modal axioms and rules that need to be added to $\mathbf{RM}$ to obtain a sound and complete axiomatisation of its modal extension give an axiomatisation of modal extensions of all logics in the family of generalised Nelson logics, also studied by Dunn. 
\end{abstract}

\section{Introduction}\label{sec:Intro}


Recently, there has been renewed interest in relevant modal logics, as they provide a fine-grained model of reasoning about epistemic notions and draw distinctions that classical modal logic cannot draw \cite{SedlarVigiani2022,SedlarVigiani2024,Standefer2022a,StandeferMares2025}. However, relevant modal logics have substantially more complicated semantics than classical modal logics and they tend to be undecidable. 
The propositional relevant logic $\mathbf{RM}$ differs from mainstream relevant logics in these respects --- it has a simple relational semantics based on ordered structures, developed by Dunn \cite{Dunn1976-DUNAKS}, and it is decidable \cite[§29.3 (written by R.\ K.\ Meyer)]{AndersonBelnap1975}, \cite{Dunn1976-DUNAKS}. Modal extensions of $\mathbf{RM}$ can be equipped with generic Routley--Meyer semantics extended with modal accessibility relations \cite{Fuhrmann1990}, but Dunn-style semantics for them has not been studied, to the best of our knowledge. The aim of this work is to fill this gap. 

Besides belonging to the relevant logics family, $\mathbf{RM}$ belongs to the family of so-called \emph{generalised Nelson logics}, also studied by Dunn \cite{Dunn2000}. The main technical result of the present paper is a modular soundness and completeness theorem for modal extensions of generalised Nelson logics, including modal $\mathbf{RM}$, with respect to Dunn-style semantics. Specifically, we show that the same template of modal axioms and rules can be added to the propositional axioms for each generalised Nelson logic $\mathbf{L}$ to obtain a sound and complete axiomatisation of a modal extension of $\mathbf{L}$.

Our semantics for modal operators uses a single accessibility relation $R$ for both $\Box$ and $\Diamond$. Formulas of the form $\Box A$ are considered true (false) in $x$ iff $A$ is true in all (false in some) states $R$-accessible from $x$, and $\Diamond A$ is considered true (false) in $x$ iff $A$ is true in some (false in all) states $R$-accessible from $x$. 
Our main motivation for adopting this semantics is its simplicity. It also distinguishes our framework from the existing approaches to modal extensions of Nelson-type logics \cite{Drobyshevich2020,Gao2025,Goble2006,OdintsovWansing2004,Odintsov2010-ODIMLW-2,OdintsovWansing2017,RivieccioEtAl2017,Sherkhonov2008}, which adopt different truth and falsity conditions for modal operators, assume frame conditions other than those assumed there, or formulate their modal extensions on propositional bases different from those considered by us. 
A detailed discussion of related work is deferred to a separate section.

The paper is organised as follows. In Section \ref{sec:PropSetting}, we recall, with some modifications, Dunn's presentation of the family of propositional generalised Nelson logics, their semantics and axiomatisations. 
Section \ref{sec:Modal} introduces modal extensions of propositional generalised Nelson logics and establishes a soundness and completeness result for them.  
In Section \ref{sec:Related} we discuss how our framework differs from existing work in modal Nelson-type logics. 
Section \ref{sec:Conclusion} concludes the paper and lists some interesting problems left to future work.

\section{Generalised Nelson Logics}\label{sec:PropSetting}

In \cite{Dunn2000}, Dunn defines a family of non-classical propositional logics, so-called \emph{generalised Nelson logics}. Logics in the family are unified by the fact that (i) they all admit a natural Kripke--style semantics which is \emph{bilateral}, i.e.~it is formulated in terms of independent verification and falsification relations; (ii) they all contain \emph{strong implication} $\to$ and \emph{strong negation} $\neg$, thereby providing different treatments of constructivity along the lines of Nelson's logic; and (iii) positive intuitionistic logic $\mathbf{IPC}^+$ can be embedded in all of them.
In this section, we review the Hilbert systems and model theory for propositional generalised Nelson logics. Compared to Dunn's presentation, we make minor modifications and corrections, which we find useful in view of the modal extension of Section \ref{sec:Modal}.

\subsection{Hilbert systems}\label{sec:PropHilbert}

Starting from a Hilbert style presentation of the generalised Nelson family, Dunn axiomatises in \cite{Dunn2000} 12 different logics, for which we use $\mathbf{L}$ as a meta-variable:
$$
\mathbf{L} \in \{ \mathbf{B}_\mathbf{1}, \mathbf{B}_\mathbf{10}, \mathbf{B}_\mathbf{1c}, \mathbf{B}_\mathbf{10c}, 
\mathbf{N}_\mathbf{1}, \mathbf{N}_\mathbf{10}, \mathbf{N}_\mathbf{1c}, \mathbf{N}_\mathbf{10c}, 
\mathbf{BN}_\mathbf{1}, \mathbf{BN}_\mathbf{10}, \mathbf{BN}_\mathbf{1c}, \mathbf{BN}_\mathbf{10c} \}.
$$

\noindent Some of the above logics correspond to known systems, in particular $\mathbf{N_{1}}$ and $\mathbf{BN_{1}}$ are Nelson's three-valued logic of constructible falsity $\mathbf{N3}$ \cite{Nelson1949} and its four-valued version $\mathbf{N4}$ \cite{AlmukdadNelson1984}; $\mathbf{B_{10c}}$ is the semi-relevant logic $\mathbf{RM}$ \cite{Dunn1976a} mentioned in Section \ref{sec:Intro}; 
and $\mathbf{N_{10}}$ and $\mathbf{BN_{10}}$ are the lesser known logics $\mathbf{F^{**}}$ and $\mathbf{BN}$, developed in \cite{Slaney1989} with an application to the problem of indeterminacy of future contingents.\footnote{Note that $\mathbf{F^{**}}$ and $\mathbf{BN}$ are equivalent up to mutual embeddings to $\mathbf{N3}$ \cite{SpinksVeroff2010}, respectively  $\mathbf{N4}$ \cite{Spinks2018}.} 

As for the naming conventions, the following motivations roughly guide the inclusion of the letters/subscripts $\mathbf{B},\mathbf{N}$,$\mathbf{10}$,$\mathbf{c}$ in the name $\mathbf{L}$. $\mathbf{L}$ contains
$\mathbf{B}$ (for \emph{both}) only if $\mathbf{L}$ is paraconsistent, i.e.~$B$ is not derivable from $A \land \neg A$;
$\mathbf{N}$ (for \emph{neither}) only if $\mathbf{L}$ is paracomplete, i.e.~$A \lor \neg A$ is not derivable from $B$; 
$\mathbf{10}$ (for \emph{truth}/\emph{falsity}) only if $\mathbf{L}$ is contrapositive (cf.~axiom \eqref{L:toContrap});
$\mathbf{c}$ (for \emph{connected}) only if $\mathbf{L}$ contains a version of the chain axiom (cf.~axiom \eqref{L:Conn}).
To simplify the notation, in what follows we write for some string of subscripts $\mathbf{s}= \mathbf{ijk}$: (i)  $\mathbf{L} \in \mathbf{L_s}$ to denote that $\mathbf{L} \in \{\mathbf{B_s}, \mathbf{N_s}, \mathbf{BN_s}\}$; and (ii) $\mathbf{L} = \mathbf{L_{i(j)k}}$ to denote that $\mathbf{L} \in \{ \mathbf{L_{ik}},\mathbf{L_{ijk}}\}$.
All propositional logics are axiomatised over the same language $\mathcal{L}$, defined below, except for logics which include $\mathbf{B}$ and the subscript $\mathbf{10}$, which are axiomatised over $\mathcal{L}^t \supseteq \mathcal{L}$, also defined below.

\begin{definition}[Languages]
The propositional language $\mathcal{L}$ is the set of formulas generated by the following grammar:
\[ A :=  p  \mid \neg A \mid  A \land A \mid A \lor A \mid A \to A \]
where $p \in \mathit{Prop}$, where $\mathit{Prop}$ is a countably infinite set of propositional variables. We define $A \tot B$ as $(A \to B) \land (B \to A)$.
Let $\mathcal{L}^t$ be the extension of $\mathcal{L}$ with the nullary operator $t$ (the `Ackermann constant'). Let the modal languages $\mathcal{L}_{\Box \Diamond}$ and $\mathcal{L}^t_{\Box \Diamond}$ be the extensions of $\mathcal{L}$ and $\mathcal{L}^t$ with two unary operators $\Box$ and $\Diamond$ (cf. Section \ref{sec:Modal}).
\end{definition}

We will use $\mathcal{L}_{\mathbf{L}}$ to denote the specific language in which the logic $\mathbf{L}$ is formulated, be it $\mathcal{L}$ and $\mathcal{L}^t$ in this section or $\mathcal{L}_{\Box \Diamond}$ and $\mathcal{L}^t_{\Box \Diamond}$ in Section \ref{sec:Modal}.

To provide a way of understanding at once the many Hilbert systems for generalised Nelson logics, it is crucial to emphasise the fact noted at the beginning of this section, namely that all generalised Nelson logics contain $\mathbf{IPC}^+$. Indeed, Dunn showed in \cite{Dunn2000} that in each $\mathbf{L}$ intuitionistic implication is expressible by means of different formulas, and denoted with \emph{elliptical implication} the formula template $A \To_{\mathbf{L}} B$, which is identical in each $\mathbf{L}$ to a formula which behaves like intuitionistic implication. 

\begin{definition}[Elliptical implication]\label{def:Elliptical}
    \emph{Elliptical implication} in a logic $\mathbf{L}$ is parametrically defined as follows:
    $$
    A \To_\mathbf{L}B := 
    \begin{cases}
        A \to B & \textit{if } \mathbf{L} \in \mathbf{L_{1(c)}}\\
        (A \land t) \to B & \textit{if } \mathbf{L} = \mathbf{B_{10(c)}}\\
        A \to (A \to B) & \textit{if } \mathbf{L} = \mathbf{N_{10(c)}}\\
        (A \land t) \to ((A \land t) \to B) & \textit{if } \mathbf{L} = \mathbf{BN_{10(c)}}.\\
    \end{cases}
    $$
\end{definition}

Note that in the above definition the Ackermann constant $t$ is used only in the definition of $\To_{\mathbf{L}}$ for $\mathbf{L} \in \{\mathbf{B_{10(c)}},\mathbf{BN_{10(c)}}\}$, which is why these are the only logics which are defined over the language $\mathcal{L}^t$.

Elliptical implication can be used to axiomatise generalised Nelson logics on the basis of $\mathbf{IPC}^+$. In fact, the axiomatisation of each $\mathbf{L}$ in \cite{Dunn2000} builds on the relevant logic $\mathbf{RWC}$, i.e.~the relevant logic $\mathbf{R}$ without the contraction axiom $(A \to (A \to B)) \to (A \to B)$ and the contraposition axiom $(A \to B) \to (\neg B \to \neg A)$. To the axioms of $\mathbf{RWC}$, Dunn supplements, for each generalised Nelson logic $\mathbf{L}$, two sorts of axioms: (i) axioms and rules of $\mathbf{IPC^+}$ (formulated using elliptical implication $\To_{\mathbf{L}}$) and (ii) specific axioms characterising the core principles of strong implication $\to$ and strong negation $\neg$ in all $\mathbf{L}$. The remaining axioms of $\mathbf{IPC}^+$ not included in (i) are derivable from the axioms of $\mathbf{RWC}$. 
As a result, all axiom systems for the generalised Nelson logics extend a basic common template $\mathbf{RWC}_{\To_{\mathbf{L}}}$, where we stress again that $\To_{\mathbf{L}}$ is a formula template which denotes different elliptical implications in each system. 

\begin{definition}[Basic axiom system]\label{def:BasicAx}
    For all $\mathbf{L}$, the Hilbert system $\mathbf{RWC}_{\To_\mathbf{L}}$ is the smallest set of formulas of $\mathcal{L}_{\mathbf{L}}$ that (i) contains all formulas and formula templates of the form
    \begin{gather}
A \to A \label{L:id}\\
(A \to B) \to ((C \to A) \to (C \to B)) \label{L:Trans}\\
(A \to (B \to C)) \to (B \to (A \to C)) \label{L:Exch}\\
(A \land B) \to A \qquad (A \land B) \to B\label{L:andElim}\\
((A \to B) \land (A \to C)) \to (A \to (B \land C))\label{L:andIntro}\\
A \to (A \lor B) \qquad B \to (A \lor B)\label{L:orIntro}\\
((A \to C) \land (B \to C)) \to ((A \lor B) \to C)\label{L:orElim}\\
(A \land (B \lor C)) \to ((A \land B) \lor C)\label{L:Distribut}\\
A \tot \neg\neg A\label{L:DN}\\
\neg (A \lor B) \tot (\neg A \land \neg B) \qquad \neg (A \land B) \tot (\neg A \lor \neg B)\label{L:DeMorgan}\\
A \To_{\mathbf{L}} (B \To_{\mathbf{L}} (A \land B)) \label{L:ToAdj}\\
A \To_{\mathbf{L}} (B \To_{\mathbf{L}} A) \label{L:ToPosParadox}\\
(A \To_{\mathbf{L}} (B \To_{\mathbf{L}} C)) \To_{\mathbf{L}} ((A \To_{\mathbf{L}} B) \To_{\mathbf{L}} (A \To_{\mathbf{L}} C)) \label{L:ToSelfDist}\\
((A \To_{\mathbf{L}} B) \land (\neg B \To_{\mathbf{L}} \neg A)) \To_{\mathbf{L}} (A \to B) \label{L:toPresAntipres}\\
(A \land \neg B) \To_{\mathbf{L}} \neg(A \to B)  \label{L:to0CounterEg2}
    \end{gather}
\noindent and (ii) is closed under elliptical modus ponens:
\begin{gather}
    \dfrac{A \To_{\mathbf{L}} B, \ A}{B} \label{L:ToModusPonens}
\end{gather}
\end{definition}

On the basis of Definition \ref{def:BasicAx}, each generalised Nelson logic $\mathbf{L}$ is defined as an axiomatic extension of $\mathbf{RWC}_{\To_{\mathbf{L}}}$, where in every case one adds (i) name-conferring axioms for $\mathbf{L}$ \eqref{L:NoGluts}--\eqref{L:Conn}, i.e.~determining whether the logic is paraconsistent, paracomplete, contrapositive and whether it satisfies a version of the chain axiom; (ii) the Ackermann axiom \eqref{L:Ack} to logics $\mathbf{L} \in \{\mathbf{B_{10(c)}},\mathbf{BN_{10(c)}}\}$ which require the Ackermann constant $t$ to express elliptical implication $\To_{\mathbf{L}}$; and (iii) axioms \eqref{L:to0CounterEg}--\eqref{L:toContrap} determining the specific properties of strong implication $\to$ and strong negation $\neg$ in $\mathbf{L}$. These axiomatic variations are summarised in Table \ref{tab:DunnLogics}, where we remark that the symbol \xmark , corresponding to $\mathbf{B_{1}}$ and $\mathbf{B_{10}}$, indicates that these logics are not sound with respect to the semantics introduced in Section \ref{sec:PropModels}. We note that Dunn does not consider these logics to be part of the generalised Nelson logics family for the same reason.


\begin{definition}[Generalised Nelson logics]\label{def:PropAxiom}
\noindent Each \emph{generalised Nelson logic} $\mathbf{L}$ is the axiomatic extension of $\mathbf{RWC}_{\To_{\mathbf{L}}}$ with a subset of the following axioms, as specified by Table \ref{tab:DunnLogics}:
\begin{gather}
        (A \land \neg A) \To_{\mathbf{L}} B  \label{L:NoGluts}\\
        A \To_{\mathbf{L}} (B \lor \neg B) \label{L:NoGaps} \\
        (A \To_{\mathbf{L}} B) \lor (B \To_{\mathbf{L}} A) \label{L:Conn}\\
        A \tot (t \to A) \label{L:Ack}\\
        \neg(A \to B) \To_{\mathbf{L}} (A \land \neg B) \label{L:to0CounterEg}\\
        ((A \to B) \land \neg(A \to B)) \To_{\mathbf{L}} ( A \land \neg B)  \label{L:to0+NonTrue.1}\\
        (A \to B) \to (\neg B \to \neg A). \label{L:toContrap}
\end{gather}
\end{definition}

\begin{table}[!t]
    \centering
    \begin{tabular}{ll|l}
    \multicolumn{2}{l|}{\textbf{Logic}} & \textbf{Extension of $\mathbf{RWC}_{\To_{\mathbf{L}}}$ }  \\[1mm] \hline 
      $\mathbf{N_{ 1}}$ & $ = \  \mathbf{N3}$ \cite{Nelson1949} &  \eqref{L:NoGluts}, \eqref{L:to0CounterEg}  \\  \hline
      $\mathbf{N_{ 10}} $ & $ = \  \mathbf{F^{**}}$ \cite{Slaney1989}   & \eqref{L:NoGluts}, \eqref{L:Ack}, \eqref{L:to0CounterEg}, \eqref{L:toContrap} \\ \hline
      $\mathbf{B_{ 1}} $ & \ \xmark  & \eqref{L:NoGaps}, \eqref{L:to0+NonTrue.1}  \\ \hline
      $\mathbf{B_{ 10}}$ & \ \xmark  & \eqref{L:NoGaps}, \eqref{L:Ack}, \eqref{L:to0+NonTrue.1}, \eqref{L:toContrap} \\ \hline
      $\mathbf{BN_{ 1}} $ & $ = \  \mathbf{N4}$ \cite{AlmukdadNelson1984}  
      &\eqref{L:to0CounterEg} \\ \hline
      $\mathbf{BN_{ 10}} $ & $ = \  \mathbf{BN}$ \cite{Slaney1989} &\eqref{L:Ack}, \eqref{L:to0CounterEg}, \eqref{L:toContrap} \\  \hline
      $\mathbf{B_{10c}}$ & $= \ \mathbf{RM}$ \cite{Dunn1976a} & \eqref{L:NoGaps}, \eqref{L:Ack}, \eqref{L:to0+NonTrue.1}, \eqref{L:toContrap}, \eqref{L:Conn} \\ \hline
      $\mathbf{L_{1(0)c}}$ &  & $\mathbf{L}$, \eqref{L:Conn} 
    \end{tabular}
    \caption{Generalised Nelson logics.}
    \label{tab:DunnLogics}
\end{table}

\begin{definition}[Derivability]\label{def:Lderivable}
For a generalised Nelson logic $\mathbf{L}$, a formula $A \in \mathcal{L}_{\mathbf{L}}$ is \emph{$\mathbf{L}$-derivable} from a set of formulas $\Gamma \subseteq \mathcal{L}_{\mathbf{L}}$, written $\Gamma \vdash_{\mathbf{L}} A$, iff there is a sequence of formulas $\langle D_1, \ldots, D_n\rangle$ such that $D_n = A$ and for all $i \leq n$ we have one of the following: (i) $D_i \in \mathbf{L}$, (ii) $D_i \in \Gamma$, or (iii) there are $j, k < i$ such that $D_k = D_j \To_{\mathbf{L}} D_i$. 
Such a sequence $\langle D_1, \ldots, D_n\rangle$ will be called an \emph{$\mathbf{L}$-derivation} (of $D_n$) from $\Gamma$. 
We define $\Th_{\mathbf{L}}(\Gamma) = \{ A \mid \Gamma \vdash_{\mathbf{L}} A \}$.
\end{definition}

We will write $\Gamma, \Delta \vdash_{\mathbf{L}} A$ instead of $\Gamma \cup \Delta \vdash_{\mathbf{L}} A$. We will write $\Gamma, A \vdash_{\mathbf{L}} B$ instead of $\Gamma \cup \{ A \} \vdash_{\mathbf{L}} B$ and $\vdash_{\mathbf{L}} A$ instead of $\emptyset \vdash_{\mathbf{L}} A$. We omit the subscript $\mathbf{L}$ when it is clear from context.
The following obvious facts about $\vdash_{\mathbf{L}}$, which follow from the intuitionistic properties of $\To_{\mathbf{L}}$,  will be used throughout the paper.

\begin{fact}\label{fact:vdashL}
For all $\mathbf{L}$:
\begin{enumerate}
\item if $A \in \Gamma$, then $\Gamma \vdash_{\mathbf{L}} A$;
\item if $\Gamma \vdash_{\mathbf{L}} A$, then $\Gamma, \Delta \vdash_{\mathbf{L}} A$;
\item if $\Gamma \vdash_{\mathbf{L}} A$ and $\Gamma \vdash_{\mathbf{L}} A \To B$, then $\Gamma \vdash_{\mathbf{L}} B$;
\item if $\Gamma \vdash_{\mathbf{L}} A$ and $\Delta, A \vdash_{\mathbf{L}} B$, then $\Gamma, \Delta \vdash_{\mathbf{L}} B$.
\end{enumerate}
\end{fact}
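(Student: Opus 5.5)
Each item follows directly from Definition~\ref{def:Lderivable}, by manipulating finite $\mathbf{L}$-derivations; no semantics is needed. For item 1, if $A \in \Gamma$ then the one-element sequence $\langle A \rangle$ is an $\mathbf{L}$-derivation of $A$ from $\Gamma$ by clause (ii). For item 2, any $\mathbf{L}$-derivation from $\Gamma$ is also an $\mathbf{L}$-derivation from $\Gamma \cup \Delta$. Clauses (i) and (iii) do not mention the premise set, and clause (ii) is monotone in it.

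For item 3, take an $\mathbf{L}$-derivation $\langle D_1,\ldots,D_n\rangle$ of $A$ from $\Gamma$ and one $\langle E_1,\ldots,E_m\rangle$ of $A \To_{\mathbf{L}} B$ from $\Gamma$. The concatenation $\langle D_1,\ldots,D_n,E_1,\ldots,E_m,B\rangle$ is an $\mathbf{L}$-derivation of $B$ from $\Gamma$. Every entry of the two original sequences keeps its justification, because indices only shift and earlier entries stay earlier. The final $B$ is justified by clause (iii), with $D_n = A$ and $E_m = A \To_{\mathbf{L}} B$.

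For item 4, take a derivation $\langle D_1,\ldots,D_n\rangle$ of $A$ from $\Gamma$ and a derivation $\langle E_1,\ldots,E_m\rangle$ of $B$ from $\Delta \cup \{A\}$. Consider $\langle D_1,\ldots,D_n,E_1,\ldots,E_m\rangle$ as a candidate derivation from $\Gamma \cup \Delta$. Each $D_i$ is justified as before, by item 2. Each $E_j$ justified as an element of $\mathbf{L}$, as an element of $\Delta$, or by elliptical modus ponens from earlier $E$'s remains justified. The only problematic case is $E_j = A$ justified because $A \in \Delta\cup\{A\}$ with $A \notin \Delta$. Clause (ii) does not apply there, since $A$ need not lie in $\Gamma \cup \Delta$. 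The fix is to replace each such occurrence by a fresh copy of the block $D_1,\ldots,D_n$; alternatively, one can observe that an occurrence of a formula already derived earlier in the sequence can always be deleted and later references redirected to the earlier occurrence.

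The main obstacle is therefore this bookkeeping in item 4. Definition~\ref{def:Lderivable} has no explicit repetition rule, so I would first prove a small auxiliary claim. If a sequence is an $\mathbf{L}$-derivation from $\Gamma$ except that some entries equal formulas that occur earlier in the sequence, then deleting those entries yields a genuine $\mathbf{L}$-derivation from $\Gamma$ of the same last formula (provided the last formula still occurs). References in clause (iii) are redirected to the earlier occurrences. This is proved by induction on the number of such entries. With this claim, item 4 is immediate, since the occurrences of $A$ in the $E$-block are repetitions of $D_n$. The claim is also where one checks that clause (iii) indexes by position with $j,k<i$, so that earlier occurrences are legitimate premises. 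The remark in the paper that these facts follow from the intuitionistic properties of $\To_{\mathbf{L}}$ is not actually needed; only the structure of derivations is used.
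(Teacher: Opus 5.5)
Your proof is correct, but it argues differently from what the paper indicates. The paper gives no proof of Fact~\ref{fact:vdashL}; it only says these facts are obvious and ``follow from the intuitionistic properties of $\To_{\mathbf{L}}$''.

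On that reading, item 3 is the same concatenation you describe. Item 4 would then go through the local deduction theorem (Lemma~\ref{thm:DT}). Apply it to $\Delta, A \vdash_{\mathbf{L}} B$ to get $\Delta \vdash_{\mathbf{L}} A \To_{\mathbf{L}} B$. Weaken this and $\Gamma \vdash_{\mathbf{L}} A$ to $\Gamma \cup \Delta$ by item 2, and finish with item 3. Equivalently, an occurrence of $A$ used as a hypothesis in the second derivation can be re-justified directly. Insert the theorem $A \To_{\mathbf{L}} A$ via clause (i) and apply elliptical modus ponens to the earlier occurrence $D_n = A$.

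That route is shorter once the deduction theorem is available. However, it depends on the $\mathbf{IPC}^+$ axioms \eqref{L:ToPosParadox} and \eqref{L:ToSelfDist}, through Dunn's deduction theorem. Your argument uses only the shape of Definition~\ref{def:Lderivable}, notably that clause (iii) quantifies over arbitrary earlier positions. So it is axiom-independent and carries over verbatim to the modal systems $\mathbf{L}_\Box$.

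The one loose end in your deletion variant is the case $B = A$ where $E_m$ is itself the deleted entry. There you can truncate at the earlier occurrence, since initial segments of derivations are derivations, or just use item 2 directly. Your block-copying variant avoids the issue altogether.
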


It is noteworthy that the only rule of generalised Nelson logics is \emph{elliptical modus ponens} \eqref{L:ToModusPonens}, and that derivability is defined accordingly, with respect to $\To_{\mathbf{L}}$. However, standard \emph{modus ponens} for strong implication $\to$ is derivable from \eqref{L:ToModusPonens} thanks to the fact that strong implication in each $\mathbf{L}$ is indeed stronger than the corresponding elliptical implication $\To_\mathbf{L}$ (cf. Lemma \ref{lem:StrongToInt}). Therefore, in what follows we shall make use without mention of standard \emph{modus ponens}.
Moreover, confirmation that $\To_\mathbf{L}$ encodes intuitionistic implication in each $\mathbf{L}$ is offered by the provability of the standard deduction theorem, equating derivability of $B$ from $A$ in each $\mathbf{L}$ with the provability in $\mathbf{L}$ of the elliptical implication $A \To_{\mathbf{L}} B$ (cf. Lemma \ref{thm:DT}).

\begin{lemma}[Strong to elliptical]\label{lem:StrongToInt}
    For all $\mathbf{L}$, $\vdash_{\mathbf{L}} (A \to B) \To_{\mathbf{L}} (A \To_{\mathbf{L}} B)$.
\end{lemma}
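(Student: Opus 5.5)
The plan is to argue by cases on the four clauses of Definition~\ref{def:Elliptical}, since $\To_{\mathbf{L}}$ is only a template. In each case I would build a derivation of the relevant instance using the axioms of $\mathbf{RWC}_{\To_{\mathbf{L}}}$ and, where available, the Ackermann axiom \eqref{L:Ack}. Detachment must go only through elliptical modus ponens \eqref{L:ToModusPonens}, because strong modus ponens is what this lemma is meant to justify.

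\emph{Case $\mathbf{L}\in\mathbf{L_{1(c)}}$.} Here the statement is $(A\to B)\to(A\to B)$, which is an instance of \eqref{L:id}. Detachment by \eqref{L:ToModusPonens} coincides with strong modus ponens in this case, so no circularity arises.

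\emph{Case $\mathbf{L}=\mathbf{B_{10(c)}}$.} The target is $((A\to B)\land t)\to((A\land t)\to B)$. The first step is the theorem $(A\to B)\to((A\land t)\to B)$. I would get it from \eqref{L:Trans} instantiated with $C:=A\land t$, then \eqref{L:Exch}, and then discharge $(A\land t)\to A$ from \eqref{L:andElim}. Prefixing once more with $((A\to B)\land t)\to(A\to B)$, again from \eqref{L:andElim}, yields the target. Both discharges are strong detachments, so before them I would prove an auxiliary closure property: if $\vdash X\to Y$ and $\vdash X$, then $\vdash Y$.

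For this auxiliary property I would first show $\vdash t$. I would obtain it from the instance $t\tot(t\to t)$ of \eqref{L:Ack} together with \eqref{L:id}, and then show that theorems are closed under adjoining $t$, using \eqref{L:ToAdj}. The point is that an instance of \eqref{L:andElim} of the shape $(P\land t)\to P$ is literally $P\To_{\mathbf{L}}P$. So elliptical modus ponens lets us pass from $P\land t$ to $P$, and conversely adjunction with $t$ is available via \eqref{L:ToAdj}. Once $\vdash X$ is turned into $\vdash X\land t$, the premise $X\to Y$ gives $\vdash(X\land t)\to Y$ by the prefixing argument above, and \eqref{L:ToModusPonens} then yields $\vdash Y$.

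\emph{Cases $\mathbf{L}=\mathbf{N_{10(c)}}$ and $\mathbf{BN_{10(c)}}$.} The targets have the doubled shape $C\to(C\to D)$ with $C=A\to B$ and $D=A\to(A\to B)$, with $t$-conjuncts added in the $\mathbf{BN}$ case. The key step is to derive a weakening/mingle-type theorem $X\to(X\to(X\to Y))$ from $X\to Y$, using \eqref{L:Exch}, \eqref{L:Trans} and \eqref{L:Ack}, together with \eqref{L:toContrap} where needed. The $\mathbf{BN}$ case then combines this with the $t$-prefixing of the $\mathbf{B}$ case.

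I expect the main obstacle to be exactly this bootstrapping problem: showing that theorems are closed under strong detachment when only \eqref{L:ToModusPonens} is primitive. In the $\mathbf{N_{10}}$ family this is harder still, since one must manufacture the repeated antecedent without a contraction axiom. If the direct route fails, I would try an induction on derivations using \eqref{L:ToPosParadox} and \eqref{L:ToSelfDist}, which make $\To_{\mathbf{L}}$ behave intuitionistically.
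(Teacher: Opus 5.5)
Your case $\mathbf{L}\in\mathbf{L_{1(c)}}$ is fine. The $\mathbf{B_{10(c)}}$ case, however, rests on a circular bootstrap. The ``prefixing argument'' that is supposed to turn $\vdash X\to Y$ into $\vdash (X\land t)\to Y$ consists of detachments along $\to$: of the \eqref{L:Trans} instance into \eqref{L:Exch}, then of $(A\land t)\to A$, then of the prefixing premise. So it already uses the closure under strong detachment you are trying to prove. With \eqref{L:ToModusPonens} as the only rule, a theorem $X\to Y$ can be exploited only once you have $(X\to Y)\To_{\mathbf{L}}(X\To_{\mathbf{L}}Y)$, which is the lemma itself. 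There are also two smaller slips. First, $\vdash t$ does not come from \eqref{L:Ack}, whose conjuncts you cannot yet separate; it comes from the \eqref{L:andElim} instance $(P\land t)\to t$, which is literally $P\To_{\mathbf{L}}t$. Second, $(P\land t)\to P$ is $P\To_{\mathbf{L}}P$, so its minor premise is $P$, and it does not take you from $P\land t$ to $P$.

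Moreover, no cleverer bootstrap exists: with elliptical modus ponens as the sole rule, the instance for $\mathbf{B_{10(c)}}$ is not derivable. Here $X\To_{\mathbf{L}}Y$ is $(X\land t)\to Y$. Define a matrix as follows.
\begin{itemize}
\item Values are pairs $(s,\tau)$ with $s\in\{-1,0,1\}$ and $\tau\in\{0,1\}$; a value is designated iff $s\geq 0$.
\item $\neg(s,\tau)=(-s,\tau)$, and $t=(0,1)$.
\item $\land$ and $\lor$ act as $\min$ and $\max$ on $s$, and as $\max$ and $\min$ on tags.
\item $(s_1,\tau_1)\to(s_2,\tau_2)=(u,0)$, where $u=s_1\Rightarrow s_2$ if $\tau_1=1$ and $u=\max(s_1\Rightarrow s_2,0)$ if $\tau_1=0$. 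Here $\Rightarrow$ is the Sugihara implication: $\max(-s_1,s_2)$ if $s_1\leq s_2$, and $\min(-s_1,s_2)$ otherwise.
\end{itemize}
Then $X\To_{\mathbf{L}}Y$ has $s$-value $1$ if $X$ has negative $s$-value, and the $s$-value of $Y$ otherwise. Hence \eqref{L:ToModusPonens} preserves designation, and designation behaves classically for $\To_{\mathbf{L}}$, which handles \eqref{L:ToAdj}--\eqref{L:ToSelfDist} and \eqref{L:Conn}. Axioms whose main antecedent is an implication or a conjunction of implications (e.g.\ \eqref{L:Trans}, \eqref{L:Exch}, \eqref{L:toContrap}) are designated automatically. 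The rest check directly; raising $-1$ only to $0$, not to $1$, is what keeps \eqref{L:to0CounterEg2} valid. But for $p\mapsto(1,0)$ and $q\mapsto(-1,0)$ we get $p\to q=(0,0)$, and $((p\to q)\land t)\to((p\land t)\to q)$ has $s$-value $-1$.

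So what your argument really needs is a primitive rule of detachment for $\to$, as in standard relevant Hilbert systems. Given that rule, your $\mathbf{B}$-case derivation via \eqref{L:Trans}, \eqref{L:Exch} and \eqref{L:andElim} is correct as it stands. The paper gives no derivation of its own here, only a citation of Dunn, so it does not settle this point either.

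Finally, your $\mathbf{N_{10(c)}}$ and $\mathbf{BN_{10(c)}}$ cases are only announced. Manufacturing the duplicated antecedent $X\to(X\to\cdots)$ in a base with neither contraction nor mingle is the entire difficulty there, and you give no derivation. Note also that \eqref{L:Ack} involves $t$, which the text excludes from the language of $\mathbf{N_{10(c)}}$.
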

\begin{proof}
    \cite[p.28]{Dunn2000}.
\end{proof}

\begin{lemma}[Local deduction theorem]\label{thm:DT}
For all $\mathbf{L}$, $\Gamma, A \vdash_{\mathbf{L}} B$ iff $\Gamma \vdash_{\mathbf{L}} A \To_{\mathbf{L}} B$.
\end{lemma}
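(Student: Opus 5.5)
The two directions are handled separately. The right-to-left direction is immediate from Fact \ref{fact:vdashL}. If $\Gamma \vdash_{\mathbf{L}} A \To_{\mathbf{L}} B$, then by monotonicity (item 2) $\Gamma, A \vdash_{\mathbf{L}} A \To_{\mathbf{L}} B$. Since also $\Gamma, A \vdash_{\mathbf{L}} A$ (item 1), elliptical modus ponens (item 3) gives $\Gamma, A \vdash_{\mathbf{L}} B$.

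For left-to-right I would use the standard induction on the length of an $\mathbf{L}$-derivation $\langle D_1, \ldots, D_n\rangle$ of $B$ from $\Gamma \cup \{A\}$. The claim proved is that $\Gamma \vdash_{\mathbf{L}} A \To_{\mathbf{L}} D_i$ for every $i \leq n$. This needs only three intuitionistic facts about $\To_{\mathbf{L}}$, all supplied by $\mathbf{RWC}_{\To_{\mathbf{L}}}$ together with rule \eqref{L:ToModusPonens}:
\begin{itemize}
\item the K-axiom \eqref{L:ToPosParadox}, $C \To_{\mathbf{L}} (A \To_{\mathbf{L}} C)$;
\item the S-axiom \eqref{L:ToSelfDist};
\item the identity $\vdash_{\mathbf{L}} A \To_{\mathbf{L}} A$.
\end{itemize}

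The cases run as follows.
\begin{itemize}
\item If $D_i \in \mathbf{L}$ or $D_i \in \Gamma$, then $\Gamma \vdash D_i$. An instance of \eqref{L:ToPosParadox} and one application of \eqref{L:ToModusPonens} give $\Gamma \vdash A \To_{\mathbf{L}} D_i$.
\item If $D_i = A$, we need $\vdash A \To_{\mathbf{L}} A$. This is derived exactly as in combinatory logic, $I = SKK$. Instantiate \eqref{L:ToSelfDist} with $B := (A \To_{\mathbf{L}} A)$ and $C := A$, and detach twice using two instances of \eqref{L:ToPosParadox}. Alternatively, in the cases $\mathbf{L_{1(c)}}$ it is just \eqref{L:id}.
\item If $D_i$ comes by rule \eqref{L:ToModusPonens} from $D_j$ and $D_k = D_j \To_{\mathbf{L}} D_i$, the induction hypothesis gives $\Gamma \vdash A \To_{\mathbf{L}} D_j$ and $\Gamma \vdash A \To_{\mathbf{L}} (D_j \To_{\mathbf{L}} D_i)$. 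The instance of \eqref{L:ToSelfDist} and two applications of \eqref{L:ToModusPonens} then yield $\Gamma \vdash A \To_{\mathbf{L}} D_i$.
\end{itemize}

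The main point requiring care is that $\To_{\mathbf{L}}$ is a defined formula template and not a primitive connective. I have to check that every appeal above uses \eqref{L:ToPosParadox}, \eqref{L:ToSelfDist} and \eqref{L:ToModusPonens} literally as schemata in which the elliptical implication has already been unfolded uniformly. Nested occurrences must unfold in the right way; for instance $A \To_{\mathbf{L}} (B \To_{\mathbf{L}} C)$ for $\mathbf{N_{10(c)}}$ is $A \to (A \to (B \to (B \to C)))$. Since the axioms are stated directly as templates in $\To_{\mathbf{L}}$, each needed instance is literally an axiom instance, so no properties of strong $\to$ are required. Lemma \ref{lem:StrongToInt} is therefore unnecessary here, and the proof is uniform in $\mathbf{L}$.
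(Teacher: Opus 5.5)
Your proof is correct, and it is the standard Hilbert-style deduction-theorem argument: right-to-left by elliptical modus ponens, left-to-right by induction on the $\mathbf{L}$-derivation using the $\To_{\mathbf{L}}$-forms of K and S, i.e.\ \eqref{L:ToPosParadox} and \eqref{L:ToSelfDist}, with \eqref{L:ToModusPonens}, and $A \To_{\mathbf{L}} A$ obtained as $SKK$. The paper gives no argument of its own here, only a citation of Dunn's Theorem~19, and yours is the standard argument that the $\To_{\mathbf{L}}$-formulated $\mathbf{IPC}^+$ axioms are designed to support. You are also right that every step is literally an instance of a $\To_{\mathbf{L}}$-template, so Lemma~\ref{lem:StrongToInt} is not needed.
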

\begin{proof}
    \cite[Theorem 19]{Dunn2000}.
\end{proof}

Before moving to the semantics, we note a first point of departure from \cite{Dunn2000} in the axiomatisation of generalised Nelson logics. Definition \ref{def:PropAxiom} specifies that, whenever $\mathbf{L} \in \mathbf{L_{1(0)c}}$, the formula $(A \To_{\mathbf{L}} B) \lor (B \To_{\mathbf{L}} A)$ is an axiom of $\mathbf{L}$. This is in contrast with \cite{Dunn2000}, where Dunn adds the chain axiom in the form $(A \to B) \lor (B \to A)$, which is easily seen to be stronger by Lemma \ref{lem:StrongToInt}. The stronger axiom is valid in all generalised Nelson logics except $\mathbf{BN_{10c}}$ (cf.~Proposition \ref{prop:DunnFix}).\footnote{See \cite[130--132]{Bimbó2024} as well.} Therefore, we emend \cite{Dunn2000} by using uniformly the weaker chain axiom $(A \To_{\mathbf{L}} B) \lor (B \To_{\mathbf{L}} A)$ whenever $\mathbf{L} \in \mathbf{L_{1(0)c}}$.

\subsection{Kripke models}\label{sec:PropModels}

Models for generalised Nelson logics are closely related to semantic structures used in a number of non-classical settings, including first and foremost  Nelson's logic of constructible falsity \cite{Nelson1949,AlmukdadNelson1984}, but also intuitionistic logic \cite{Kripke1963-KRISAO-2} and the logic of first-degree entailment $\mathbf{FDE}$ \cite{Dunn1976}. 

In particular, generalised Nelson models are akin to Kripke models for intuitionistic logic in that they are based on \emph{partially ordered} structures, consisting of information states, ordered by the amount of true and false information they contain. Additionally, the truth of elliptical implication is equated with preservation of truth from the antecedent to the consequent along the partial order. Next, generalised Nelson models are akin to \emph{bilateral models} for $\mathbf{FDE}$ in that they employ two independent satisfaction relations, one for verification and one for falsification, and such that negation flips truth and falsity. Such bilateral models allow us to naturally model the presence of \emph{gluts}, i.e.~a formula is both verified and falsified; and \emph{gaps}, i.e.~a formula is neither verified nor falsified. Finally, generalised Nelson models are akin to Nelson models in that the falsity of strong implication at a state is related to the existence of a local counterexample, i.e.~the state verifying the antecedent and falsifying the consequent.

Classes of generalised Nelson models for each $\mathbf{L}$ differ from one another in that they (i) impose different restrictions on the valuation $V$, (ii) impose different restrictions on the partial order, or (iii) adopt different verification and falsification conditions for strong implication.  Restrictions on $V$ have the effect of allowing gaps and/or gluts; moreover, they yield contraposition by sanctioning a sufficient condition for a state to falsify the Ackermann constant $t$ (for $\mathbf{L} \in \{\mathbf{B_{10(c)}},\mathbf{BN_{10(c)}}\}$ over $\mathcal{L}^t$). Finally, a restriction on $\leqslant$ is considered, which makes the chain axiom \eqref{L:Conn} valid. These semantic variations are summarised in Table \ref{tab:DunnModels}. 

\begin{definition}[Basic models]\label{def:BasicMod}
    A \emph{basic generalised Nelson model} is a structure \(\M = \langle K, \leqslant, V\rangle\)  where: 
    \begin{itemize}
    \item $\langle K, \leqslant\rangle$ is a partially ordered set;
    \item $V: \textit{Prop} \times K \to \mathcal{P}(\{1,0\})$ such that
\begin{gather}
    x \leqslant y \textit{ only if } V(p,x) \subseteq V(p,y) \label{eq:VHeredity}
\end{gather}
    \end{itemize}
    The class $\mathcal{N}$ consists of all basic generalised Nelson models $\mathfrak{M}$.
\end{definition}

On the basis of Definition \ref{def:BasicMod}, models $\mathfrak{M}_{\mathbf{L}}$ for each generalised Nelson logic $\mathbf{L}$ are specified as follows. By an abuse of notation, we write $(F)$ in the signature of $\mathfrak{M}_{\mathbf{L}}$ to indicate that $F$ may or may not be present, depending on how $\mathbf{L}$ is specified.

\begin{definition}[Models for $\mathbf{L}$]\label{def:GenNelsonModel}    
    \noindent A generalised Nelson \emph{$\mathbf{L}$-model}, or simply $\mathbf{L}$-model, is a structure $\M_{\mathbf{L}} = \langle K, \leqslant,(F), V\rangle$ (for $\mathbf{L}$ from Table \ref{tab:DunnLogics}) where:
    
\begin{itemize}
    \item $\langle K, \leqslant, V\rangle$ is a basic generalised Nelson model;
    \item $F \subseteq K$ (included in $\M_{\mathbf{L}}$ only where $\mathbf{L} \in \{\mathbf{B_{10(c)}},\mathbf{BN_{10(c)}}\}$) is such that
\begin{gather}
    x \leqslant y \text{ and } x \in F \textit{ only if } y \in F \label{eq:FHeredity}
\end{gather} 
    \item $\mathcal{M}_\mathbf{L}$ satisfies a subset of the following conditions, as detailed in Table \ref{tab:DunnModels}:
    \begin{gather}
        V(p,x) \neq \{1,0\} \label{eq:NoGluts} \\ 
        V(p,x) \neq \emptyset \label{eq:NoGaps}\\ 
        \exists p (V(p,x) = \{1,0\}) \Longrightarrow x \in F \label{eq:Ack}\\ 
         x \leqslant y \textit{, } x \leqslant z \Longrightarrow y \leqslant z \textit{ or } z \leqslant y \label{eq:Conn}
    \end{gather}
    \item $V$ is recursively extended to the whole language $\mathcal{L}$ ($\mathcal{L}^t$ for $\mathbf{L} \in \{\mathbf{B_{10(c)}}, \mathbf{BN_{10(c)}}\}$) as follows:
    
\medskip
\begin{tabular}{lcllcl}
   $x \vTrue p $ & iff & $1 \in V(p,x)$ & $x \vFalse p $ & iff & $0 \in V(p,x)$\\
   $x \vTrue t $ & iff & $ x \in K $ & $x \vFalse t $ & iff & $ x \in F$\\
   $x \vTrue \neg A  $ & iff & $ x \vFalse A$  & $x \vFalse \neg A  $ & iff & $ x \vTrue A$ \\
   $x \vTrue A \land B  $ & iff & $ x \vTrue A \text{ and } x \vTrue B$ & $x \vFalse A \land B  $ & iff & $ x \vFalse A \text{ or } x \vFalse B$ \\
   $x \vTrue A \lor B  $ & iff & $ x \vTrue A \text{ or } x \vTrue B $ & $x \vFalse A \lor B  $ & iff & $ x \vFalse A \text{ and } x \vFalse B$\\
   $x \vTrue A \to B$  & iff & $x \vTrue_{\mathbf{L}} A \to B$ & $x \vFalse A \to B$  & iff & $x \vFalse_{\mathbf{L}} A \to B$
\end{tabular}
\medskip

\noindent where $x \vTrue_{\mathbf{L}} A \to B$ and $x \vFalse_{\mathbf{L}} A \to B$ are specified according to Table \ref{tab:DunnModels}.
\end{itemize}
The class $\mathcal{N}_\mathbf{L}$ consists of all $\mathbf{L}$-models $\M_\mathbf{L}$.
\end{definition}

\begin{table}[!t]
    \centering
    \begin{tabular}{l|l|l}
    \textbf{L} & \textbf{Conditions on } $\mathcal{N}_\mathbf{L}$ & \textbf{$\to$ Verification/Falsification} \\[1mm] \hline 
      $\mathbf{N_{ 1}}$ & $\mathcal{N}$ + \eqref{eq:NoGluts}
      & $x \vTrue_{\mathbf{L}}    A \to B$ iff $\forall y \geqslant x(y \vTrue A \text{ only if } y \vTrue B)$  \\ 
      && $x \vFalse_{\mathbf{L}}  A \to B$ iff $x \vTrue A \land \neg B$ \\ \hline
      $\mathbf{N_{ 10}}$ & $\mathcal{N}$ + \eqref{eq:NoGluts}
      & $x \vTrue_{\mathbf{L}}  A \to B$ iff 
      $\begin{aligned}[t]
          &\forall y \geqslant x(y \vTrue A \text{ only if } y \vTrue B)\\[-2mm]
          &\forall y \geqslant x(y \vFalse B \text{ only if } y \vFalse A)
      \end{aligned}$
        \\
      && $x \vFalse_{\mathbf{L}}   A \to B$ iff $x \vTrue A \land \neg B$ \\  \hline
      $\mathbf{B_{ 1}} $ & $\mathcal{N}$ + \eqref{eq:NoGaps}
      & $x \vTrue_{\mathbf{L}}   A \to B$ iff $\forall y \geqslant x(y \vTrue A \text{ only if } y \vTrue B)$  \\
      && $x \vFalse_{\mathbf{L}}  A \to B$ iff $x \vTrue A \land \neg B$ or $x\not\vTrue_{\mathbf{L}} A \to B$ \\ \hline
      $\mathbf{B_{ 10}}$  & $\mathcal{N}$ + \eqref{eq:NoGaps}, \eqref{eq:Ack}
      & $x \vTrue_{\mathbf{L}}  A \to B$ iff 
      $\begin{aligned}[t]           
      &\forall y \geqslant x(y \vTrue A \text{ only if } y \vTrue B) \\[-2mm]   
      &\forall y \geqslant x(y \vFalse  B \text{ only if } y \vFalse  A)
      \end{aligned}$\\
      && $x \vFalse_{\mathbf{L}}  A \to B$ iff $x \vTrue A \land \neg B$ or $x\not\vTrue_{\mathbf{L}} A \to B$ \\ \hline
      $\mathbf{BN_{ 1}}$ & $\mathcal{N}$
      & $x \vTrue_{\mathbf{L}} A \to B$ iff $\forall y \geqslant x(y \vTrue A \text{ only if } y \vTrue B)$  \\
      && $x \vFalse_{\mathbf{L}}  A \to B$ iff $x \vTrue A \land \neg B$ \\ \hline
      $\mathbf{BN_{ 10}}$ & $\mathcal{N}$ + \eqref{eq:Ack}
      & $x \vTrue_{\mathbf{L}} A \to B$ iff $\begin{aligned}[t]           &\forall y \geqslant x(y \vTrue A \text{ only if } y \vTrue B) \\[-2mm]           &\forall y \geqslant x(y \vFalse  B \text{ only if } y \vFalse  A)       \end{aligned}$\\
      && $x \vFalse_{\mathbf{L}} A \to B$ iff $x \vTrue A \land \neg B$\\ \hline
      $ \mathbf{L_{1(0)c}}$ & $\mathcal{N}_{\mathbf{L_{1(0)}}}$ + \eqref{eq:Conn}
      & $x \vTrue_{\mathbf{L_{1(0)c}}}  A \to B$ iff $x \vTrue_{\mathbf{L_{1(0)}}} A \to B$\\
      && $x \vFalse_{\mathbf{L_{1(0)c}}}  A \to B$ iff $x \vFalse_{\mathbf{L_{1(0)}}} A \to B$ 
    \end{tabular}
\caption{Generalised Nelson models.}
    \label{tab:DunnModels}
\end{table}

It is worthwhile to note some points of departure with \cite{Dunn2000} in the presentation of models for generalised Nelson logics.

The first difference with respect to \cite{Dunn2000} concerns Condition \eqref{eq:Conn}. 
Whenever $\mathbf{L} \in \mathbf{L_{1(0)c}}$, we assume in $\M_{\mathbf{L}}$ that the partial order is a tree-like partial order, i.e.~the set of states greater than a given state is linearly ordered. In contrast, in \cite{Dunn2000} Dunn assumes the stronger condition that all states are connected, or linearly ordered, i.e.~$x \leqslant y$ or $y \leqslant x$ for all $x,y \in K$. In a propositional setting, either condition is sufficient for the purpose of soundness and completeness. However, modal extensions require one to use the weaker condition. Hence, we choose to work with Condition \eqref{eq:Conn} instead of linearity from the start.\footnote{A related point concerns the specific class of $\mathbf{B_{10c}}$-models, i.e.~models for $\mathbf{RM}$. The original models for $\mathbf{RM}$ of \cite{Dunn1976a} work with a linear order, where it is assumed without loss of generality that there is a base state $b$ such that all states are greater than $b$. In view of modal extensions, it is much easier to work without base states.} In order to keep our notation consistent with Dunn's, we keep using the letter $\mathbf{c}$ in the subscript, but the semantic condition it denotes is weaker than Dunn's `connectedness' (linearity). 

A second difference with \cite{Dunn2000} concerns the set $F$ of inconsistent states.\footnote{The name $F$ is reminiscent of Tedder's set of \emph{funky} states in the context of $\mathbf{RM}$ \cite{Tedder2022-TEDANO}.} As noted in Section \ref{sec:PropHilbert}, whenever $\mathbf{L} \in \{\mathbf{B_{10(c)}},\mathbf{BN_{10(c)}}\}$, the Ackermann constant $t$ is included in the language. For reasons explained below, this  is matched at the semantic level by the inclusion in the models of a distinguished set $F \subseteq K$ of \emph{inconsistent states}. In the present setting, it is explicitly assumed that a sufficient condition for a state to be an inconsistent state is that it admits a glut; and then in the canonical model it will be assumed that a state is inconsistent iff it falsifies $t$. Conversely, in \cite{Dunn2000} Dunn does not include the set $F$ in the models but builds the previous sufficiency condition directly in the semantics of $t$. Although the two strategies are again equivalent at the propositional level, only the first is adequate for the sake of modal extensions of the language. Hence, we chose again to adopt the more general strategy from the start.

To conclude the presentation of $\mathbf{L}$-models, we already observed that variability also concerns the definition of verification $\vTrue$ and falsification $\vFalse$ for strong implication. These variations again link generalised Nelson logics with a number of semantic frameworks. In particular, whenever $\mathbf{L} \in \mathbf{L_{1(c)}}$ the verification clause for $\to$ is the intuitionistic one, i.e.~preservation of truth from the antecedent to the consequent in all greater states. In addition to this, whenever $\mathbf{L} \in \mathbf{L_{10(c)}}$ it is required also that falsity is preserved from the consequent to the antecedent in all greater states; this addition enforces contraposition.\footnote{Indeed, although Nelson's logics $\mathbf{N_1} = \mathbf{N3}$ and $\mathbf{BN_1} = \mathbf{N4}$ are equivalent up to mutual embedding to $\mathbf{F^{**}} = \mathbf{N_{10}}$ and $\mathbf{BN} = \mathbf{BN_{10}}$ \cite{SpinksVeroff2010,Spinks2018}, respectively, the former two lack contraposition, while the latter two admit contraposition.} Moreover, whenever $\mathbf{L} \in \{\mathbf{N_{1(0c)}},\mathbf{BN_{1(0c)}}\}$, the falsification clause for $\to$ is the constructible counterexample one from Nelson's logic, i.e.~verification of the antecedent and falsification of the consequent at the given state. In addition to this, whenever $\mathbf{L} = \mathbf{B_{1(0c)}}$, an alternative to a constructible counterexample is admitted, i.e.~the failure of verification of $\to$. This so-called \emph{escape disjunct}, which is not in the spirit of constructivity, but is necessary if we want to forbid gaps, is characteristic of the falsification clause for $\mathbf{RM}$.

Validity and (local) consequence in each class of generalised Nelson models $\mathbf{L}$ is defined as usual. 

\begin{definition}
Let $\mathcal{N}_\mathbf{L}$ be a class of generalised Nelson models. A formula $A$ is a  \emph{consequence} of a set of formulas $\Gamma$ over $\mathcal{N}_\mathbf{L}$ (notation $\Gamma \vDash_{\mathcal{N}_\mathbf{L}} A$) iff, for all models in $\M \in \mathcal{N}_\mathbf{L}$ and all $x \in K$ of $\M$, we have
\[ x \vTrue B \text{ for all } B \in \Gamma \implies x \vTrue A.\]
In particular,  $A$ is \emph{valid in $\mathcal{N}_\mathbf{L}$} ($\vDash_{\mathcal{N}_\mathbf{L}} A)$ iff $\emptyset \vDash_{\mathcal{N}_\mathbf{L}} A$ and $A$ is \emph{valid in $\M$} iff $A$ is valid in $\{ \M \}$. Finally, we write $\Gamma \vDash_\mathfrak{M} A$ instead of $\Gamma \vDash_{\{\mathfrak{M}\}} A$.
\end{definition} 

Having introduced the main semantic ingredients, we turn now to the identification of three crucial properties of generalised Nelson models. These properties, expressed in Lemmas \ref{lem:Valuation}--\ref{lem:FSufficiency}, are important insofar as they streamline the proof of soundness. Each of them generalises to arbitrary formulas a property assumed for atomic formulas on specific classes of $\mathbf{L}$-models.  Lemma \ref{lem:Valuation} states that $\mathbf{N_{1(0c)}}$-models do not admit gluts and $\mathbf{B_{1(0c)}}$-models do not admit gaps, generalising Conditions \eqref{eq:NoGluts} and \eqref{eq:NoGaps}. Lemma \ref{lem:Heredity} states that verification and falsification are preserved along $\leqslant$ in all $\mathbf{L}$-models, generalising Condition \eqref{eq:VHeredity}. Finally, Lemma \ref{lem:FSufficiency} states that in all $\mathbf{L_{10(c)}}$-models (thus, containing the set of inconsistent states $F$) a sufficient condition for a state to be inconsistent is that it both verifies and falsifies some arbitrary formula, generalising Condition \eqref{eq:Ack}.

\begin{lemma}\label{lem:Valuation}
    For all $\mathbf{L}$-models $\M_\mathbf{L}$:
    \begin{enumerate}
        \item If $\mathbf{L} = \mathbf{N_{1(0c)}}$ then $x \vTrue A$ only if  $x \not\vFalse A$;
        \item If $\mathbf{L} = \mathbf{B_{1(0c)}}$, then $x \not\vTrue A$ only if  $x \vFalse A$.
    \end{enumerate}
\end{lemma}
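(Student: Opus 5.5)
The proof is by induction on the complexity of $A$, proving each item separately for the relevant class of models. Item 1 is the property ``no gluts'': $x \vTrue A$ and $x \vFalse A$ never hold together. Item 2 is the property ``no gaps'': at least one of $x \vTrue A$, $x \vFalse A$ holds. Both properties are symmetric in $\vTrue$ and $\vFalse$. That symmetry is what makes the negation step work, so I will read the induction hypothesis in this symmetric form.

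\emph{Item 1.} Here $\mathbf{L} \in \{\mathbf{N_1},\mathbf{N_{10}},\mathbf{N_{1c}},\mathbf{N_{10c}}\}$. The language is $\mathcal{L}$, so there is no case for $t$.
\begin{itemize}
\item For atoms, $x \vTrue p$ and $x \vFalse p$ would give $V(p,x)=\{1,0\}$, which condition \eqref{eq:NoGluts} excludes.
\item For $\neg A$, the two clauses swap, so $x \vTrue \neg A$ and $x \vFalse \neg A$ amount to $x \vFalse A$ and $x \vTrue A$. The induction hypothesis for $A$ excludes this.
\item For $A \land B$, suppose $x \vTrue A \land B$. Then $x \vTrue A$ and $x \vTrue B$. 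By the induction hypothesis $x \not\vFalse A$ and $x \not\vFalse B$, hence $x \not\vFalse A \land B$.
\item The case $A \lor B$ is dual to the conjunction case.
\item For $A \to B$, in every $\mathbf{N}$-logic the falsification clause is $x \vTrue A \land \neg B$, i.e.\ $x \vTrue A$ and $x \vFalse B$. Suppose also $x \vTrue A \to B$. Every variant of the verification clause contains $\forall y \geqslant x(y \vTrue A \text{ only if } y \vTrue B)$. Taking $y = x$ (reflexivity of $\leqslant$) gives $x \vTrue B$. Together with $x \vFalse B$, this contradicts the induction hypothesis for $B$.
\end{itemize}

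\emph{Item 2.} Here $\mathbf{L} \in \{\mathbf{B_1},\mathbf{B_{10}},\mathbf{B_{1c}},\mathbf{B_{10c}}\}$, and the language may contain $t$.
\begin{itemize}
\item For atoms, if $1 \notin V(p,x)$ then $V(p,x) \neq \emptyset$ (condition \eqref{eq:NoGaps}) forces $0 \in V(p,x)$.
\item The case $t$ is vacuous, because $x \vTrue t$ holds at every $x \in K$.
\item Negation is handled by the same swapping argument as in item 1.
\item For $A \land B$, suppose $x \not\vTrue A \land B$. Then $x \not\vTrue A$ or $x \not\vTrue B$. By the induction hypothesis $x \vFalse A$ or $x \vFalse B$, which is exactly $x \vFalse A \land B$.
\item The case $A \lor B$ is dual to the conjunction case.
\item For $A \to B$, the falsification clause for $\mathbf{B_{1(0c)}}$ contains the escape disjunct $x \not\vTrue_{\mathbf{L}} A \to B$. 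So $x \not\vTrue A \to B$ immediately yields $x \vFalse A \to B$, with no appeal to the induction hypothesis.
\end{itemize}

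I do not expect a genuine obstacle. The only points needing care are the following.
\begin{itemize}
\item The $\to$ case of item 1 uses reflexivity of $\leqslant$ to instantiate the universally quantified verification clause at $x$ itself.
\item The negation step uses the hypothesis in its symmetric (contrapositive) form, so the induction should be stated for both directions at once.
\item The $t$ case arises only in item 2, where it is trivial; the logics in item 1 do not contain $t$.
\end{itemize}
Since the $\to$ and $t$ clauses may differ between the $\mathbf{1}$, $\mathbf{10}$ and $\mathbf{c}$ variants, I will check them against Table~\ref{tab:DunnModels} case by case. In every variant, the verification clause contains the truth-preservation conjunct and the $\mathbf{B}$ falsification clause contains the escape disjunct, which is all the argument uses.
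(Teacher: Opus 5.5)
Your proposal is correct and follows the paper's own proof. Both argue by induction on $A$, take the atomic case from conditions \eqref{eq:NoGluts} and \eqref{eq:NoGaps}, settle the $\to$ case of item 1 by instantiating the truth-preservation clause at $x$ itself via reflexivity, and settle the $\to$ case of item 2 directly from the escape disjunct. Your explicit treatment of the negation, conjunction, disjunction and $t$ cases, which the paper calls routine, is accurate.
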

\begin{proof}
    Both results are established by induction on the complexity of $A$, where the base case follows from \eqref{L:NoGluts}, respectively \eqref{L:NoGaps}. The only non-trivial induction step case is $A= B\to C$, where we reason as follows. 
    
    \noindent \emph{1.} Assume contraposively that $x \vFalse B \to C$, which implies $x \vTrue B$ and $x \vFalse C$. By induction hypothesis we infer that $x \not \vTrue C$, which implies by $x \leq x$ and  $x \vTrue B$ that $x \not\vTrue B \to C$.
    
    \noindent \emph{2.} It holds by definition of $x \vFalse_{\mathbf{B_1(0c)}} B \to C$.
\end{proof}

\begin{lemma}\label{lem:Heredity}
    For all $\mathbf{L}$-models $\M_\mathbf{L}$ such that $\mathbf{L} \neq \mathbf{B_{1(0)}}$:
    \begin{enumerate}
        \item $x \leqslant y, x \vTrue A \textit{ only if } y \vTrue A $;
        \item $x \leqslant y, x \vFalse A \textit{ only if } y \vFalse A$.
    \end{enumerate}
\end{lemma}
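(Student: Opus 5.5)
The plan is a simultaneous induction on the complexity of $A$, proving both clauses together. The falsification clause for $\neg$ appeals to the verification clause, and conversely, so the two cannot be separated.

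\emph{Base and easy cases.} For $A = p$ both clauses follow immediately from \eqref{eq:VHeredity}. For $A = t$, verification holds trivially since $y \in K$. Falsification is exactly \eqref{eq:FHeredity}, in those logics where $t$ and $F$ are present. The cases $\neg B$, $B \land C$ and $B \lor C$ are routine. Negation swaps the two clauses of the induction hypothesis, and the conjunction and disjunction clauses are monotone combinations of the clauses for the components.

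\emph{Implication, verification.} For $A = B \to C$ the verification clause is the same in every logic. It is a universal condition (in one or two conjuncts) over all $z \geqslant x$. If $x \leqslant y$, then every $z \geqslant y$ satisfies $z \geqslant x$ by transitivity, so verification transfers to $y$ without using the induction hypothesis.

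\emph{Implication, falsification: the counterexample disjunct.} For the logics in $\{\mathbf{N_{1(0c)}}, \mathbf{BN_{1(0c)}}\}$, falsification means $x \vTrue B \land \neg B'$, writing $B'$ for the consequent $C$. That is, $x \vTrue B$ and $x \vFalse C$, and both facts transfer to $y$ by the induction hypothesis.

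\emph{Implication, falsification: the escape disjunct (main obstacle).} This is the remaining case, $\mathbf{L} = \mathbf{B_{1(0)c}}$, which is exactly why $\mathbf{B_{1(0)}}$ is excluded from the statement. Suppose $x \not\vTrue B \to C$ and $x \leqslant y$. I want to show $y \vFalse B \to C$. Failure of verification gives some $z \geqslant x$ of one of two kinds:
\begin{enumerate}
\item $z \vTrue B$ and $z \not\vTrue C$;
\item in the $\mathbf{10}$ case, $z \vFalse C$ and $z \not\vFalse B$.
\end{enumerate}
By \eqref{eq:Conn}, since $x \leqslant y$ and $x \leqslant z$, we have $y \leqslant z$ or $z \leqslant y$.

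If $y \leqslant z$, then $z$ witnesses $y \not\vTrue B \to C$, so $y$ satisfies the escape disjunct.

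If $z \leqslant y$, I use Lemma \ref{lem:Valuation}(2), the absence of gaps. In the first kind, $z \not\vTrue C$ gives $z \vFalse C$. In the second kind, $z \not\vFalse B$ gives $z \vTrue B$. Either way $z \vTrue B$ and $z \vFalse C$. By the induction hypothesis these transfer to $y$, so $y \vTrue B \land \neg C$, and $y$ falsifies $B \to C$ via the first disjunct.

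I expect this to be the only delicate step. It is also what explains the choice of the tree condition \eqref{eq:Conn} over linearity: we only ever need comparability of two successors of $x$.
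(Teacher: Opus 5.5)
Your proof is correct and follows the paper's argument essentially step for step. Both proofs use a simultaneous induction and transfer the verification clause of $\to$ by transitivity of $\leqslant$. Both move the counterexample disjunct to $y$ with the induction hypothesis. For the escape disjunct in $\mathbf{B_{1(0)c}}$, both compare $y$ with the witness $z$ via \eqref{eq:Conn} and use Lemma \ref{lem:Valuation}(2) when $z \leqslant y$. The only addition on your side is that you spell out the $t$ case via \eqref{eq:FHeredity}, which the paper leaves implicit.
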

\begin{proof}
    By double induction on the complexity of $A$. The base case holds by Condition \eqref{eq:VHeredity}. The only non-trivial induction step case is $A= B \to C$, where we reason as follows. 

    \noindent \emph{1.} Assume $x \vTrue B \to C$ and $x \leqslant y$. If $\mathbf{L} \in \mathbf{L_{10(c)}}$, it suffices to show that (i) $z \vTrue B$ only if $z \vTrue C$ and (ii) $z \vFalse C$ only if $z \vFalse B$, for some arbitrary $z \geqslant y$. (i) By $x \vTrue B \to C$, $x \leqslant y \leqslant z$ and $z \vTrue B$ we conclude that $z \vTrue C$. (ii) By $x \vTrue B \to C$, $x \leqslant y \leqslant z$ and $z \vFalse C$ we conclude that $z \vFalse B$. If $\mathbf{L} \in \mathbf{L_{1(c)}}$, then we need to check only the above case (i). 

    \noindent \emph{2.} Assume $x \vFalse B \to C$ and $x \leqslant y$. If $\mathbf{L} \in \{\mathbf{N_{1(0c)}}, \mathbf{BN_{1(0c)}}\}$, then $x \vTrue B \land \neg C$, whence $y \vFalse B \land \neg C$ holds by straightforward application of the induction hypothesis. We conclude that $y \vFalse B \to C$. If $\mathbf{L} = \mathbf{B_{10c}}$, then either $x \vTrue B \land \neg C$, and we reason as above, or $x \not\vTrue B \to C$. The latter implies that either (i) there is $z \geqslant x$ such that $z \vTrue B$ and $z \not\vTrue C$ or (ii) there is $z \geqslant x$ such that $z \vFalse C$ and $z \not\vFalse B$. In both cases, by $x \leqslant y$ and $x \leqslant z$ we infer by \eqref{eq:Conn} that either $y \leqslant z$ or $z \leqslant y$. If $y \leqslant z$ and either (i) or (ii), we conclude that $y \not\vTrue B \to C$, whence $y \vFalse B \to C$. If $z \leqslant y$ and (i), then $z \not\vTrue C$ implies by Lemma \ref{lem:Valuation} (since $\mathbf{L} = \mathbf{B_{10c}}$) that $z \vFalse C$, which implies by $z \vTrue B$ and the induction hypothesis that $y \vTrue B \land \neg C$. We conclude that $y \vFalse B \to C$. If $z \leqslant y$ and (ii), we reason in a similar way. If $\mathbf{L} = \mathbf{B_{1c}}$, then we reason similarly to the case $\mathbf{L} = \mathbf{B_{10c}}$, except that we now need to check only case (i). 
\end{proof}

By reviewing the above proof, one may note the essential use of Condition \eqref{eq:Conn} in the case $\mathbf{L} = \mathbf{B_{1(0)c}}$. Indeed, Lemma \ref{lem:Heredity} fails for $\mathbf{L} = \mathbf{B_{1(0)}}$: models in  $\mathbf{B_{1(0)}}$ do not admit gaps, and so they need to feature what we called the escape disjunct in the falsification of strong implication. However, the escape disjunct seems to require to use Condition \eqref{eq:Conn} to establish that the falsity of strong implication is preserved along $\leqslant$. This is the reason why $\mathbf{B_1}$ and $\mathbf{B_{10}}$ are not officially included by Dunn in \cite{Dunn2000} in the family of generalised Nelson logics.

\begin{lemma}\label{lem:FSufficiency}
    For all $\mathbf{L}$-models $\M_\mathbf{L}$ such that $\mathbf{L} \in  \mathbf{L_{10(c)}}$, if $x \vTrue A \land \neg A$ for some $A \in \mathcal{L}_{\mathbf{L}}$, then $x \in F$.
\end{lemma}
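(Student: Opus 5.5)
Since $x \vTrue A \land \neg A$ means exactly $x \vTrue A$ and $x \vFalse A$, the statement says that any state admitting a glut on some formula lies in $F$. I will prove by induction on the complexity of $A$ that $x \vTrue A$ and $x \vFalse A$ together imply $x \in F$, for all $x$ in an $\mathbf{L}$-model with $\mathbf{L} \in \mathbf{L_{10(c)}}$.

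\emph{Base cases.} For $A = p$, we have $V(p,x) = \{1,0\}$, and Condition \eqref{eq:Ack} gives $x \in F$ directly. Condition \eqref{eq:Ack} is imposed in every model class in $\mathbf{L_{10(c)}}$ that contains $F$ (cf.\ Table~\ref{tab:DunnModels}). For the $\mathbf{N_{10(c)}}$ case there is no $F$ and no $t$, and gluts cannot occur by Lemma~\ref{lem:Valuation}, so the statement is vacuous there. For $A = t$, the falsification clause $x \vFalse t$ already means $x \in F$.

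\emph{Induction steps for negation, conjunction and disjunction.} For $\neg B$, a glut on $\neg B$ is by the clauses exactly a glut on $B$, so the induction hypothesis applies. For $B \land C$, we have $x \vTrue B$ and $x \vTrue C$, and moreover $x \vFalse B$ or $x \vFalse C$. In either case we get a glut on $B$ or on $C$, and the induction hypothesis applies. Disjunction is dual: from $x \vTrue B$ or $x \vTrue C$ together with $x \vFalse B$ and $x \vFalse C$ we again obtain a glut on a component.

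\emph{Implication step.} This is the one case needing care. Suppose $x \vTrue B \to C$ and $x \vFalse B \to C$.
\begin{itemize}
\item \emph{Counterexample disjunct.} If the falsity comes from $x \vTrue B$ and $x \vFalse C$, then the verification clause at $y = x$ (using $x \leqslant x$) together with $x \vTrue B$ gives $x \vTrue C$. So $x$ has a glut on $C$, and the induction hypothesis gives $x \in F$.
\item \emph{Escape disjunct.} In the $\mathbf{B_{10(c)}}$ case the falsity may instead come from $x \not\vTrue B \to C$. This contradicts the assumption $x \vTrue B \to C$, so the case cannot occur.
\end{itemize}

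I expect no real obstacle. The only subtle point is that in the implication step the verification clause must be applied reflexively, and this needs only truth preservation, not the contraposition clause.
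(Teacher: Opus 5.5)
Your proof is correct and follows the paper's argument. Both proceed by induction on $A$, take the base case from Condition \eqref{eq:Ack}, and handle the implication step by applying the verification clause reflexively at $x$, which turns the counterexample $x \vTrue B$, $x \vFalse C$ into a glut on $C$. Your separate treatment of the escape disjunct, the constant $t$, and the vacuous $\mathbf{N_{10(c)}}$ case spells out details the paper leaves implicit.
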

\begin{proof}
    By induction on the complexity of $A$. The base case follows from Condition \eqref{eq:Ack}. The only non-trivial induction step case is $A =  B \to C$, where we reason as follows. Assume that $x \vTrue B \to C$ and $x \vFalse B \to C$. Hence, $x \vTrue B$ and $x \vFalse C$. By $x \vTrue B \to C$, $x \leqslant x$ and $x \vTrue B$ we infer that $x \vTrue C$. The latter and $x \vFalse C$ imply by the induction hypothesis that $x \in F$, which concludes the proof.  
\end{proof}

Finally, the following easy fact, proved in \cite{Dunn2000}, turns out useful in the soundness proof.

\begin{lemma}[Elliptical semantics]\label{lem:SemInt}
    For all $\mathbf{L}$-models $\M_{\mathbf{L}}$, $x \models^1 A \To_{\mathbf{L}} B$ iff $\forall y \geqslant x(y \vTrue A \text{ only if } y \vTrue B)$
\end{lemma}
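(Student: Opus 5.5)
The plan is a direct case analysis following the four clauses of Definition \ref{def:Elliptical}. In each case I will unfold the verification clause for $\to$ given in Table \ref{tab:DunnModels}. The left-to-right direction will always come from reflexivity of $\leqslant$. The right-to-left direction needs an argument that the extra, contrapositive conjunct of the verification clause (present only for $\mathbf{L} \in \mathbf{L_{10(c)}}$) is automatically satisfied. Let $(\ast)$ denote the condition $\forall y \geqslant x(y \vTrue A \text{ only if } y \vTrue B)$.

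For $\mathbf{L} \in \mathbf{L_{1(c)}}$ we have $A \To_{\mathbf{L}} B = A \to B$, whose verification clause is literally $(\ast)$, so there is nothing to prove.

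For $\mathbf{L} = \mathbf{B_{10(c)}}$ the formula is $(A \land t) \to B$. Since $t$ is verified everywhere, the first conjunct of its verification clause is exactly $(\ast)$, which gives left-to-right at once. For right-to-left, assume $(\ast)$ and take $y \geqslant x$ with $y \vFalse B$; I must show $y \vFalse A$ or $y \in F$.
\begin{itemize}
\item If $y \vTrue A$, then $(\ast)$ gives $y \vTrue B$, so $y \vTrue B \land \neg B$, and Lemma \ref{lem:FSufficiency} yields $y \in F$, i.e. $y \vFalse t$.
\item If $y \not\vTrue A$, then Lemma \ref{lem:Valuation}(2), the no-gaps property, gives $y \vFalse A$.
\end{itemize}

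For $\mathbf{L} = \mathbf{N_{10(c)}}$ the formula is $A \to (A \to B)$. Left-to-right: given $y \geqslant x$ with $y \vTrue A$, we get $y \vTrue A \to B$, and applying its clause at $y \leqslant y$ gives $y \vTrue B$. Right-to-left, assume $(\ast)$; three things must be checked.
\begin{itemize}
\item For $y \geqslant x$ with $y \vTrue A$, I need $y \vTrue A \to B$. Its first conjunct follows from $(\ast)$ by transitivity of $\leqslant$.
\item Its second conjunct is vacuous. If $z \geqslant y$ and $z \vFalse B$, then $z \vTrue A$ by Lemma \ref{lem:Heredity}, which applies since $\mathbf{L} \neq \mathbf{B_{1(0)}}$. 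Hence $z \vTrue B$ by $(\ast)$, a glut forbidden by Lemma \ref{lem:Valuation}(1).
\item The outer contrapositive conjunct is also vacuous. Here $y \vFalse A \to B$ means $y \vTrue A$ and $y \vFalse B$, and $(\ast)$ again produces a forbidden glut in $B$.
\end{itemize}

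For $\mathbf{L} = \mathbf{BN_{10(c)}}$, with formula $(A \land t) \to ((A \land t) \to B)$, the argument mirrors the $\mathbf{N_{10(c)}}$ case. The difference is that gluts are allowed, so whenever the argument above reached a glut in $B$, at some $z$ (or $y$) where $A$ holds, I instead use Lemma \ref{lem:FSufficiency} to get $z \in F$. This gives $z \vFalse t$ and hence $z \vFalse A \land t$, which is exactly the required falsification of the antecedent. Again Lemma \ref{lem:Heredity} transports $A$ upward.

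I expect the only delicate step to be these contrapositive conjuncts in the $\mathbf{L_{10(c)}}$ cases. This is where the frame-specific facts, namely no gaps for $\mathbf{B}$, no gluts for $\mathbf{N}$, and the set $F$ for the $t$-logics, must be matched correctly to each variant of $\To_{\mathbf{L}}$.
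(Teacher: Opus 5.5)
Your proof is correct. The paper gives no argument of its own here and simply cites \cite[Lemma 30]{Dunn2000}, so your self-contained case analysis is a genuinely different route. It discharges the extra contrapositive conjuncts using only three facts: no gaps for $\mathbf{B}$ (Lemma \ref{lem:Valuation}(2)), no gluts for $\mathbf{N}$ (Lemma \ref{lem:Valuation}(1)), and Lemma \ref{lem:FSufficiency} for the $t$-logics, with Lemma \ref{lem:Heredity} transporting $A$ upward.

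This buys three things over the citation.
\begin{enumerate}
\item Dunn's models have no set $F$ and define the falsity of $t$ directly. The cited lemma therefore does not literally cover the semantics used here. Your argument shows that the $\mathbf{B_{10(c)}}$ and $\mathbf{BN_{10(c)}}$ cases rest exactly on Lemma \ref{lem:FSufficiency}, and hence on Condition \eqref{eq:Ack}.
\item Heredity is used only in the $\mathbf{N_{10(c)}}$ and $\mathbf{BN_{10(c)}}$ cases. Your proof therefore also covers $\mathbf{B_1}$- and $\mathbf{B_{10}}$-models, where Lemma \ref{lem:Heredity} fails. You only need Lemma \ref{lem:Valuation}(2) for every $\mathbf{B}$-logic, which its inductive proof provides despite the loose subscript notation.
\item Your argument transfers verbatim to $\mathbf{L}_\Box$-models via Lemmas \ref{lem:ModalValuation}--\ref{lem:ModalFSufficiency}. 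This matters because the paper applies Lemma \ref{lem:SemInt} to modal formulas in the soundness proof for \eqref{Lm:KDiam}. Your route also makes visible that this modal use depends on Condition \eqref{eq:RF}, since Lemma \ref{lem:ModalFSufficiency} does.
\end{enumerate}
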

\begin{proof}
    \cite[Lemma 30]{Dunn2000}.
\end{proof}

\subsection{Soundness and completeness}\label{sec:PropCompleteness}

In this section, we review Dunn's proof of soundness and completeness of all generalised Nelson logics with respect to the corresponding class of models. We state the result in some detail for two reasons. First, doing so  paves the way for the soundness and completeness result for modal extensions. Second, this allows us to rectify some of Dunn's results.

As usual, soundness is a matter of a routine check that, for all $\mathbf{L}$, the axioms of $\mathbf{L}$ are valid in $\mathcal{N}_\mathbf{L}$ and that elliptical modus ponens preserves validity in $\mathcal{N}_\mathbf{L}$. The proof involves many case distinctions among each $\mathbf{L}$, and it essentially relies on Lemmas \ref{lem:Heredity}--\ref{lem:FSufficiency}: for this reason, $\mathbf{B_{1}}$ and $\mathbf{B_{10}}$ are not sound, since Lemma \ref{lem:Heredity} does not hold.

\begin{theorem}[Soundness]\label{thm:Soundness}
    $\Gamma \vdash_{\mathbf{L}} A$ only if $\Gamma \models_{\mathcal{N}_\mathbf{L}} A$.
\end{theorem}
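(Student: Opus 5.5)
The proof proceeds by induction on the length of an $\mathbf{L}$-derivation $\langle D_1,\ldots,D_n\rangle$ from $\Gamma$. Fix $\M\in\mathcal{N}_\mathbf{L}$ and a state $x$ with $x\vTrue B$ for all $B\in\Gamma$. We show that $x\vTrue D_i$ for every $i$. If $D_i\in\Gamma$ this is immediate. If $D_i$ is obtained by elliptical modus ponens from $D_j$ and $D_k=D_j\To_\mathbf{L} D_i$, then by Lemma \ref{lem:SemInt} the hypothesis $x\vTrue D_j\To_\mathbf{L} D_i$ means that truth of $D_j$ is preserved to truth of $D_i$ at every $y\geqslant x$. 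Taking $y=x$ gives $x\vTrue D_i$. So the whole theorem reduces to showing that every axiom of $\mathbf{L}$ is verified at every state of every $\mathbf{L}$-model.

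For the axioms we use two reductions. First, by Lemma \ref{lem:SemInt}, an axiom of the form $A\To_\mathbf{L}B$ is valid iff truth of $A$ at $y$ implies truth of $B$ at $y$ for all $y$. This lets us check \eqref{L:ToAdj}--\eqref{L:to0CounterEg2}, \eqref{L:NoGluts}, \eqref{L:NoGaps}, \eqref{L:to0CounterEg} and \eqref{L:to0+NonTrue.1} in essentially intuitionistic style, using Lemma \ref{lem:Heredity} for heredity.
\begin{itemize}
\item \eqref{L:NoGluts} follows from Lemma \ref{lem:Valuation}.1 and \eqref{L:NoGaps} from Lemma \ref{lem:Valuation}.2.
\item \eqref{L:to0CounterEg} is the Nelson falsity clause read off directly.
\item \eqref{L:to0+NonTrue.1} holds for $\mathbf{B}$-logics because the escape disjunct is excluded by the truth of $A\to B$.
\item \eqref{L:toPresAntipres} uses both clauses of the truth condition in the $\mathbf{10}$ case.
\item \eqref{L:to0CounterEg2} holds since $x\vTrue A\land\neg B$ yields falsity of $A\to B$ in every variant.
\end{itemize}
Second, for axioms whose main connective is the strong $\to$ (that is, \eqref{L:id}--\eqref{L:DeMorgan}, \eqref{L:Ack} and \eqref{L:toContrap}), we verify the truth condition for $\to$ at every $y\geqslant x$. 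In the $\mathbf{L_{10(c)}}$ cases we must additionally check falsity-preservation backwards, and for nested implications (\eqref{L:Trans}, \eqref{L:Exch}) we also need the falsity clauses of the inner implications. For \eqref{L:Exch} in the $\mathbf{10}$ case, we show that falsity of $B\to(A\to C)$ at $z$ yields falsity of $A\to(B\to C)$ at $z$. Under the Nelson clause this is a rearrangement of $z\vTrue B$, $z\vTrue A$, $z\vFalse C$. For the $\mathbf{B}$-logics with the escape disjunct, we instead argue via Lemma \ref{lem:Valuation}.2, so that non-truth reduces to falsity.

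The main obstacle is the contraposition-related axioms \eqref{L:Ack} and \eqref{L:toContrap} in the $\mathbf{10}$ logics, together with the chain axiom \eqref{L:Conn}.
\begin{itemize}
\item For \eqref{L:Ack}, showing $A\to(t\to A)$ requires that $y\vFalse t\to A$ implies $y\vFalse A$. Under the Nelson clause, falsity of $t\to A$ gives $y\vFalse A$ directly. Under the escape clause for $\mathbf{B_{10c}}$, failure of truth of $t\to A$ must produce a $z\geqslant y$ with either $z\not\vTrue A$ or ($z\vFalse A$ and $z\notin F$). We then use \eqref{eq:Conn}, Lemma \ref{lem:Heredity}, Lemma \ref{lem:Valuation}.2 and Lemma \ref{lem:FSufficiency}, mirroring the case analysis of Lemma \ref{lem:Heredity}.
\item The converse direction $(t\to A)\to A$ uses $y\vTrue t$. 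Its falsity-reversal part requires that $y\vFalse A$ implies $y\vFalse t\to A$. Under the Nelson clause this needs $y\vTrue t$ and $y\vFalse A$, which holds. Under the escape clause it is also fine.
\item For \eqref{L:toContrap}, falsity-preservation of the outer arrow is checked symmetrically using the double clause.
\item For \eqref{L:Conn}, suppose neither disjunct holds at $x$. Lemma \ref{lem:SemInt} then gives $y,z\geqslant x$ witnessing the failures. Condition \eqref{eq:Conn} makes $y,z$ comparable, and Lemma \ref{lem:Heredity} yields a contradiction.
\end{itemize}
I expect the $\mathbf{B_{10c}}$ escape-disjunct cases to be the most delicate bookkeeping.
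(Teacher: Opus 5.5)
Your overall strategy matches the paper's. You argue by induction on derivations, discharge elliptical modus ponens by Lemma \ref{lem:SemInt} at $y=x$, and check validity axiom by axiom using Lemmas \ref{lem:Valuation}, \ref{lem:Heredity} and \ref{lem:FSufficiency}. Your argument for \eqref{L:Conn} is exactly the paper's; the paper details only \eqref{L:Conn} and \eqref{L:Ack} and defers the remaining axioms to Dunn.

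Your \eqref{L:Ack} case, however, is missing a step. You never check the truth half of $A\to(t\to A)$, namely that $y\vTrue A$ yields $y\vTrue t\to A$. Every logic containing \eqref{L:Ack} uses the $\mathbf{10}$ truth clause. So besides $z\vTrue t\Rightarrow z\vTrue A$ for $z\geqslant y$ (heredity), you need that $z\vFalse A$ implies $z\vFalse t$, i.e.\ $z\in F$. This is where Condition \eqref{eq:Ack} enters the soundness proof. By Lemma \ref{lem:Heredity} we have $z\vTrue A$, so $z\vTrue A\land\neg A$, and then $z\in F$ by Lemma \ref{lem:FSufficiency}.

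Conversely, the escape case you do treat needs no appeal to \eqref{eq:Conn}. The witness $z$ already lies above $y$, so $y\vTrue A$ would give $z\vTrue A$ by Lemma \ref{lem:Heredity}. In the first case this contradicts $z\not\vTrue A$; in the second it forces $z\in F$ by Lemma \ref{lem:FSufficiency}, contradicting $z\notin F$. Hence $y\not\vTrue A$, and part 2 of Lemma \ref{lem:Valuation} gives $y\vFalse A$, as in the paper.
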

\begin{proof}
    By induction on the length of $\mathbf{L}$-proofs. We show the details only for the base case concerning \eqref{L:Conn} (assuming $\mathbf{L} \in \mathbf{L_{1(0)c}}$) and \eqref{L:Ack} (assuming $\mathbf{L} \in \{\mathbf{B_{10(c)}},\mathbf{BN_{10(c)}}\}$), which are the only ones differing from Dunn's argument in \cite[Theorem 22]{Dunn2000}.
    
    \noindent \eqref{L:Conn}. Assume by contradiction that there is $x$ such that $x \not\vTrue A \To_{\mathbf{L}} B$ and $x \not\vTrue B \To_{\mathbf{L}} A$. Hence, by Lemma \ref{lem:SemInt} we would infer that (i) there is $y \geqslant x$ such that $y \vTrue A$ and $y \not\vTrue B$ and (ii) there is $z \geqslant x$ such that $z \vTrue B$ and $z \not\vTrue A$.  By Condition \eqref{L:Conn} (since $\mathbf{L} \in \mathbf{L_{1(0)c}}$) we would infer that $y \leqslant z$  or $z \leqslant y$. If $y \leqslant z$, by Lemma \ref{lem:Heredity} we would conclude that $z \vTrue A$ and $z \not\vTrue A$ (contradiction!); and if $z \leqslant y$, by Lemma \ref{lem:Heredity} we would conclude that $z \vTrue B$ and $z \not\vTrue B$ (contradiction!).

    \noindent \eqref{L:Ack}. In one direction, we need to show that for all $x$ (i) $x \vTrue A$ only if $x \vTrue t \to A$; and (ii) $x \vFalse t \to A$ only if $x \vFalse A$. (i) It suffices to show that $x \vFalse A$ only if $x \vFalse t$. By $x \vTrue A$, $x \vFalse A$ and Lemma \ref{lem:FSufficiency} we infer that $x \in F$, by which we conclude that $x \vFalse t$. (ii) If $x \vFalse t \to A$ because $x \vTrue t$ and $x \vFalse A$ we are done. If $x \vFalse t \to A$ because $x \not\vTrue t \to A$, then $\mathbf{L} = \mathbf{B_{10(c)}}$ and either there is $y \geqslant x$ such that $y \vTrue t$ and $y \not\vTrue A$ or there is $y \geqslant x$ such that $y \vFalse A$ and $y \not\vFalse t$. In both cases we conclude that $x \not\vTrue A$ using Lemma \ref{lem:Heredity} (and Lemma \ref{lem:FSufficiency} in the second case), which implies by $\mathbf{L} = \mathbf{B_{10(c)}}$  and Lemma \ref{lem:Valuation} that $x \vFalse A$. The other direction is straightforward. We need to show that for all $x$ (i) $x \vTrue t \to A$ only if $x \vTrue A$; and (ii) $x \vFalse A$ only if $x \vFalse t \to A$. (i) By $x \vTrue t \to A$, $x \leqslant x$ and $x \vTrue t$ we conclude that $x \vTrue A$ for all $\mathbf{L}$. (ii) By $x \vFalse A$ and $x \vTrue t$ we conclude that $x \vTrue t \land \neg A$, which implies $x \vFalse t \to A$ for all $\mathbf{L}$.
\end{proof}


Before moving to completeness, we record exactly why Dunn's version of the chain axiom $(A \to B) \lor (B \to A)$ is invalid in $\mathbf{BN_{10c}}$.

\begin{proposition}\label{prop:DunnFix}
    $\not\models_{\mathcal{N}_{\mathbf{BN_{10c}}}} (A \to B) \lor (B \to A)$.
\end{proposition}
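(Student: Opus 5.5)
The plan is to exhibit an explicit one-state $\mathbf{BN_{10c}}$-model in which neither $p \to q$ nor $q \to p$ is verified at its only state, taking $A = p$ and $B = q$ for distinct propositional variables $p, q$. With a single state, every condition on $\leqslant$ holds trivially, namely heredity \eqref{eq:VHeredity}, \eqref{eq:FHeredity} and \eqref{eq:Conn}. So the only constraints to respect are those in Table \ref{tab:DunnModels} for $\mathbf{BN_{10c}}$: the class $\mathcal{N}$ plus \eqref{eq:Ack} and \eqref{eq:Conn}. No gap or glut restriction applies.

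The construction is $K = \{x\}$ with $x \leqslant x$, $V(p,x) = \{1,0\}$ (a glut), $V(q,x) = \emptyset$ (a gap), and $F = \{x\}$. The choice of $F$ is forced by \eqref{eq:Ack}, since $x$ carries a glut. Values of other variables are arbitrary.

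Next, check the verification clause for strong implication in $\mathbf{L_{10(c)}}$, which has two conjuncts: forward preservation of truth and backward preservation of falsity along $\leqslant$.
\begin{itemize}
\item For $p \to q$: we have $x \vTrue p$ but $x \not\vTrue q$, so the truth-preservation conjunct fails at $y = x$. Hence $x \not\vTrue p \to q$.
\item For $q \to p$: the truth-preservation conjunct holds vacuously, since $q$ is not verified. But $x \vFalse p$ while $x \not\vFalse q$, so the falsity-preservation conjunct fails. Hence $x \not\vTrue q \to p$.
\end{itemize}
By the verification clause for $\lor$, it follows that $x \not\vTrue (p \to q) \lor (q \to p)$, which establishes the proposition.

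The only delicate point, and the main thing to get right, is that both failures are needed simultaneously. This is why the model uses both a glut for $p$ and a gap for $q$. A purely truth-based failure in one direction would leave the other direction vacuously true. It is the contrapositive falsity clause specific to $\mathbf{L_{10(c)}}$, together with the absence of both \eqref{eq:NoGluts} and \eqref{eq:NoGaps} in $\mathbf{BN}$, that lets both disjuncts fail at once.

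As a sanity check against the soundness theorem, the weaker chain axiom \eqref{L:Conn} with $\To_{\mathbf{L}}$ still holds at $x$. By Lemma \ref{lem:SemInt}, only truth preservation matters for elliptical implication, and $q \To_{\mathbf{L}} p$ holds vacuously. This confirms that the countermodel separates exactly Dunn's stronger axiom from \eqref{L:Conn}.
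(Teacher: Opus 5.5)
Your one-state countermodel is correct and uses the same idea as the paper. There is a glut on $p$ and a gap on $q$, so $p \to q$ fails the truth-preservation conjunct and $q \to p$ fails the falsity-preservation conjunct of the $\mathbf{L_{10(c)}}$ clause, with \eqref{eq:Ack} met by putting the glut state in $F$. The paper instead spreads the two witnesses over a chain $x \leqslant y \leqslant z$ with $F = \{z\}$; your collapse onto a single state with $F = \{x\}$ is simpler and equally legitimate, since every order condition is then trivial.
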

\begin{proof}
    Take a $\mathbf{BN_{10c}}$-model $\M_{\mathbf{BN_{10c}}}$ such that $K=\{x,y,z\}$, $x \leqslant y \leqslant z$, $F=\{z\}$ and $0 \in V(p,z)$, $0 \notin V(q,z)$, $1 \in V(p,y)$ and $1 \notin V(q,y)$. Note that $V$ is well-defined only because we allow $V(p,z)=\{1,0\}$ and $V(q,y)=\emptyset$. Then, (i) $y$ is such that $y \geqslant x$, $y \models^1 p$ and $y \not \vTrue q$, whence $y \not\vTrue p \to q$; and (ii) $z$ is such that $z \geqslant x$, $z \models^0 p$ and $z \not \vFalse q$, whence $z \not\vTrue q \to p$. We conclude by (i) and (ii) that $x \not\vTrue (p \to q)$ and $x \not\vTrue (q \to p)$, whence $x \not\vTrue (p \to q) \lor (q \to p)$.
\end{proof}

Completeness is established by a canonical model construction. To this aim, we introduce as usual $\mathbf{L}$-theories, i.e. sets of formulas closed under $\mathbf{L}$-derivability. It follows from the definition of theories that every $\mathbf{L}$-theory contains all theorems of $\mathbf{L}$.

\begin{definition}\label{def:Theories}
An \emph{$\mathbf{L}$-theory} is $\Gamma \subseteq \mathcal{L}_{\mathbf{L}}$ such that $\Gamma \vdash_{\mathbf{L}} A$ only if $A \in \Gamma$ (i.e.\ $\mathrm{Th}_{\mathbf{L}}(\Gamma) = \Gamma$). An $\mathbf{L}$-theory $\Gamma$ is \emph{prime} iff $A \lor B \in \Gamma$ only if $A \in \Gamma$ or $B \in \Gamma$. The only \emph{trivial} $\mathbf{L}$-theory is $\mathcal{L}_{\mathbf{L}}$; an $\mathbf{L}$-theory $\Gamma$ is \emph{non-trivial} iff it is not trivial. 
\end{definition}

\noindent Note that if $\Gamma$ is a prime $\mathbf{L}$-theory, then $A \land B \in \Gamma$ iff $A \in \Gamma$ and $B \in \Gamma$ and $A \lor B \in \Gamma$ iff $A \in \Gamma$ or $B \in \Gamma$. For a non-empty $\Delta$, we write $\Gamma \vdash_{\mathbf{L}} \Delta$ iff $\Gamma \vdash_{\mathbf{L}} A_1 \lor \ldots \lor A_n$ for some non-empty subset $\{ A_1, \ldots, A_n \}$ of $\Delta$.

\begin{lemma}[Lindenbaum's Lemma]\label{lem:Lindenbaum}
If $\Gamma$ is an $\mathbf{L}$-theory and we have a non-empty $\Delta \subseteq \mathcal{L}_{\mathbf{L}}$ such that $\Gamma \not\vdash_{\mathbf{L}} \Delta$, then there is a non-trivial prime $\mathbf{L}$-theory $\Gamma'$ such that $\Gamma \subseteq \Gamma'$ and $\Delta \cap \Gamma' = \emptyset$.
\end{lemma}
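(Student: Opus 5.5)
I will use the standard enumeration argument for Lindenbaum's Lemma, with the deduction theorem (Lemma \ref{thm:DT}) for $\To_{\mathbf{L}}$ doing the work that the classical deduction theorem does in the usual proof. Since $\mathcal{L}_{\mathbf{L}}$ is countable, enumerate its formulas as $C_0, C_1, \ldots$. Put $\Gamma_0 = \Gamma$. Let $\Gamma_{n+1} = \Th_{\mathbf{L}}(\Gamma_n \cup \{C_n\})$ if $\Gamma_n, C_n \not\vdash_{\mathbf{L}} \Delta$, and $\Gamma_{n+1} = \Gamma_n$ otherwise. Finally set $\Gamma' = \bigcup_n \Gamma_n$. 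By induction, each $\Gamma_n$ satisfies $\Gamma_n \not\vdash_{\mathbf{L}} \Delta$; the base case is the hypothesis on $\Gamma$. Since derivations are finite, $\Gamma' \not\vdash_{\mathbf{L}} \Delta$ as well.

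The routine properties follow quickly from this invariant and Fact \ref{fact:vdashL}.
\begin{itemize}
\item $\Gamma'$ is a theory. If $\Gamma' \vdash A$ with $A = C_n$, then $\Gamma_n, C_n \vdash \Delta$ would give $\Gamma' \vdash \Delta$, by transitivity (Fact \ref{fact:vdashL}.4) and finiteness. So $C_n$ was added, and $A \in \Gamma'$.
\item $\Gamma' \cap \Delta = \emptyset$. If $D \in \Delta \cap \Gamma'$, then $\Gamma' \vdash D$, a one-element disjunction from $\Delta$, contradicting the invariant.
\item $\Gamma'$ is non-trivial. Since $\Delta$ is non-empty and disjoint from $\Gamma'$, the theory $\Gamma'$ is not all of $\mathcal{L}_{\mathbf{L}}$.
\end{itemize}

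The key step is primeness. Suppose $A \lor B \in \Gamma'$ but $A \notin \Gamma'$ and $B \notin \Gamma'$. By maximality of the construction, $\Gamma', A \vdash D_1$ and $\Gamma', B \vdash D_2$, where $D_1$ and $D_2$ are disjunctions of elements of $\Delta$. By weakening through disjunction introduction (\ref{L:orIntro}) and modus ponens, both $A$ and $B$ yield the single disjunction $D = D_1 \lor D_2$, again a disjunction of elements of $\Delta$, up to reassociation. Lemma \ref{thm:DT} then gives $\Gamma' \vdash A \To_{\mathbf{L}} D$ and $\Gamma' \vdash B \To_{\mathbf{L}} D$.

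It remains to show that $\Gamma' \vdash (A\lor B) \To_{\mathbf{L}} D$. The axiom (\ref{L:orElim}) is stated for strong $\to$, not for $\To_{\mathbf{L}}$, so it cannot be applied directly. This is the main obstacle. I would invoke the fact, recorded in the text before Lemma \ref{lem:StrongToInt}, that all $\mathbf{IPC}^+$ principles hold for $\To_{\mathbf{L}}$ (Dunn's embedding), in particular
\[
(A \To_{\mathbf{L}} D) \To_{\mathbf{L}} ((B \To_{\mathbf{L}} D) \To_{\mathbf{L}} ((A \lor B) \To_{\mathbf{L}} D)).
\]
If that fact is not taken as given, I would instead prove $\Gamma', A \lor B \vdash D$ by a proof by cases at the level of $\vdash_{\mathbf{L}}$. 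Distribution (\ref{L:Distribut}) applied to $\Gamma' \cup \{A \lor B\}$ would handle the cases, using the local deduction theorem in each branch.

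Given this principle, $\Gamma' \vdash D$, contradicting the invariant, so $\Gamma'$ is prime. Reassociating disjunctions uses only (\ref{L:orIntro}) and (\ref{L:orElim}) under strong implication, and these transfer to $\vdash_{\mathbf{L}}$ by derived modus ponens.
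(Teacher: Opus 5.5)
Your proof is correct, but it takes a different route from the paper's.

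\textbf{The paper's route.} The paper does not construct $\Gamma'$ by hand. It identifies prime $\mathbf{L}$-theories with prime filters of the Lindenbaum--Tarski algebra of $\mathbf{L}$, which is a distributive lattice. It then quotes the separation theorem for a filter disjoint from an ideal, here the ideal generated by $\Delta$ \cite[Lemma 8.6.2]{DunnHardegree2001}.

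\textbf{Your route.} You run a countable enumeration instead. In your primeness step, the real work is done by two facts:
\begin{itemize}
\item the local deduction theorem;
\item the $\mathbf{IPC}^+$ law $(A \To_{\mathbf{L}} D) \To_{\mathbf{L}} ((B \To_{\mathbf{L}} D) \To_{\mathbf{L}} ((A \lor B) \To_{\mathbf{L}} D))$, which the paper explicitly grants when it says the remaining $\mathbf{IPC}^+$ axioms are derivable.
\end{itemize}

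\textbf{Comparison.} The algebraic route is shorter and needs no countability, since the separation theorem is a Zorn's-lemma argument. Its cost is that the correspondence between theories and filters is left implicit. One has to check that filters are closed under elliptical modus ponens. The natural way is to order the algebra by provable $\To_{\mathbf{L}}$ and use $\vdash (A \land (A \To_{\mathbf{L}} B)) \To_{\mathbf{L}} B$. Then one needs $\lor$ to be a join in that order, which is again $\lor$-elimination for $\To_{\mathbf{L}}$. Your argument makes exactly these ingredients visible and is self-contained. The countability it relies on is available, including for $\mathcal{L}_{\Box\Diamond}$ and $\mathcal{L}^t_{\Box\Diamond}$, where the lemma is reused.

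\textbf{Your fallback.} The fallback via \eqref{L:Distribut} is not needed. As sketched, it would also not obviously work uniformly. For $\mathbf{N_{10(c)}}$ and $\mathbf{BN_{10(c)}}$, the elliptical implication has the form $X \to (X \to B)$, whose consequent depends on the antecedent. So the strong-implication axiom \eqref{L:orElim} cannot be applied to it directly. The appeal to the granted $\mathbf{IPC}^+$ principle is what carries your primeness step.
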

\begin{proof}
The reader may not need to be reminded about this, but this result follows from the fact that prime $\mathbf{L}$-theories correspond to prime filters on the Lindenbaum-Tarski algebra for $\mathbf{L}$, which is a distributive lattice, and the fact that disjoint filter-ideal pairs in distributive lattices can be separated by prime filters; see \cite[Lemma 8.6.2]{DunnHardegree2001}.
\end{proof}

\begin{definition}\label{def:PropCanModel}
The \emph{canonical $\mathbf{L}$-structure} is $\langle K_{\mathbf{L}},(F_{\mathbf{L}}), \subseteq, V_{\mathbf{L}} \rangle$ where
\begin{itemize}
\item $K_{\mathbf{L}}$ is the set of non-trivial prime $\mathbf{L}$-theories;
\item $\Gamma \in F_{\mathbf{L}}$ iff $\neg t \in \Gamma$;
\item $1 \in V_{\mathbf{L}}(p,x)$ iff $A \in \Gamma$;
\item $0 \in V_{\mathbf{L}}(p,x)$ iff $\neg A \in \Gamma$.
\end{itemize}
\end{definition}

\begin{lemma}\label{lem:CanonicalModel}
The canonical $\mathbf{L}$-structure is a generalised Nelson model in $\mathcal{N}_{\mathbf{L}}$.
\end{lemma}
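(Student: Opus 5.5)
We need to verify the frame and valuation conditions of Definition \ref{def:GenNelsonModel}, and the ones listed in Table \ref{tab:DunnModels} for $\mathbf{L}$, on the canonical structure. We read the definition of $V_{\mathbf{L}}$ as: $1 \in V_{\mathbf{L}}(p,\Gamma)$ iff $p \in \Gamma$, and $0 \in V_{\mathbf{L}}(p,\Gamma)$ iff $\neg p \in \Gamma$. Throughout we use that every $\Gamma \in K_{\mathbf{L}}$ is closed under $\vdash_{\mathbf{L}}$, hence under elliptical modus ponens, and contains all theorems.

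The general conditions come first.
\begin{itemize}
\item $\subseteq$ is a partial order on $K_{\mathbf{L}}$.
\item Condition \eqref{eq:VHeredity} holds trivially, since membership of $p$ or $\neg p$ is preserved by inclusion.
\item Condition \eqref{eq:FHeredity} likewise holds trivially, because $\Gamma \in F_{\mathbf{L}}$ iff $\neg t \in \Gamma$.
\end{itemize}

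Next come the valuation restrictions.
\begin{itemize}
\item For \eqref{eq:NoGluts}, take $\mathbf{L} = \mathbf{N_{1(0c)}}$. If $p, \neg p \in \Gamma$, then $p \land \neg p \in \Gamma$ by \eqref{L:andIntro} and primeness. Axiom \eqref{L:NoGluts} and elliptical modus ponens then put every $B$ in $\Gamma$, contradicting non-triviality.
\item For \eqref{eq:NoGaps}, take $\mathbf{L} = \mathbf{B_{1(0c)}}$. Instantiate \eqref{L:NoGaps} with any theorem $A$, e.g.\ $A = p \to p$. Then $p \lor \neg p \in \Gamma$, and primeness gives $p \in \Gamma$ or $\neg p \in \Gamma$.
\item For \eqref{eq:Ack}, take $\mathbf{L} \in \mathbf{L_{10(c)}}$ over $\mathcal{L}^t$. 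Suppose $p, \neg p \in \Gamma$; we must show $\neg t \in \Gamma$. By \eqref{L:Ack}, $p \to (t \to p)$ is derivable, so $t \to p \in \Gamma$. Contraposition \eqref{L:toContrap} yields $\neg p \to \neg t \in \Gamma$. Strong modus ponens, available via Lemma \ref{lem:StrongToInt}, then gives $\neg t \in \Gamma$.
\end{itemize}
In each case we only need the relevant instance of an axiom together with closure of theories under $\To_{\mathbf{L}}$-modus ponens.

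The main obstacle is Condition \eqref{eq:Conn} for $\mathbf{L} \in \mathbf{L_{1(0)c}}$. Suppose $\Gamma \subseteq \Delta$ and $\Gamma \subseteq \Sigma$ but neither $\Delta \subseteq \Sigma$ nor $\Sigma \subseteq \Delta$. Pick $A \in \Delta \setminus \Sigma$ and $B \in \Sigma \setminus \Delta$. Axiom \eqref{L:Conn} is a theorem, hence lies in $\Gamma$, and primeness gives $A \To_{\mathbf{L}} B \in \Gamma$ or $B \To_{\mathbf{L}} A \in \Gamma$. In the first case, $A \To_{\mathbf{L}} B \in \Delta$ by inclusion, so elliptical modus ponens with $A \in \Delta$ yields $B \in \Delta$, a contradiction. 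The second case is symmetric. The key point is that we use the weaker, elliptical chain axiom here, which suffices precisely because theories are closed under elliptical modus ponens.

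Finally, $K_{\mathbf{L}}$ must be non-empty. This follows from Lemma \ref{lem:Lindenbaum} applied to $\Th_{\mathbf{L}}(\emptyset)$ with $\Delta = \{p\}$ for a variable $p$ not derivable, e.g.\ by soundness (Theorem \ref{thm:Soundness}) using a one-point model. The remaining clauses of Table \ref{tab:DunnModels} concern the recursive extension of $V$ and are not conditions on the structure, so nothing further needs checking here. The truth lemma for compound formulas, in particular $\to$ and $t$, is left to a subsequent lemma.
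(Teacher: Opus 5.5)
Your proposal is correct and takes essentially the paper's route. Each frame condition is checked directly from the matching axiom together with closure of prime theories under $\To_{\mathbf{L}}$-modus ponens; in \eqref{eq:Ack} you make explicit the use of contraposition \eqref{L:toContrap}, which the paper's one-line derivation of $p \land \neg p \To_{\mathbf{L}} \neg t$ leaves implicit, and in \eqref{eq:Conn} you skip the paper's detour through \eqref{L:ToAdj}. One small citation fix: obtaining $p \land \neg p \in \Gamma$ from $p, \neg p \in \Gamma$ uses \eqref{L:ToAdj} and closure under derivability, not \eqref{L:andIntro} or primeness.
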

\begin{proof}
It suffices to show that the canonical $\mathbf{L}$-structure satisfies the model conditions satisfied by $\mathcal{N}_{\mathbf{L}}$. We show the details only for Condition  \eqref{eq:Ack} (assuming $\mathbf{L} \in \mathbf{L_{10}}$) and Condition \eqref{eq:Conn} (assuming $\mathbf{L} \in \mathbf{L_{1(0)c}}$). The reader is referred to \cite{Dunn2000} for the remaining cases.

\noindent \eqref{eq:Ack}. Assume $p \in \Gamma$ and $\neg p \in \Gamma$. Hence, by Definition \ref{def:Theories} we have $p \land \neg p \in \Gamma$. Note that  by \eqref{L:Ack} (since $\mathbf{L} \in \mathbf{L_{10}}$) and Lemma \ref{lem:StrongToInt} we have that $\vdash_{\mathbf{L}} p \land \neg p \To_\mathbf{L} \neg t$. Hence, we conclude that $\neg t \in \Gamma$ iff $\Gamma \in F$.

\noindent \eqref{eq:Conn}. Assume by contradiction that $\Gamma \subseteq \Delta_1$ and $\Gamma \subseteq \Delta_2$ but neither $\Delta_1 \not\subseteq \Delta_2$ nor $\Delta_2 \not\subseteq \Delta_1$. Hence, there are $A \in \Delta_1$ and $B \in \Delta_2$ such that $A \not\in \Delta_2$ and $B \not\in \Delta_1$. By Definition \ref{def:Theories} and  \eqref{L:Conn} (since $\mathbf{L} \in \mathbf{L_{1(0)c}}$)  we have $(A \To_\mathbf{L} B) \lor (B \To_\mathbf{L} A) \in \Gamma$, which implies by primeness of $\Gamma$ that either $A \To_\mathbf{L} B \in \Gamma$ or $B \To_\mathbf{L} A \in \Gamma$. If $A \To_\mathbf{L} B \in \Gamma$, by $\Gamma \subseteq \Delta_1$ we would infer that $A \To_\mathbf{L} B \in \Delta_1$, which would imply by  $A \in \Delta_1$ and \eqref{L:ToAdj} that $(A \To_\mathbf{L} B) \land A \in \Delta_1$. We would conclude by $\vdash_{\mathbf{L}} ((A \To_\mathbf{L} B) \land A) \To_\mathbf{L} B$ (which uses Lemma \ref{lem:StrongToInt}) that $B \in \Delta_1$ (contradiction!). The case $B \To_\mathbf{L} A \in \Gamma$ is established similarly.
\end{proof}

In what follows, the canonical $\mathbf{L}$-structure will be referred to as the canonical $\mathbf{L}$-model and it will be denoted as $\mathfrak{M}_\mathbf{L}^c$.

\begin{lemma}\label{lem:SoundCompletCanonicalModel}
$A \in \Gamma$ iff $\Gamma \vDash^1 A$ in $\mathfrak{M}_{\mathbf{L}}^c$.
\end{lemma}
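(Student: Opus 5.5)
We plan to prove, by simultaneous induction on the complexity of $A$, the two claims
\[ A \in \Gamma \iff \Gamma \vTrue A \qquad\text{and}\qquad \neg A \in \Gamma \iff \Gamma \vFalse A \]
for every $\Gamma \in K_{\mathbf{L}}$. The statement in question is the first claim. We need the second as well, because the falsification clauses refer back to verification and vice versa.

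\textbf{Base and connective cases.} For atoms both claims hold by the definition of $V_{\mathbf{L}}$. For $t$, truth holds since $t$ is a theorem: from \eqref{L:Ack} with $A = t \to t$, together with \eqref{L:id}, we get $t$. Falsity holds by the definition of $F_{\mathbf{L}}$. The cases $\neg B$, $B \land C$ and $B \lor C$ are routine. They use \eqref{L:DN}, \eqref{L:DeMorgan}, closure of theories under provable $\To_{\mathbf{L}}$ (obtained from $\to$-theorems via Lemma \ref{lem:StrongToInt}), adjunction \eqref{L:ToAdj}, and primeness of $\Gamma$.

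\textbf{Verification of $B \to C$.} The left-to-right direction is easy. If $B \to C \in \Gamma \subseteq \Delta$ and $B \in \Delta$, then $C \in \Delta$ by strong modus ponens. In the contrapositive logics $\mathbf{L_{10(c)}}$ we also use \eqref{L:toContrap}: this gives $\neg C \to \neg B \in \Delta$, so $\neg C \in \Delta$ yields $\neg B \in \Delta$, and the induction hypothesis finishes.

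The converse is the core step. Suppose $B \to C \notin \Gamma$. By \eqref{L:toPresAntipres}, $\Gamma$ cannot contain both $B \To_{\mathbf{L}} C$ and $\neg C \To_{\mathbf{L}} \neg B$; in the $\mathbf{L_{1(c)}}$ case we show this non-membership directly. Say $B \To_{\mathbf{L}} C \notin \Gamma$. Then by Lemma \ref{thm:DT}, $\Gamma, B \nvdash C$, so Lemma \ref{lem:Lindenbaum} applied to $\Th(\Gamma \cup \{B\})$ and $\Delta = \{C\}$ gives a prime non-trivial $\Delta' \supseteq \Gamma$ with $B \in \Delta'$ and $C \notin \Delta'$. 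The other case, $\neg C \To_{\mathbf{L}} \neg B \notin \Gamma$, is symmetric and produces a witness falsifying $C$ but not $B$.

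In $\mathbf{L_{1(c)}}$, the verification clause only demands truth-preservation. There we must show directly that $B \To_{\mathbf{L}} C \in \Gamma$ implies $B \to C \in \Gamma$, using that $\To_{\mathbf{L}}$ coincides with $\to$ in these logics (Definition \ref{def:Elliptical}). In $\mathbf{L_{10(c)}}$ both preservation conditions appear, matching \eqref{L:toPresAntipres} exactly.

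\textbf{Falsification of $B \to C$.} This is where the logics differ.
\begin{itemize}
\item For $\mathbf{N}$ and $\mathbf{BN}$ logics: \eqref{L:to0CounterEg} and \eqref{L:to0CounterEg2} give $\neg(B \to C) \in \Gamma$ iff $B \land \neg C \in \Gamma$, and the induction hypothesis finishes.
\item For $\mathbf{B_{1(0)c}}$, left to right: if $\neg(B \to C) \in \Gamma$ and $B \to C \in \Gamma$, then \eqref{L:to0+NonTrue.1} gives $B \land \neg C \in \Gamma$. Otherwise $B \to C \notin \Gamma$, and the already-established truth case yields the escape disjunct.
\item For $\mathbf{B_{1(0)c}}$, right to left: the counterexample disjunct is handled by \eqref{L:to0CounterEg2}. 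For the escape disjunct, $B \to C \notin \Gamma$, and we use \eqref{L:NoGaps} and primeness to get $(B \to C) \lor \neg(B \to C) \in \Gamma$, hence $\neg(B \to C) \in \Gamma$.
\end{itemize}
To make the escape disjunct usable, we first settle the truth claim for $B \to C$ and only then the falsity claim, which keeps the induction well-founded.

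\textbf{Main obstacle.} The hard step is the truth direction $B \to C \notin \Gamma \Rightarrow \Gamma \not\vTrue B \to C$. It requires correctly invoking \eqref{L:toPresAntipres} and the deduction theorem to reduce membership of $\to$ to membership of elliptical implications.

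For the $\mathbf{B}$ and $\mathbf{BN_{10}}$ logics, whose $\To_{\mathbf{L}}$ involves $t$, there is a further point. We must check that in the witness $\Delta'$, containing $B$ suffices to yield $C$ from $B \To_{\mathbf{L}} C$, even though $\To_{\mathbf{L}}$ is built from $B \land t$. This holds because $t \in \Delta'$ as a theorem. We also need the theory extension in the Lindenbaum step to stay non-trivial, which follows since $C \notin \Delta'$.
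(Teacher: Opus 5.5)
Your proposal is correct and follows the paper's proof essentially step for step. It runs a simultaneous induction on the verification and falsification claims. The converse truth direction for $B \to C$ rests on the fact the paper labels \eqref{eq:*}, which you in effect prove from \eqref{L:toPresAntipres}, \eqref{L:ToAdj} and Lemma \ref{thm:DT}, followed by Lindenbaum. The falsification case splits exactly as in the paper, using \eqref{L:to0CounterEg}/\eqref{L:to0CounterEg2} for the $\mathbf{N}$ and $\mathbf{BN}$ logics and \eqref{L:to0+NonTrue.1}/\eqref{L:NoGaps} for the $\mathbf{B}$ logics.

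There are two small slips. First, $t$ is a theorem via the instance $t \tot (t \to t)$ of \eqref{L:Ack}, i.e.\ with $A = t$, not $A = t \to t$. Second, your closing concern about needing $t \in \Delta'$ is unnecessary: elliptical modus ponens \eqref{L:ToModusPonens} is the primitive rule, so theories are closed under it directly.
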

\begin{proof}
It suffices to show for all $\mathfrak{M}^c_{\mathbf{L}}$ that (i) $A \in \Gamma $ iff $\Gamma \vTrue A$ and (ii) $\neg A \in \Gamma $ iff $\Gamma \vFalse A$. The result is established by induction on the complexity of $B$. All cases are straightforward, except the case $A= B \to C$. We use elliptical modus ponens without mention in this proof. 

\noindent (i). 
From left to right, assume $B \to C \in \Gamma$. We distinguish cases. If $\mathbf{L} \in \mathbf{L_{1(c)}}$ then by the induction hypothesis it suffices to show that $\forall \Delta \supseteq \Gamma(B \in \Delta$ only if $C \in \Delta)$. This holds trivially by $B \To_\mathbf{L} C, B \vdash_{\mathbf{L}} C$ and Lemma \ref{lem:StrongToInt}. If $\mathbf{L} \in \mathbf{L_{10(c)}}$, then we need to additionally show that $B \to C \in \Gamma$ only if $\forall \Delta \supseteq \Gamma(\neg C \in \Delta$ only if $\neg B \in \Delta)$. This again holds trivially by  $B \To_{\mathbf{L}} C, \neg C \vdash_{\mathbf{L}} \neg B$ (which relies on the contraposition axiom \eqref{L:toContrap} of $\mathbf{L} \in \mathbf{L_{10(c)}}$). 
From right to left, assume $B \to C \notin \Gamma$. We rely on the easily provable fact (cf. \cite[p.31]{Dunn2000}) that for all $\mathbf{L}$:
\begin{equation}
\Gamma \vdash_{\mathbf{L}} B \to C \textit{ if } 
\begin{cases}
    \Gamma, B \vdash C \textit{ and } \Gamma, \neg C \vdash \neg B & \textit{if } \mathbf{L} \in \mathbf{L_{10(c)}}\\
    \Gamma, B \vdash C  & \textit{otherwise.} \tag{*} \label{eq:*}
\end{cases}    
\end{equation}
\noindent We distinguish cases. If $\mathbf{L} \in \mathbf{L_{1(c)}}$, then by \eqref{eq:*} we infer that $\Gamma, B \not\vdash_{\mathbf{L_{1(c)}}} C$, which implies by Lemma \ref{lem:Lindenbaum} that there is $\Delta \in K$ such that $\Delta \supseteq \Gamma \cup \{A\}$ and $B \notin \Delta$. We conclude by the induction hypothesis that $\Delta \vTrue A$ and $\Delta \not \vTrue B$. If $\mathbf{L} \in \mathbf{L_{10(c)}}$, then  by \eqref{eq:*} we infer that $\Gamma, B \not\vdash_{\mathbf{L_{10(c)}}} C$ or $\Gamma, \neg C \not\vdash_{\mathbf{L_{10(c)}}} \neg B$ and we reason similarly as above.

\noindent (ii). From left to right, assume $\neg (B \to C) \in \Gamma$. We distinguish cases. 
If $\mathbf{L} \in \{\mathbf{N_{1(0)}},$ $\mathbf{BN_{1(0)}}\}$, then by axiom \eqref{L:to0CounterEg} we infer that $B \land \neg C \in \Gamma$, whence $B \in \Gamma$ and  $\neg C \in \Gamma$. We conclude by the induction hypothesis that $\Gamma \vTrue B \land \neg C$.
If $\mathbf{L} \in \mathbf{B_{1(0c)}}$, then by the induction hypothesis it suffices to show that  $B \land \neg C \in \Gamma$ or $B \to C \notin \Gamma$. To this aim, assume $B \to C \in \Gamma$ to show $B \land \neg C \in \Gamma$. By $B \to C \in \Gamma$ and $\neg (B \to C) \in \Gamma$ we infer that $(B \to C) \land \neg (B \to C) \in \Gamma$. We conclude by axiom \eqref{L:to0+NonTrue.1} of $\mathbf{L} \in \mathbf{B_{1(c)}}$ that $B \land \neg C \in \Gamma$, whence again $\Gamma \vTrue B \land \neg C$.
From right to left, we distinguish cases. If $\mathbf{L} \in \{\mathbf{N_{1(0c)}}, \mathbf{BN_{1(0c)}}\}$, then it suffices by the induction hypothesis to show that  $B \land \neg C \in \Gamma$ only if $\neg (B \to C) \in \Gamma$. This follows from axiom \eqref{L:to0CounterEg2}. If $\mathbf{L} \in \mathbf{B_{1(0c)}}$, then we need to additionally show that $B \to C \notin \Gamma$ only if $\neg (B \to C) \in \Gamma$. This follows from axiom \eqref{L:NoGaps} of $\mathbf{B_{1(c)}}$, primeness of $\Gamma$ and  $\mathbf{L} \subseteq \Gamma$.
\end{proof}

\begin{theorem}[Completeness]
    $\Gamma \models_{\mathcal{N}_\mathbf{L}} A$ only if $\Gamma \vdash_{\mathbf{L}} A$.
\end{theorem}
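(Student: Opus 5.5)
We argue by contraposition, using the canonical $\mathbf{L}$-model $\mathfrak{M}^c_{\mathbf{L}}$. Suppose $\Gamma \not\vdash_{\mathbf{L}} A$. First replace $\Gamma$ by the $\mathbf{L}$-theory $\Th_{\mathbf{L}}(\Gamma)$. This is indeed a theory: by Fact~\ref{fact:vdashL}(4), derivations from $\Th_{\mathbf{L}}(\Gamma)$ can be replaced by derivations from $\Gamma$. Moreover $A \notin \Th_{\mathbf{L}}(\Gamma)$, so $\Th_{\mathbf{L}}(\Gamma) \not\vdash_{\mathbf{L}} A$. Taking $\Delta = \{A\}$ in Lindenbaum's Lemma~\ref{lem:Lindenbaum} yields a non-trivial prime $\mathbf{L}$-theory $\Gamma'$ with $\Gamma \subseteq \Th_{\mathbf{L}}(\Gamma) \subseteq \Gamma'$ and $A \notin \Gamma'$. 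In particular $\Gamma' \in K_{\mathbf{L}}$.

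Next we evaluate in $\mathfrak{M}^c_{\mathbf{L}}$. By Lemma~\ref{lem:CanonicalModel}, $\mathfrak{M}^c_{\mathbf{L}} \in \mathcal{N}_{\mathbf{L}}$. By the truth lemma, Lemma~\ref{lem:SoundCompletCanonicalModel}:
\begin{itemize}
\item for every $B \in \Gamma$ we have $B \in \Gamma'$, hence $\Gamma' \vTrue B$;
\item since $A \notin \Gamma'$, we have $\Gamma' \not\vTrue A$.
\end{itemize}
So the state $\Gamma'$ of $\mathfrak{M}^c_{\mathbf{L}}$ verifies all of $\Gamma$ but not $A$. Hence $\Gamma \not\models_{\mathcal{N}_{\mathbf{L}}} A$, which is the contrapositive of the claim.

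The argument is essentially bookkeeping once the canonical model and truth lemmas are in place. The only points needing care are the following.
\begin{itemize}
\item The hypothesis of Lemma~\ref{lem:Lindenbaum} is stated for theories. This is why we first close $\Gamma$ under derivability and check that $\Th_{\mathbf{L}}(\Gamma) \not\vdash A$.
\item For $\mathbf{L} \in \{\mathbf{B_{10(c)}}, \mathbf{BN_{10(c)}}\}$, the set $F_{\mathbf{L}}$ is part of the structure, and Lemma~\ref{lem:CanonicalModel} must deliver the hereditary condition \eqref{eq:FHeredity}. This is immediate, since membership of $\neg t$ is preserved under $\subseteq$.
\end{itemize}
The real obstacle lies upstream, in Lemma~\ref{lem:CanonicalModel}, and specifically in \eqref{eq:Conn} for the $\mathbf{c}$-logics, where the weakened chain axiom must suffice. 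That has already been handled, so here we simply invoke it.
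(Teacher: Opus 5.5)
Your proof is correct and follows the paper's route: the paper obtains completeness directly from Lemma~\ref{lem:CanonicalModel} and the truth lemma (Lemma~\ref{lem:SoundCompletCanonicalModel}). You also spell out the step the paper leaves implicit, namely closing $\Gamma$ under $\vdash_{\mathbf{L}}$ and applying Lemma~\ref{lem:Lindenbaum} with $\Delta = \{A\}$ to get the prime theory $\Gamma'$ that serves as the countermodel state.
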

\begin{proof}
    It follows from Lemmas \ref{lem:CanonicalModel}  and  \ref{lem:SoundCompletCanonicalModel}.
\end{proof}

\section{Modal extensions}\label{sec:Modal}

Modal extensions of generalised Nelson logics are formulated in the languages $\mathcal{L}_{\Box \Diamond}$ and $\mathcal{L}^t_{\Box \Diamond}$ (the latter whenever $\mathbf{L} \in \{\mathbf{B_{10(c)}},\mathbf{BN_{10(c)}}\}$). The main contribution of this section is to show that, given an arbitrary generalised Nelson logic $\mathbf{L}$, its modal extension $\mathbf{L}_\Box$ can be obtained by uniformly extending the Hilbert system for $\mathbf{L}$ and the class of models $\mathcal{N}_{\mathbf{L}}$. To be precise, Hilbert systems for modal extensions of $\mathbf{L}$ are uniform only at the level of formula templates, since some axioms employ elliptical implication $\To_{\mathbf{L}}$, and we have seen that $\To_{\mathbf{L}}$ corresponds to different formulas in each $\mathbf{L}$.

\subsection{Hilbert systems}

Hilbert systems for modal generalised Nelson logics are in close relationship with axiom systems used for a number of non-classical modal logics, chiefly modal Nelson logics \cite{OdintsovWansing2004,Sherkhonov2008}, modal extensions of $\mathbf{FDE}$ \cite{Drobyshevich2020,Odintsov2010-ODIMLW-2,OdintsovWansing2017}, modal intuitionistic logics \cite{Bozic1984-BOIMFN,Simpson1994} and positive modal logics \cite{Dunn1995}.

The easiest way to present them is in comparison with positive modal logic. In \cite{Dunn1995}, Dunn axiomatises the minimal positive normal modal logic $\mathbf{K}^+$, over a language including $\land,\lor$ and the two independent modal operators $\Box$ and $\Diamond$. The basic modal principles are (i) monotonicity rules for $\Box,\Diamond$, (ii) distribution of $\Box$ over $\land$ and of $\Diamond$ over $\lor$, (iii) two interaction principles between $\Box$ and $\Diamond$, and (iv) the necessitation rule. The natural analogues of such principles are all theorems of each modal generalised Nelson logic $\mathbf{L}_\Box$, while the remaining principles relate $\Box$ and $\Diamond$ with strong negation $\neg$ and strong implication $\to$.

Starting from strong negation, we may note that, just like in $\mathbf{FDE}$, $\neg$ is a De Morgan negation, hence $\land$ and $\lor$ are dual. For this reason, as well as for the sake of simplicity, we decided to take dual $\Box$ and $\Diamond$ as well.\footnote{Note that this is in contrast with the famous intuitionistic requirement, formulated by Simpson \cite{Simpson1994}, that $\Box$ and $\Diamond$ be non-dual, to match the non-duality of $\land$ and $\lor$, and of $\forall$ and $\exists$. This requirement seems less pressing in the context of strong negation, as opposed to intuitionistic negation. However, failures of semantic and formal duality, in the style of \cite{OdintsovWansing2004,Sherkhonov2008,Drobyshevich2020}, are left for future work.} As a result, half of the axioms from positive modal logics turn out redundant. Next, modal generalised Nelson logics are akin to Nelson logics in that negation is a non-congruential operator, thus the logic is closed only under weak replacement of strongly equivalent formulas (cf. Lemma \ref{thm:PosNegReplacement}). To lift this result to a modal setting, we need to include in the axiomatisation the axioms $\Box \neg A \tot \neg \Diamond A$ and $\Diamond \neg A \tot \neg \Box A$, from which duality follows. Of course, many logics in the generalised Nelson family are congruential, in which case each of the previous axioms is derivable from the other. 

Moving to strong implication, we turn the reader's attention again to Lemma \ref{lem:StrongToInt}, stating that $\to$ is stronger than elliptical implication $\To_\mathbf{L}$ in each $\mathbf{L}$. For this reason, it suffices to formulate all axioms of modal generalised Nelson logics with $\to$ as the preferred outer implication. This is the case for all axioms except the $\mathbf{K}$-axioms for $\Box$ and $\Diamond$, where the presence of nested implications requires one to use elliptical implication $\To_\mathbf{L}$ rather than strong implication $\to$.\footnote{We note that for all generalised Nelson logics $\mathbf{L}$ except $\mathbf{BN_{10(c)}}$ it is sufficient to take the standard $\mathbf{K}$-axioms $\Box (A \to B)  \to (\Box A \to \Box B)$ and $\Box (A \to B)  \to (\Diamond A \to \Diamond B)$. This is because the versions of the axioms with $\To_{\mathbf{L}}$ are provable from their versions with $\to$. However, we were not able to show the latter fact for the case $\mathbf{L}=\mathbf{BN_{10(c)}}$.}  As a result,  modal generalised Nelson logics are akin to the positive fragment of Fischer--Servi's modal intuitionistic logic $\mathbf{IK}^+$.\footnote{A notable difference with $\mathbf{IK}^+$ (cf. \cite[p.52]{Simpson1994}) is the latter's interaction axiom $(\Diamond A \to \Box B) \to \Box (A \to B)$ (where $\to$ is intuitionistic implication), which is traded in the present context for $\Box(A \lor B) \to (\Box A \lor \Diamond B)$ (where $\to$ is strong implication).}

\begin{definition}[Modal Hilbert systems]
    For each generalised Nelson logic $\mathbf{L}$, its modal extension $\mathbf{L}_\Box$ is the smallest set of formulas in $\mathcal{L}_{\mathbf{L}}$ (i) containing all axioms of $\mathbf{L}$ as well as all formulas of the form
    \begin{gather}
  \Box (A \To_{\mathbf{L}} B)  \To_{\mathbf{L}} (\Box A \To_{\mathbf{L}} \Box B) \label{Lm:K}\\
 \Box (A \To_{\mathbf{L}} B)  \To_{\mathbf{L}} (\Diamond A \To_{\mathbf{L}} \Diamond B) \label{Lm:KDiam}\\
  \Diamond (A \lor B)  \to (\Diamond A \lor \Diamond B) \label{Lm:DiamDis}\\
 \Box (A \lor B)  \to (\Box A \lor \Diamond B)\label{Lm:DunnBox}\\
 \Box \neg A  \tot \neg \Diamond A\label{Lm:BoxNeg}\\
 \Diamond \neg A  \tot \neg \Box A\label{Lm:DiaNeg}
    \end{gather}
\noindent and (ii) closed under elliptical modus ponens \eqref{L:ToModusPonens} and
\begin{gather}
    \dfrac{A}{\Box A} \label{Lm:Nec}
\end{gather}
\end{definition}

An important remark concerns the definition of derivability in $\mathbf{L}_\Box$. To preserve the deduction theorem of Lemma \ref{thm:DT} in a modal setting, it is necessary to employ exactly the notion specified by Definition \ref{def:Lderivable}, which in particular rules out the application of the necessitation rule \eqref{Lm:Nec} to arbitrary assumptions but allows its application only to theorems. This is the so-called \emph{local} notion of derivability in modal logic.



The following lemma, which relies on axioms \eqref{Lm:K} and \eqref{Lm:KDiam}, is heavily and uniformly used in the completeness proof. 

\begin{lemma}\label{lem:CanonicalModel-modal-Help}
For all $\mathbf{L}_\Box$, the following hold:
\begin{align}
    &\dfrac{\Gamma \vdash_{\mathbf{L}_\Box} A}{\Box \Gamma \vdash_{\mathbf{L}_\Box} \Box A} \label{HelpBox}\\
&\dfrac{\Gamma, A \vdash_{\mathbf{L}_\Box} B}{\Box \Gamma, \Diamond A \vdash_{\mathbf{L}_\Box} \Diamond B} \label{HelpDiamond}
\end{align}
\noindent where $\Box \Gamma := \{ \Box A \mid A \in \Gamma \}$.
\end{lemma}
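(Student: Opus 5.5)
Both rules are proved the same way. We use compactness of derivations, then the deduction theorem (Lemma~\ref{thm:DT}) to turn the derivation into a single theorem of the form $A_1 \To_{\mathbf{L}} (\cdots \To_{\mathbf{L}} (A_n \To_{\mathbf{L}} C)\cdots)$. We then apply necessitation \eqref{Lm:Nec} and distribute $\Box$ through the nested elliptical implications with \eqref{Lm:K}, using \eqref{Lm:KDiam} at the innermost step for $\Diamond$. One remark first: Lemma~\ref{thm:DT} was stated for $\mathbf{L}$. For $\mathbf{L}_\Box$ we assume it continues to hold, since derivability is local (Definition~\ref{def:Lderivable}): necessitation is applied only to theorems, so Dunn's induction on derivations goes through unchanged.

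For \eqref{HelpBox}, suppose $\Gamma \vdash_{\mathbf{L}_\Box} A$. Since derivations are finite, there are $A_1,\dots,A_n \in \Gamma$ with $A_1,\dots,A_n \vdash_{\mathbf{L}_\Box} A$. If $n=0$, then $A$ is a theorem, so $\Box A$ is a theorem by \eqref{Lm:Nec}, and Fact~\ref{fact:vdashL}(2) gives $\Box\Gamma \vdash \Box A$. Otherwise, iterating Lemma~\ref{thm:DT} gives
\[
\vdash_{\mathbf{L}_\Box} A_1 \To_{\mathbf{L}} (A_2 \To_{\mathbf{L}} \cdots (A_n \To_{\mathbf{L}} A)\cdots).
\]
Necessitation yields $\Box$ of this formula. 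By \eqref{Lm:K} and elliptical modus ponens we get $\vdash \Box A_1 \To_{\mathbf{L}} \Box(A_2 \To_{\mathbf{L}} \cdots)$. We then want to push $\Box$ inward under the prefix $\Box A_1 \To_{\mathbf{L}}$. Rather than using a monotonicity principle under $\To_{\mathbf{L}}$, it is cleaner to work with assumptions. From $\Box A_1$ and the theorem, Fact~\ref{fact:vdashL}(3) gives $\Box A_1 \vdash \Box(A_2 \To_{\mathbf{L}} \cdots)$. With \eqref{Lm:K} as a theorem, Fact~\ref{fact:vdashL}(3) again gives $\Box A_1, \Box A_2 \vdash \Box(A_3 \To_{\mathbf{L}} \cdots)$. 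An induction on $n$ then yields $\Box A_1,\dots,\Box A_n \vdash \Box A$, and monotonicity finishes the case. This uses only Fact~\ref{fact:vdashL}, so no congruence properties of $\To_{\mathbf{L}}$ are needed.

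For \eqref{HelpDiamond}, suppose $\Gamma, A \vdash_{\mathbf{L}_\Box} B$. Choose finitely many $A_1,\dots,A_n \in \Gamma$ with $A_1,\dots,A_n, A \vdash B$. By Lemma~\ref{thm:DT},
\[
A_1,\dots,A_n \vdash A \To_{\mathbf{L}} B.
\]
Applying the already proved \eqref{HelpBox} gives $\Box A_1,\dots,\Box A_n \vdash \Box(A \To_{\mathbf{L}} B)$. Axiom \eqref{Lm:KDiam} and Fact~\ref{fact:vdashL}(3) then give $\Box A_1,\dots,\Box A_n \vdash \Diamond A \To_{\mathbf{L}} \Diamond B$. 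Adding $\Diamond A$ as an assumption and applying elliptical modus ponens yields $\Box\Gamma, \Diamond A \vdash \Diamond B$. For $n=0$ the same argument works with \eqref{Lm:Nec} applied directly to $A \To_{\mathbf{L}} B$.

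The main obstacle is only the bookkeeping in the $\Box$ case: making the inductive peeling of nested $\To_{\mathbf{L}}$ precise. This is exactly why the $\mathbf{K}$-axioms are stated with $\To_{\mathbf{L}}$ rather than $\to$. We need to distribute $\Box$ over precisely the implication that the deduction theorem produces, and for $\mathbf{BN_{10(c)}}$ the $\to$-versions would not obviously suffice.
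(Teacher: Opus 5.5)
Your proof is correct. For \eqref{HelpDiamond} it is the paper's argument: deduction theorem, then \eqref{HelpBox}, then \eqref{Lm:KDiam}, then elliptical modus ponens.

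For \eqref{HelpBox} you take a genuinely different route. You compress the derivation, via compactness and iterated Lemma~\ref{thm:DT}, into the single theorem $A_1 \To_{\mathbf{L}} (\cdots (A_n \To_{\mathbf{L}} A)\cdots)$. You necessitate it once and then peel off the premises with \eqref{Lm:K} by induction on $n$.

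The paper instead inducts directly on the length of a derivation $\langle D_1,\ldots,D_n\rangle$ of $A$ from $\Gamma$ and shows $\Box\Gamma \vdash \Box D_i$ for every line. Members of $\Gamma$ are immediate, theorems of $\mathbf{L}_\Box$ are handled by \eqref{Lm:Nec}, and each elliptical modus ponens step uses one instance of \eqref{Lm:K}.

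The paper's route is more elementary: the box rule does not depend on the deduction theorem for $\mathbf{L}_\Box$, and there is no bookkeeping of nested implications. Your route hides the induction on derivations inside Lemma~\ref{thm:DT}, so it needs that lemma to hold for $\mathbf{L}_\Box$. You rightly flag this. Your justification is sound, since Dunn's proof uses only the schemas \eqref{L:ToPosParadox} and \eqref{L:ToSelfDist} and necessitation is applied only to theorems. The paper tacitly relies on the same fact for \eqref{HelpDiamond}, so your version assumes nothing extra overall. In exchange you get the familiar textbook pattern of reducing to one theorem and distributing $\Box$ over it.
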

\begin{proof} 
We write $\To$ for $\To_{\mathbf{L}}$ in this proof. We observe that $(*)$ if $A \in \Gamma \cup \mathbf{L}_\Box$, then $\Box \Gamma \vdash \Box A$ thanks to \eqref{Lm:Nec}. 

\noindent \eqref{HelpBox}. If $\Gamma \vdash A$, then there is a sequence of formulas $\langle D_1, \ldots, D_n\rangle$ such that $D_n = A$ and, for all $i \leq n$, $D_i$ satisfies one of the following: (i) $D_i \in \mathbf{L}_\Box$, (ii) $D_i \in \Gamma$, or (iii) there are $j, k < i$ such that $D_k = D_j \To D_i$. We prove that $\Box \Gamma \vdash \Box D_i$ for all $i \leq n$. 
For $i = 1$, there are two options; either $D_1 \in \Gamma$ or $D_1 \in \mathbf{L}_\Box$. It follows by our observation $(*)$  that $\Box \Gamma \vdash \Box A$. 
Now assume that the claim holds for all $i < m$ for a given $m \leq n$. We prove the claim for $m$. There are three cases to consider, corresponding to (i)--(iii) of Definition \ref{def:Lderivable}. The cases (i) and (ii) are established using our observation $(*)$. 
To deal with case (iii) assume that $D_k = D_j \To D_m$ for some $j, k < m$. Note that we have $\Gamma \vdash D_j \To D_m$ and $\Gamma \vdash D_j$. By the induction hypothesis, $\Box \Gamma \vdash \Box (D_j \To D_m)$ and $\Box \Gamma \vdash \Box D_j$. By \eqref{Lm:K} we have $\Box \Gamma \vdash \Box (D_j \To D_m) \To (\Box D_j \To \Box D_m)$ and so $\Box \Gamma \vdash \Box D_m$ by two applications of \eqref{L:ToModusPonens}. This concludes the proof of the first claim.

\noindent \eqref{HelpDiamond}. If $\Gamma, A \vdash B$, then $\Gamma \vdash A \To B$ by Lemma \ref{thm:DT}. Consequently, $\Box \Gamma \vdash \Box (A \To B)$ by the first claim. By \eqref{Lm:KDiam} we have $\Box \Gamma \vdash \Box (A \To B) \To (\Diamond A \To \Diamond B)$. We obtain $\Box\Gamma \vdash \Diamond A \To \Diamond B$ using \eqref{L:ToModusPonens} and this entails $\Box \Gamma, \Diamond A \vdash \Diamond B$ by Lemma \ref{thm:DT}. 
\end{proof}


To conclude this section, note that a weak replacement theorem holds for modal generalised Nelson logics. In fact, many logics in the family are closed under replacement of strongly equivalent formulas, but it is well-known that Nelson logics and their modal extensions only satisfy the following weaker property.

\begin{lemma}\label{thm:PosNegReplacement}
 Let $A$ be an arbitrary formula be any formula built from $\{ p_1, \ldots, p_n \} $ using operators $\neg, \land, \lor, \Box$ and $\Diamond$. Let $D$ and $D'$ be arbitrary formulas. Then, for all $\mathbf{L}_\Box$, the following rule is admissible:
\begin{gather}
    \dfrac{D \tot D' \qquad \neg D \tot \neg D'}{A(p/D) \Tot_\mathbf{L} A(p/D')} \label{Lm:Replacement}
\end{gather}
\noindent where $A(p/B)$ is the result of substituting $B$ for $p$ in $A$.
 \end{lemma}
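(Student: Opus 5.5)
The plan is to prove, by simultaneous induction on the construction of $A$ from $\{p_1,\ldots,p_n\}$ using $\neg,\land,\lor,\Box,\Diamond$, the stronger claim that whenever $\vdash D\tot D'$ and $\vdash \neg D\tot \neg D'$, we have both
\[
\vdash A(p/D)\tot A(p/D') \quad\text{and}\quad \vdash \neg A(p/D)\tot \neg A(p/D').
\]
The statement then follows by a single application of Lemma \ref{lem:StrongToInt} to each conjunct. Strengthening to the positive/negative pair is essential. Since $\neg$ is not congruential, the hypothesis $D\tot D'$ alone cannot propagate through a negation; carrying the negated version along is what makes the induction go through.

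The base cases are as follows. If $A=p$, the claim is exactly the pair of hypotheses. If $A=q\neq p$, both sides are the same formula and \eqref{L:id} suffices.

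The propositional inductive steps run as follows.
\begin{itemize}
\item For $A=\neg B$, the positive claim for $\neg B$ is the negative claim for $B$ given by the induction hypothesis. The negative claim $\neg\neg B(p/D)\tot\neg\neg B(p/D')$ follows from the positive claim for $B$ by chaining with \eqref{L:DN} via \eqref{L:Trans}.
\item For $A=B\land C$ and $A=B\lor C$, the positive claims follow from monotonicity of $\land,\lor$ under $\to$, which is derivable from \eqref{L:andElim}, \eqref{L:andIntro}, \eqref{L:orIntro}, \eqref{L:orElim} and \eqref{L:Trans}.
\item For $\land$ and $\lor$, the negative claims are reduced by \eqref{L:DeMorgan} to the positive $\lor$ or $\land$ case applied to the negative induction hypotheses for $B$ and $C$.
\end{itemize}

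The modal steps are the main work.

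For $A=\Box B$, the positive claim needs a monotonicity rule: from $\vdash E\to E'$ infer $\vdash \Box E\to\Box E'$. I would obtain this as follows. From $\vdash E\to E'$, Lemma \ref{lem:StrongToInt} gives $\vdash E\To_{\mathbf{L}}E'$, so $E\vdash E'$ by Lemma \ref{thm:DT}, and \eqref{HelpBox} gives $\Box E\vdash \Box E'$, hence $\vdash\Box E\To_{\mathbf{L}}\Box E'$. The same route through \eqref{HelpDiamond} handles $\Diamond$.

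The obstacle is that this yields only the elliptical implication $\To_{\mathbf{L}}$, whereas the induction wants the strong biconditional $\tot$. Strong $\to$ requires more in the contrapositive logics. To upgrade, I would use \eqref{L:toPresAntipres}: it suffices to have both $\Box E\To_{\mathbf{L}}\Box E'$ and $\neg\Box E'\To_{\mathbf{L}}\neg\Box E$. By \eqref{Lm:DiaNeg}, the latter reduces to $\Diamond\neg E'\To_{\mathbf{L}}\Diamond\neg E$, which follows via \eqref{HelpDiamond} from $\vdash\neg E'\To_{\mathbf{L}}\neg E$. That last formula is available from the negative induction hypothesis; again this is why both hypotheses are carried along.

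So I would prove, for each modal step, the pair $\vdash\Box E\tot\Box E'$ and $\vdash\neg\Box E\tot\neg\Box E'$ at once. For the negative claim, \eqref{Lm:DiaNeg} rewrites $\neg\Box E$ as $\Diamond\neg E$, and the $\Diamond$-monotonicity just established is applied to the negative induction hypothesis. The upgrade from elliptical to strong implication again uses \eqref{L:toPresAntipres} together with \eqref{Lm:BoxNeg}, \eqref{Lm:DiaNeg} and \eqref{L:DN}. The $\Diamond$ case is fully symmetric, using \eqref{Lm:BoxNeg} in place of \eqref{Lm:DiaNeg}.

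I expect the main obstacle to be exactly this bookkeeping: making sure every needed strong biconditional is produced with both its truth-preserving and falsity-reversing halves. If it becomes unwieldy, a fallback is to weaken the induction claim to elliptical biconditionals $\Tot_{\mathbf{L}}$ for both $A$ and $\neg A$, which is all the conclusion requires. The $\neg$, $\land$ and $\lor$ steps would then pass through Lemma \ref{lem:StrongToInt} and the intuitionistic behaviour of $\To_{\mathbf{L}}$ given by Lemma \ref{thm:DT}, and the modal steps would use only \eqref{HelpBox}, \eqref{HelpDiamond} and the duality axioms.
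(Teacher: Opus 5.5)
Your proof is correct. Your fallback is essentially the paper's argument, but your main route differs.

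\paragraph{The paper's route.} The paper inducts on the pair $\vdash A(D)\Tot_{\mathbf{L}}A(D')$, $\vdash \neg A(D)\Tot_{\mathbf{L}}\neg A(D')$. It notes that $\mathbf{L}_\Box$ is closed under passing from $E\Tot_{\mathbf{L}}E'$ to $\Box E\Tot_{\mathbf{L}}\Box E'$ and to $\Diamond E\Tot_{\mathbf{L}}\Diamond E'$, using \eqref{Lm:K}, \eqref{Lm:KDiam} and \eqref{Lm:Nec}. It gets the negative half of the $\Box$ step by applying the $\Diamond$-rule to the negative hypothesis and rewriting with \eqref{Lm:DiaNeg}.

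\paragraph{Your main route.} You carry strong biconditionals $\tot$ instead. This obliges you, in every modal step, to build the falsity-reversing half separately, e.g.\ $\neg\Box E'\To_{\mathbf{L}}\neg\Box E$ from $\vdash\neg E'\to\neg E$ via \eqref{Lm:DiaNeg} and \eqref{HelpDiamond}, and then to invoke \eqref{L:toPresAntipres}. This goes through for both modalities and both halves. In exchange, the $\neg$, $\land$ and $\lor$ steps go through directly with the $\mathbf{RWC}_{\To_{\mathbf{L}}}$ axioms for $\to$. The output also has the same shape as the premises.

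\paragraph{How the two compare.} The two induction claims are interderivable. From the paper's pair of elliptical equivalences, one application of \eqref{L:toPresAntipres}, together with \eqref{L:DN} for the negated pair, yields your strong pair. So your per-step upgrade could be postponed to a single step at the end. The strengthening only matters for $\mathbf{L}\in\mathbf{L_{10(c)}}$, since for $\mathbf{L}\in\mathbf{L_{1(c)}}$ the implications $\to$ and $\To_{\mathbf{L}}$ coincide. The paper's version is shorter simply because the stated conclusion only asks for $\Tot_{\mathbf{L}}$.
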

 \begin{proof} 
By induction on the complexity of $A$. We write $A(B)$ instead of $A(p/B)$, and $\Tot$ instead of $\Tot_\mathbf{L}$ in this proof.  
The base case is established as follows. If $A \neq p$, then $A(D) = A = A(D')$ and we are done. If $A = p$, then $A(D) = D$ and $A(D') = D'$ and so the claim follows directly from the assumption and Lemma \ref{lem:StrongToInt}. We show only the induction step where $A = \Box B$. By the induction hypothesis, we have $\vdash B(D) \Tot B(D')$ and $\vdash \neg B (D) \Tot \neg B(D')$. Since it can be easily shown, using \eqref{Lm:K}, \eqref{Lm:KDiam} and \eqref{Lm:Nec}, that $\mathbf{L}_\Box$ is closed under 
 \[ \dfrac{A \Tot B}{\Box A \Tot \Box B} \qquad\text{and}\qquad
 \dfrac{A \Tot B}{\Diamond A \Tot \Diamond B} \, ,\]
\noindent we obtain $\vdash \Box B(D) \Tot \Box B(D')$ and $\vdash \Diamond \neg B(D) \Tot \Diamond \neg B(D')$. The latter implies $\vdash \neg \Box B(D) \Tot \neg \Box B(D')$ using \eqref{Lm:DiaNeg} and Lemma \ref{lem:StrongToInt}. Since $(\Box B)(E) = \Box B(E)$, we obtain $\vdash A(D) \Tot A(D')$ and $\vdash \neg A(D) \Tot \neg A(D')$. This concludes the proof.
 \end{proof}

\subsection{Kripke models}

Our presentation of Kripke models for modal generalised Nelson logics aims to highlight their relationship with models for modal intuitionistic logic. As noted in Section \ref{sec:Intro}, one motivation for the present level of generality comes from the realisation that modal extensions of generalised Nelson logics can be carried out somewhat uniformly, namely with uniform axiom and rule templates,  once it is noted that all generalised Nelson logics contain $\mathbf{IPC}^+$. This uniformity is evident at the model theoretic level: to obtain modal generalised Nelson models, it is sufficient to (i) add to generalised Nelson models a binary, Kripke--style modal accessibility relation $R$; and (ii) to impose well-known confluence conditions relating $R$ and the partial order $\leqslant$, in the style of modal intuitionistic models \cite{Bozic1984-BOIMFN,Simpson1994}.  

Finally,  whenever $\mathbf{L} \in \{\mathbf{B_{10(c)}},\mathbf{BN_{10(c)}}\}$, we assume the new Condition \eqref{eq:RF}, which relates the set of inconsistent states $F$ and the modal accessibility relation.

\begin{definition}[Modal Models]\label{def:ModalGenNelsonModel}
    A \emph{modal generalised Nelson $\mathbf{L}_\Box$-model}, or simply $\mathbf{L}_\Box$-model, is a structure $\M_{\mathbf{L}_\Box} = \langle \M_{\mathbf{L}},R\rangle$ (for $\mathbf{L}$ from Table \ref{tab:DunnLogics}) where:
\begin{itemize}
    \item $\M_{\mathbf{L}} = \langle K, \leqslant,F, V\rangle$ is a generalised Nelson $\mathbf{L}$-model in $\mathcal{N}_{\mathbf{L}}$;
    \item $R \subseteq K^2$ is a binary relation such that:
    \begin{gather}
    x \leqslant y, \ yRz\text{  only if  } \exists w(xRw, w \leqslant z)  \label{eq:Confluence1}\\
    x \leqslant y, \ xRz \text{ only if } \exists w(yRw, z \leqslant w)
    \label{eq:Confluence2}\\
    y \in F, Rxy \text{ only if } x \in F \label{eq:RF}
    \end{gather}
    \item $V$ is recursively extended to the whole language $\mathcal{L}_{\Box\Diamond}$ ($\mathcal{L}^t_{\Box\Diamond}$ for $\mathbf{L} \in \mathbf{L_{10}}$) by updating Definition \ref{def:GenNelsonModel} as follows:
    
\smallskip
\begin{tabular}{lcllcl}
   $x \vTrue \Box A    $ & iff & $\forall y \in R(x)(y \vTrue A)$\hspace{5mm} & $x \vFalse \Box A   $ & iff & $\exists y \in R(x)(y \vFalse A)$\\
   $x \vTrue \Diamond A $ & iff & $ \exists y \in R(x)(y \vTrue A)$ & $x \vFalse \Diamond A $ & iff & $ \forall y\in R(x)( y \vFalse A)$
\end{tabular}
\smallskip
\end{itemize}
\noindent The class $\mathcal{N}_{\mathbf{L}_\Box}$ consists of all modal generalised Nelson $\mathbf{L}_\Box$-models $\M_{\mathbf{L}_\Box}$.
\end{definition}

The confluence conditions relate  $R$ and $\leqslant$, in such a way to lift hereditariness of verification and falsification along the partial order $\leqslant$. 
It is worth noting that, in the context of intuitionistic modal logic, the confluence conditions are introduced with the same aim but sometimes have a different formulation.\footnote{For example, the difference in confluence conditions in the logic $\mathbf{IK}$ of \cite{Simpson1994} is due to a different verification clause for $\Box$, which is used to break its duality with $\Diamond$. }

We note that our truth and falsity conditions for $\Box$ and $\Diamond$ are those of Odintsov and Wansing \cite{Odintsov2010-ODIMLW-2}, but there they are combined with a different Nelson-type propositional semantics. Our approach can therefore be seen as an extension of the Odintsov-Wansing approach to generalised Nelson logics.

Just like in the propositional setting, the proof of soundness is streamlined by the following three lemmas, which lift to a modal setting Lemmas \ref{lem:Valuation}, \ref{lem:Heredity} and \ref{lem:FSufficiency}. We note that the new Condition  \eqref{eq:RF} is required for Lemma  \ref{lem:ModalFSufficiency}, and that this is precisely the reason why, differently from \cite{Dunn2000}, we need to explicitly employ the set $F$ in the semantics for some modal generalised Nelson models.

\begin{lemma}\label{lem:ModalValuation}
    For all $\mathbf{L}_\Box$-models $\M_{\mathbf{L}_\Box}$:
    \begin{enumerate}
        \item If $\mathbf{L} = \mathbf{B_{1(0c)}}$, then $x \not\vTrue A$ only if  $x \vFalse A$;
        \item If $\mathbf{L} = \mathbf{N_{1(0c)}}$ then $x \vTrue A$ only if  $x \not\vFalse A$.
    \end{enumerate}
\end{lemma}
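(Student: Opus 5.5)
We prove both items simultaneously by induction on the complexity of $A \in \mathcal{L}_{\Box\Diamond}$ (or $\mathcal{L}^t_{\Box\Diamond}$). The proof is the same as that of Lemma \ref{lem:Valuation}, with two new induction steps for $\Box$ and $\Diamond$.

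\emph{Non-modal cases.} The base case for propositional variables follows from Condition \eqref{eq:NoGaps} in item 1 and from Condition \eqref{eq:NoGluts} in item 2. The constant $t$ only occurs when $\mathbf{L} \in \{\mathbf{B_{10(c)}},\mathbf{BN_{10(c)}}\}$. For item 1 it is harmless, since $x \vTrue t$ always holds. Item 2 concerns $\mathbf{N}$-logics, whose language contains no $t$. The cases $\neg$, $\land$, $\lor$ are immediate from the bilateral clauses and the induction hypothesis. The case $B \to C$ is word for word the argument of Lemma \ref{lem:Valuation}:
\begin{itemize}
\item For item 1, the escape disjunct in the $\mathbf{B_{1(0c)}}$ falsification clause gives $x \not\vTrue B \to C$ only if $x \vFalse B \to C$ directly.
\item For item 2, suppose $x \vTrue B \to C$ and $x \vFalse B \to C$. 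Then $x \vTrue B$ and $x \vFalse C$. By reflexivity of $\leqslant$ we get $x \vTrue C$, which contradicts the induction hypothesis for $C$.
\end{itemize}
None of this uses any frame conditions involving $R$.

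\emph{Modal cases, item 1} ($\mathbf{B}$-logics). Suppose $x \not\vTrue \Box B$. Then some $y \in R(x)$ has $y \not\vTrue B$. By the induction hypothesis $y \vFalse B$, so $x \vFalse \Box B$. Now suppose $x \not\vTrue \Diamond B$. Then every $y \in R(x)$ has $y \not\vTrue B$, hence $y \vFalse B$ by the induction hypothesis. So $x \vFalse \Diamond B$. This also covers the case $R(x)=\emptyset$, where $\Diamond B$ is vacuously falsified.

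\emph{Modal cases, item 2} ($\mathbf{N}$-logics). Suppose $x \vTrue \Box B$ and $x \vFalse \Box B$. Then some $y \in R(x)$ has $y \vFalse B$, and also $y \vTrue B$ because $\Box B$ is true at $x$. This contradicts the induction hypothesis. Suppose $x \vTrue \Diamond B$. Then some $y \in R(x)$ has $y \vTrue B$, hence $y \not\vFalse B$ by the induction hypothesis, so $x \not\vFalse \Diamond B$.

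The argument is routine. The only point needing care is to run the induction on truth and falsity simultaneously for each item. Each modal step relies on the quantifier duality between the truth and falsity clauses: $\forall$/$\exists$ for $\Box$ and $\exists$/$\forall$ for $\Diamond$. This duality is exactly what lets the pointwise hypothesis at $R$-successors transfer to $x$. The confluence conditions \eqref{eq:Confluence1}, \eqref{eq:Confluence2} and \eqref{eq:RF} are not needed.
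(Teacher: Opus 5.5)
Your proof is correct and follows the paper's approach: induction on the complexity of $A$, with the non-modal cases handled exactly as in Lemma \ref{lem:Valuation} and the modal steps following directly from the induction hypothesis via the $\forall$/$\exists$ duality of the $\Box$ and $\Diamond$ clauses, without using any condition on $R$. You also spell out the $\Diamond$ case and re-run the propositional steps instead of citing them. This is slightly more careful than the paper's one-line proof, since Lemma \ref{lem:Valuation} is stated only for the non-modal language.
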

\begin{proof}
    By induction on the complexity of $A$. By Lemma \ref{lem:Valuation}, it suffices to show the induction step case $A = \Box B$, which follows by straightforward application of the induction hypothesis.
\end{proof}

\begin{lemma}\label{lem:ModalHeredity}
    For all $\mathbf{L}_\Box$-models $\M_{\mathbf{L}_\Box}$:
    \begin{enumerate}
        \item $x \leqslant y, x \vTrue A \text{ only if } y \vTrue A$;
        \item $x \leqslant y, x \vFalse A \text{ only if } y \vFalse A$.
    \end{enumerate}
\end{lemma}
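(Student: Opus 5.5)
We argue by simultaneous induction on the complexity of $A$, proving claims 1 and 2 together. Implicitly, as in Lemma \ref{lem:Heredity}, $\mathbf{L} \neq \mathbf{B_{1(0)}}$, since these logics are excluded from the family.

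The base case and the cases for $\neg$, $\land$, $\lor$, $t$ go exactly as in the proof of Lemma \ref{lem:Heredity}. For $t$ we use $x \vTrue t$ for all $x$ and Condition \eqref{eq:FHeredity} for falsification. The case $A = B \to C$ is also as in Lemma \ref{lem:Heredity}. That argument only uses the induction hypothesis for $B$ and $C$, transitivity of $\leqslant$, Condition \eqref{eq:Conn} when $\mathbf{L} = \mathbf{B_{1(0)c}}$, and the no-gaps property. The last of these is now provided for the modal language by Lemma \ref{lem:ModalValuation}. So this case goes through verbatim once the induction hypothesis covers modal subformulas.

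The new cases are $\Box B$ and $\Diamond B$, and each needs one of the confluence conditions.
\begin{itemize}
\item Truth of $\Box B$. Let $x \leqslant y$, $x \vTrue \Box B$ and $yRz$. By \eqref{eq:Confluence1} there is $w$ with $xRw$ and $w \leqslant z$. Then $w \vTrue B$, and the induction hypothesis gives $z \vTrue B$. Hence $y \vTrue \Box B$.
\item Falsity of $\Box B$. Let $x \vFalse \Box B$, so some $z$ with $xRz$ has $z \vFalse B$. By \eqref{eq:Confluence2} there is $w$ with $yRw$ and $z \leqslant w$. The induction hypothesis gives $w \vFalse B$, so $y \vFalse \Box B$.
\item Truth of $\Diamond B$. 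This mirrors the falsity case for $\Box$ and uses \eqref{eq:Confluence2}.
\item Falsity of $\Diamond B$. This mirrors the truth case for $\Box$ and uses \eqref{eq:Confluence1}.
\end{itemize}

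The only delicate point is the falsification clause for $\to$ in $\mathbf{B_{1(0)c}}$, which uses the escape disjunct. There we must check that the appeal to Lemma \ref{lem:Valuation} carries over: it is applied to $C$, which may now contain modalities, so we invoke Lemma \ref{lem:ModalValuation} in its place. Everything else is routine, so we would present only the $\Box$ and $\Diamond$ cases in detail.
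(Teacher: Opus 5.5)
Your proof is correct and follows the paper's argument: induction on $A$, with the propositional cases reduced to Lemma \ref{lem:Heredity}, truth of $\Box B$ and falsity of $\Diamond B$ handled by \eqref{eq:Confluence1}, and falsity of $\Box B$ and truth of $\Diamond B$ handled by \eqref{eq:Confluence2}. Your two extra remarks are both right, and the paper's terse proof leaves them implicit: the restriction $\mathbf{L} \neq \mathbf{B_{1(0)}}$ has to be tacitly assumed, and the escape-disjunct step in the $\to$ case must now invoke Lemma \ref{lem:ModalValuation} instead of Lemma \ref{lem:Valuation}.
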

\begin{proof}
    By double induction on the complexity of $A$. By Lemma \ref{lem:Heredity}, it suffices to show the induction step case $A = \Box B$. The case $A = \Diamond B$ is established similarly.
    
    \noindent \emph{1.} Assume $x \vTrue \Box B$ and $x \leqslant y$. Then, assume for some arbitrary $z$ that $Ryz$ to show $z \vTrue B$. By $x \leqslant y$, $Ryz$ and Condition \eqref{eq:Confluence1} we infer that there is $w \leqslant z$ such that $Rxw$. By $x \vTrue \Box B$ and $Rxw$ we infer that $w \vTrue B$, by which we conclude by $w \leqslant z$ and Lemma \ref{lem:ModalHeredity} that $z \vTrue B$.

    \noindent \emph{2.} Assume $x \vFalse \Box B$ and $x \leqslant y$.  By $x \vFalse \Box B$ we infer that there is $z$ such that $Rxz$ and $z \vFalse B$. By $x \leqslant y$, $Rxz$ and Condition \eqref{eq:Confluence2} we infer that there is $w\geqslant z$ such that $Ryw$. By $z \vFalse B$, $z \leqslant w$ and Lemma \ref{lem:ModalHeredity} we infer that $w \vFalse B$, whence we conclude by $Ryw$ that $y \vFalse \Box B$.
\end{proof}

\begin{lemma}\label{lem:ModalFSufficiency}
    For all $\mathbf{L}_\Box$-models $\M_{\mathbf{L}_\Box}$ such that $\mathbf{L} \in  \mathbf{L_{10(c)}}$, if $x \vTrue A \land \neg A$ for some $A \in \mathcal{L}_{\mathbf{L}}$, then $x \in F$.
\end{lemma}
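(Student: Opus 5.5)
The plan is to prove the claim by induction on the complexity of $A$, reusing the propositional argument of Lemma \ref{lem:FSufficiency} wherever possible. The base case, and the steps for $t$, $\neg$, $\land$, $\lor$ and $\to$, go exactly as before. In the base case Condition \eqref{eq:Ack} gives $x \in F$ from $x \vTrue p$ and $x \vFalse p$. For $t$, $x \vFalse t$ means $x \in F$ directly. For $\neg B$, $x \vTrue \neg B \land \neg\neg B$ amounts to $x \vFalse B$ and $x \vTrue B$, so the induction hypothesis applies to $B$. For $B \land C$, truth gives truth of both conjuncts and falsity gives falsity of one of them, so the hypothesis applies to that conjunct. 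The case $B \lor C$ is dual. For $B \to C$, falsification in every $\mathbf{L_{10(c)}}$ yields $x \vTrue B$ and $x \vFalse C$, and truth at $x$ with $x \leqslant x$ yields $x \vTrue C$, so the hypothesis applies to $C$. One point needs checking: for $\mathbf{B_{10(c)}}$ the falsity clause also has the escape disjunct. But that disjunct is $x \not\vTrue B \to C$, which contradicts the assumption $x \vTrue B \to C$, so only the counterexample disjunct can apply.

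The only new steps are the modal ones, and this is where Condition \eqref{eq:RF} is used. Suppose $A = \Box B$, with $x \vTrue \Box B$ and $x \vFalse \Box B$. By the falsity clause there is $y$ with $Rxy$ and $y \vFalse B$. By the truth clause, $y \vTrue B$. The induction hypothesis applied to $B$ at $y$ gives $y \in F$, and Condition \eqref{eq:RF} together with $Rxy$ then gives $x \in F$.

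The case $A = \Diamond B$ is symmetric. Truth gives some $y$ with $Rxy$ and $y \vTrue B$, and falsity of $\Diamond B$ at $x$ gives $y \vFalse B$ for that same $y$. The induction hypothesis yields $y \in F$, and Condition \eqref{eq:RF} yields $x \in F$.

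I expect no real obstacle. The one point needing care is that the induction hypothesis is applied at a different state $y$ rather than at $x$. So the statement must be proved for all states simultaneously, and \eqref{eq:RF} is exactly the condition needed to transfer membership in $F$ back along $R$.
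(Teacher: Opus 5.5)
Your proof is correct and follows the paper's argument. Both go by induction on $A$, inherit the propositional cases from Lemma \ref{lem:FSufficiency}, and handle $\Box B$ by taking an $R$-successor $y$ that both verifies and falsifies $B$, then pulling $y \in F$ back to $x$ via Condition \eqref{eq:RF}. You also spell out the $\Diamond$ case and the escape disjunct for $\mathbf{B_{10(c)}}$, which the paper leaves implicit, and both go through exactly as you describe.
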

\begin{proof}
    By induction on the complexity of $A$. By Lemma \ref{lem:FSufficiency}, it suffices to show the induction step case $A= \Box B$. If $x \vTrue \Box B$ and $x \vFalse \Box B$, then there is $y$ such that $Rxy$ and both $y \vTrue B$, $y \vFalse B$. Hence, by the induction hypothesis we infer that $y \in F$. We conclude by $y \in F$, $Rxy$ and Condition \eqref{eq:RF} that $x \in F$.
\end{proof}

\subsection{Soundness and completeness}

This section contains the main technical result of the paper, i.e. a uniform strong soundness and completeness proof for modal generalised Nelson logics. Soundness is, as usual, the easier part of the proof.

\begin{theorem}[Modal soundness]\label{thm:ModSoundness}
     $\Gamma \vdash_{\mathbf{L}_\Box} A$ only if $\Gamma \models_{\mathcal{N}_{\mathbf{L}_\Box}} A$.
\end{theorem}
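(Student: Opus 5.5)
The proof goes by induction on the length of $\mathbf{L}_\Box$-derivations from $\Gamma$, in the local sense of Definition \ref{def:Lderivable}. Hypotheses from $\Gamma$ are trivial. Steps by elliptical modus ponens are handled by Lemma \ref{lem:SemInt}, using reflexivity of $\leqslant$: if $x \vTrue A \To_{\mathbf{L}} B$ and $x \vTrue A$, then $x \vTrue B$.

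It therefore suffices to show that every theorem of $\mathbf{L}_\Box$ is valid in $\mathcal{N}_{\mathbf{L}_\Box}$. This is a nested induction on the length of the proof of the theorem, where the rules are elliptical modus ponens and \eqref{Lm:Nec}. Necessitation preserves validity trivially: if $A$ holds at every state, then $\Box A$ holds at every state. For the propositional axioms of $\mathbf{L}$, I rerun the argument of Theorem \ref{thm:Soundness} (and Dunn's), now over the modal language. That argument used only Lemmas \ref{lem:Valuation}, \ref{lem:Heredity}, \ref{lem:FSufficiency} and \ref{lem:SemInt}, and these are replaced by their modal versions, Lemmas \ref{lem:ModalValuation}, \ref{lem:ModalHeredity} and \ref{lem:ModalFSufficiency}. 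Lemma \ref{lem:SemInt} carries over unchanged, since its proof only unfolds $\to$, $\land$ and $t$ and holds for arbitrary subformulas. The point to check here is that the modal lemmas hold for every formula of $\mathcal{L}_{\Box\Diamond}$, including the $\Diamond$ cases. This is where \eqref{eq:Confluence1}, \eqref{eq:Confluence2} and \eqref{eq:RF} enter; for $\Diamond$ in Lemma \ref{lem:ModalFSufficiency}, both $\exists y (y\vTrue B)$ and $\forall y(y \vFalse B)$ give a $y$ with $Rxy$ that is a glut.

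The modal axioms are checked one at a time, using Lemma \ref{lem:SemInt} for every $\To_{\mathbf{L}}$ and verifying both truth and, where $\to$ is outermost, the extra conditions required by Table \ref{tab:DunnModels}.

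\begin{itemize}
\item \eqref{Lm:K}. Let $y \geqslant x$ with $y \vTrue \Box(A\To B)$, and let $z\geqslant y$ with $z \vTrue \Box A$. Take $w$ with $Rzw$. By \eqref{eq:Confluence1} there is $v$ with $Ryv$ and $v\leqslant w$. By Lemma \ref{lem:ModalHeredity}, $w \vTrue A\To B$, and since $w \vTrue A$, we get $w\vTrue B$.
\item \eqref{Lm:KDiam}. The argument is the same, but uses \eqref{eq:Confluence1} from $y\leqslant z$ applied to the witness of $\Diamond A$.
\item \eqref{Lm:DiamDis} and \eqref{Lm:DunnBox}, truth part. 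This is immediate pointwise: the truth clause for $\to$ quantifies over $y \geqslant x$, and at each such $y$ the inclusion holds classically.
\item \eqref{Lm:BoxNeg} and \eqref{Lm:DiaNeg}. By the clauses, $\Box\neg A$ and $\neg\Diamond A$ have identical truth and falsity conditions at every state, and likewise for the $\Diamond$ version. Hence every verification or falsification condition of $\to$ is satisfied trivially.
\end{itemize}

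The main obstacle is the $\to$-specific side conditions for \eqref{Lm:DiamDis} and \eqref{Lm:DunnBox}. For $\mathbf{L}\in\mathbf{L_{10(c)}}$, one must also show that falsity is preserved backwards. For \eqref{Lm:DiamDis}, if $y \vFalse \Diamond A\lor\Diamond B$, then all successors falsify $A$ and $B$, so $y\vFalse\Diamond(A\lor B)$. For \eqref{Lm:DunnBox}, if $y\vFalse \Box A$ and $y \vFalse \Diamond B$, then some successor falsifies $A$ and all successors falsify $B$, so that successor falsifies $A\lor B$.

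Theoremhood only requires truth of the formula at every state, so the falsity clauses of the outer $\to$, including the escape disjunct for $\mathbf{B_{1(0)c}}$, play no role. For $\tot$, which is a conjunction of two $\to$'s, both directions are handled the same way. Assembling these cases completes the induction.
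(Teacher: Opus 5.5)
Your proposal is correct and takes essentially the paper's route. You induct on derivations, inherit the propositional axioms from the argument of Theorem \ref{thm:Soundness}, and check the modal axioms and \eqref{Lm:Nec} as the paper does: via \eqref{eq:Confluence1} for the $\mathbf{K}$-axioms, and pointwise for \eqref{Lm:DiamDis}--\eqref{Lm:DiaNeg}, including backward preservation of falsity for $\mathbf{L_{10(c)}}$. You rerun that argument with Lemmas \ref{lem:ModalValuation}--\ref{lem:ModalFSufficiency} rather than just citing the theorem, which is the more careful reading. One small correction: Lemma \ref{lem:SemInt} does not carry over by mere unfolding of $\to$, $\land$ and $t$. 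For the $\mathbf{L_{10(c)}}$ cases its right-to-left direction uses no-gaps or no-gluts, heredity and $F$-sufficiency; e.g.\ for $\mathbf{B_{10(c)}}$, from $y \vFalse B$ and $y \not\vFalse A$ one gets $y \vTrue A$, hence $y \vTrue B$, hence $y \in F$. So it holds in modal models because of the modal lemmas, not independently of them.
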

\begin{proof}
By induction on the length of  $\mathbf{L}_\Box$-derivations. By Theorem \ref{thm:Soundness}, it suffices to show that \eqref{Lm:K}--\eqref{Lm:DiaNeg} are valid and that \eqref{Lm:Nec} preserves validity in $\mathcal{N}_{\mathbf{L}_\Box}$. We show only some representative cases.

\noindent \eqref{Lm:KDiam}. We need to show that $x \vTrue \Box (A \To_{\mathbf{L}} B)$ only if $x \vTrue \Diamond A \To_{\mathbf{L}} \Diamond B$, for all $\mathbf{L}_\Box$. To this aim, it suffices by Lemma \ref{lem:SemInt} to assume $x \vTrue \Box (A \To_{\mathbf{L}} B)$ and show that $y \vTrue \Diamond B$ for some arbitrary $y \geqslant x$ such that $y \vTrue \Diamond A$. $y \vTrue \Diamond A$ implies that  there is $z$ such that $Ryz$ $z \vTrue A$. By $x \leqslant y$, $Ryz$ and Condition \eqref{eq:Confluence1} we infer that there is $w \leqslant z$ such that $Rxw$. By $x \vTrue \Box (A \To_{\mathbf{L}} B)$ and $Rxw$ we infer $w \vTrue A \To_{\mathbf{L}} B$, which implies by Lemma \ref{lem:SemInt}, $w \leqslant z$ and $z \vTrue A$ that $z \vTrue B$. We conclude by $Ryz$ that $y \vTrue \Diamond B$. 

\noindent \eqref{Lm:DunnBox}. 
If $\mathbf{L} \in \mathbf{L_{10(c)}}$, we need to show that (i)  $x \vTrue \Box (A \lor B)$ only if $x \vTrue \Box A \lor \Diamond B$ and that  (ii) $x \vFalse \Box A \lor \Diamond B$ only if $x \vFalse \Box (A \lor B)$. To show (i), assume $x \vTrue \Box (A \lor B)$ and $x \not\vTrue \Box A$ to show $x \vTrue \Diamond B$. By $x \not\vTrue \Box A$ we infer that there is $y$ such that $Rxy$ and $y \not\vTrue A$, which implies that $y \vTrue B$. Hence, we conclude by $Rxy$ that $x \vTrue \Diamond B$. To show (ii), assume $x \vFalse \Box A$ and  $x \vFalse \Diamond B$ to show $x \vFalse \Box (A \lor B)$. By $x \vFalse \Box A$ we infer that there is $y$ such that $Rxy$ and $y \vFalse A$. By $Rxy$ and $x \vFalse \Diamond B$ we infer that $y \vFalse B$. By $y \vFalse B$ and $y \vFalse A$ we infer $y \vFalse A \lor B$, which implies $x \vFalse \Box (A \lor B)$. If $\mathbf{L} \notin \mathbf{L_{10(c)}}$, then  $\mathbf{L} \in \mathbf{L_{1(c)}}$ and we need to show only (i), reasoning as above.

\noindent \eqref{Lm:BoxNeg}.  The following two chains of equivalences hold for all $\mathbf{L}_\Box$. (i) $x \vTrue \Box \neg A$ iff $y \vTrue \neg A$ for all $y$ such that $Rxy$ iff $y \vFalse A$ for all $y$ such that $Rxy$ iff $x \vFalse \Diamond A$ iff $x \vTrue \neg \Diamond A$. (ii) $x \vFalse \neg \Diamond A$ iff $x \vTrue \Diamond A$ iff there is $y$ such that $Rxy$ and $y \vTrue A$ iff there is $y$ such that $Rxy$ and $y \vFalse \neg A$ iff $x \vFalse \Box \neg A$. If $\mathbf{L} \in \mathbf{L_{10(c)}}$, then (i) and (ii) suffice to establish the validity of \eqref{Lm:BoxNeg} in $\mathbf{L}_\Box$, otherwise $\mathbf{L} \in \mathbf{L_{1(c)}}$ and (i) alone suffices. The validity of \eqref{Lm:DiaNeg} is established similarly.

\noindent \eqref{Lm:Nec}. For all $\mathbf{L}_\Box$, assume $x \vTrue A$ for all $x$. Then, we conclude that $y \vTrue A$ for all $y$ such that $Rxy$.
\end{proof}

Turning to completeness, we adopt the same canonical model construction strategy of Section \ref{sec:PropCompleteness}. In devising the canonical modal accessibility relation, we resort to the familiar modal Nelson and modal intuitionistic definition.

\begin{definition}\label{def:ModCanModel}
The \emph{canonical $\mathbf{L}_\Box$-structure} is $\langle K_{\mathbf{L}_\Box},(F_{\mathbf{L}_\Box}), \subseteq, R_{\mathbf{L}_\Box}, V_{\mathbf{L}_\Box} \rangle$ where
\begin{itemize}
\item $K_{\mathbf{L}}$, $F_{\mathbf{L}_\Box}$ and $V_{\mathbf{L}_\Box}$ are defined over $\mathbf{\mathbf{L}_\Box}$- theories as in Definition \ref{def:PropCanModel};
\item $R_{\mathbf{L}_\Box} \Gamma \Delta$ iff (i) $\Box A \in  \Gamma$ only if $A \in \Delta$ and (ii) $\neg A \in  \Delta$ only if $\neg \Box A \in \Gamma$.
\end{itemize}
\end{definition}

\begin{lemma}\label{lem:ModCanonicalModel}
The canonical $\mathbf{L}_\Box$-structure is a modal generalised Nelson model in $\mathcal{N}_{\mathbf{L}_\Box}$.
\end{lemma}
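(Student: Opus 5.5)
By Lemma \ref{lem:CanonicalModel} (whose proof goes through verbatim for $\mathbf{L}_\Box$-theories, since $\mathbf{L}_\Box$ contains all axioms of $\mathbf{L}$ and is closed under elliptical modus ponens), the reduct $\langle K_{\mathbf{L}_\Box},(F_{\mathbf{L}_\Box}),\subseteq,V_{\mathbf{L}_\Box}\rangle$ is already in $\mathcal{N}_{\mathbf{L}}$. It therefore remains to verify the three frame conditions \eqref{eq:Confluence1}, \eqref{eq:Confluence2} and \eqref{eq:RF} for $R_{\mathbf{L}_\Box}$. Each will be obtained by constructing a suitable prime theory with Lemma \ref{lem:Lindenbaum}, using the derived rules of Lemma \ref{lem:CanonicalModel-modal-Help} and the modal axioms.

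\emph{Condition \eqref{eq:RF}.} Assume $R\Gamma\Delta$ and $\neg t\in\Delta$. Clause (ii) of the definition of $R$ then gives $\neg\Box t\in\Gamma$. By \eqref{Lm:DiaNeg} and Lemma \ref{lem:StrongToInt}, this yields $\Diamond\neg t\in\Gamma$. We still need $\neg t\in\Gamma$. I would derive $\vdash \neg\Box t\To\neg t$ from \eqref{L:Ack} together with necessitation: $\Box t$ should follow from $t$ via $\vdash t\To \Box t$. The idea is to use $\vdash A \tot (t\to A)$ with $A:=\Box t$, and note that $\Box t$ is a theorem because $t$ is (indeed $t\to t$ gives $t$ by \eqref{L:Ack}). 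So $\Box t\in\Gamma$ for every theory $\Gamma$. Combining this with $\neg\Box t\in\Gamma$ gives $\Box t\land\neg\Box t\in\Gamma$, and the derivation $\vdash (A\land\neg A)\To\neg t$ used in the proof of Lemma \ref{lem:CanonicalModel} then gives $\neg t\in\Gamma$.

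\emph{Condition \eqref{eq:Confluence2}.} Assume $\Gamma\subseteq\Gamma'$ and $R\Gamma\Delta$. We want a prime $\Delta'\supseteq\Delta$ with $R\Gamma'\Delta'$. Let $\Sigma=\Delta\cup\{A\mid\Box A\in\Gamma'\}$ and $\Pi=\{\neg A\mid \neg\Box A\notin\Gamma'\}$. We must show $\Sigma\not\vdash\Pi$. Suppose otherwise, so that $\Box B_1,\dots,\Box B_k\in\Gamma'$ and $D\in\Delta$ satisfy $B_1\land\dots\land B_k\land D\vdash \neg A_1\lor\dots\lor\neg A_m$. Using \eqref{HelpDiamond}, we get $\Box\bigwedge B_i,\Diamond D\vdash\Diamond\neg\bigvee A_j$, via De Morgan. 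Then \eqref{Lm:DiaNeg} together with the box/diamond distribution gives $\neg\Box \bigvee A_j$, and further the $\Diamond$-over-$\lor$ axiom \eqref{Lm:DiamDis} and \eqref{Lm:DiaNeg} yield $\neg\Box A_j\in\Gamma'$ for some $j$. The delicate point is obtaining $\Diamond D\in\Gamma'$ from $D\in\Delta$ and $R\Gamma\Delta$. Clause (ii) gives this via $\neg\neg D\in\Delta$: we get $\neg\Box\neg D\in\Gamma$, hence $\Diamond\neg\neg D\in\Gamma$, which is equivalent to $\Diamond D$ by \eqref{L:DN} and the $\Diamond$-monotonicity rule. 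Since $\Gamma\subseteq\Gamma'$, we conclude $\Diamond D\in\Gamma'$.

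\emph{Condition \eqref{eq:Confluence1}.} Assume $\Gamma\subseteq\Gamma'$ and $R\Gamma'\Delta'$. We need a prime $\Delta\subseteq\Delta'$ with $R\Gamma\Delta$. Here the Lindenbaum separation runs the other way: $\Delta$ must contain $\{A\mid\Box A\in\Gamma\}$, which lies inside $\Delta'$ because $\Gamma\subseteq\Gamma'$, and must avoid $(\mathcal{L}\setminus\Delta')\cup\{\neg A\mid\neg\Box A\notin\Gamma\}$. Showing non-derivability reduces to the following: if $\Box\bigwedge B_i\in\Gamma$ and $\bigwedge B_i\vdash C\lor\neg A$ with $C\notin\Delta'$ and $\neg\Box A\notin\Gamma$, we must reach a contradiction. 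From $\Box(C\lor\neg A)\in\Gamma$, axiom \eqref{Lm:DunnBox} gives $\Box C\lor\Diamond\neg A\in\Gamma$, hence $\Box C\in\Gamma$ or $\neg\Box A\in\Gamma$. The first disjunct gives $\Box C\in\Gamma'$ and so $C\in\Delta'$; the second contradicts the assumption. I expect this step, and in particular bundling finitely many $\neg A_j$ into one disjunct via \eqref{Lm:DiamDis} and De Morgan, to be the main obstacle and the place where axiom \eqref{Lm:DunnBox} is essential. Finally, non-triviality of each constructed theory is guaranteed because the avoided set $\Pi$ is non-empty.
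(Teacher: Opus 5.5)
Your route matches the paper's. You use the same Lindenbaum sets for \eqref{eq:Confluence1} and \eqref{eq:Confluence2}, the same rules \eqref{HelpBox} and \eqref{HelpDiamond}, the same axioms \eqref{Lm:DunnBox}, \eqref{Lm:DiamDis} and \eqref{Lm:DiaNeg}, and the same trick of getting $\Diamond D\in\Gamma$ from $\neg\neg D\in\Delta$. Your handling of \eqref{eq:RF} through $\vdash\Box t$ and $\vdash(A\land\neg A)\To_{\mathbf{L}}\neg t$ is a harmless variant of the paper's $\vdash\neg\Box t\To_{\mathbf{L}}\neg t$. There is one small slip: $\bigvee_j\neg A_j$ is equivalent to $\neg\bigwedge_j A_j$, not $\neg\bigvee_j A_j$; just apply \eqref{Lm:DiamDis} to $\Diamond\bigvee_j\neg A_j$ and then \eqref{Lm:DiaNeg}.

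\textbf{The gap is your last sentence.} In \eqref{eq:Confluence1} the avoided set contains $\mathcal{L}_{\mathbf{L}}\setminus\Delta'$, which is non-empty because $\Delta'$ is non-trivial. In \eqref{eq:Confluence2}, however, the avoided set is only $\Pi=\{\neg A\mid\neg\Box A\notin\Gamma'\}$, and this is empty as soon as $\Gamma'$ contains every $\neg\Box A$. Then Lemma~\ref{lem:Lindenbaum} does not apply. You would need a non-trivial prime theory extending $\Delta\cup\{A\mid\Box A\in\Gamma'\}$, and the second set may already be all of $\mathcal{L}_{\mathbf{L}}$. The paper's proof has exactly the same hole: it applies Lindenbaum's Lemma to $\Lambda^-=\{\neg C\mid\neg\Box C\notin\Delta\}$ without checking that it is non-empty.

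For the $\mathbf{N}$-logics the hole can be filled. By \eqref{L:id} and \eqref{Lm:Nec} we have $\vdash\Box(s\to s)$, so by \eqref{L:NoGluts} no non-trivial theory contains $\neg\Box(s\to s)$, and hence $\neg(s\to s)\in\Pi$.

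For the logics admitting gluts it cannot be filled, because the conclusion is false. Take the following structure:
$K=\{x,x',z,y\}$ with $x\leqslant x'$ and $z\leqslant y$ (plus reflexivity);
$R=\{(x,z),(z,z),(x',y),(y,y)\}$;
$V(p,y)=\{1,0\}$ and $V(p,u)=\emptyset$ for $u\neq y$.
This is a $\mathbf{BN_1}_\Box$-model; both confluence conditions are immediate to check. Here $y$ verifies and falsifies every formula, so $x'$ verifies both $\Box A$ and $\neg\Box A$ for every $A$.

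Let $\Gamma,\Gamma',\Delta$ be the sets of formulas verified at $x,x',z$. By Theorem~\ref{thm:ModSoundness} and Lemma~\ref{lem:ModalHeredity}, these are non-trivial prime theories (none contains $p$) with $\Gamma\subseteq\Gamma'$ and $R\Gamma\Delta$. But any $R$-successor of $\Gamma'$ must contain $\{A\mid\Box A\in\Gamma'\}=\mathcal{L}_{\mathbf{L}}$. So $\Gamma'$ has no successor in $K_{\mathbf{L}_\Box}$, and \eqref{eq:Confluence2} fails.

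With $V(p,u)=\{0\}$ for $u\neq y$ and $F=\{x',y\}$, the same frame refutes the lemma for $\mathbf{B_{10c}}$, i.e.\ modal $\mathbf{RM}$. Analogous models work for the other paraconsistent members. So your argument, like the paper's, establishes the lemma only for the $\mathbf{N}$-family. For the remaining logics the canonical construction itself has to be changed, not merely this step.
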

\begin{proof}
    By the argument in the proof of Lemma \ref{lem:CanonicalModel}, it suffices to show that \eqref{eq:Confluence1}--\eqref{eq:RF} are satisfied by the canonical $\mathbf{L}_\Box$-structure. We omit the subscript $\mathbf{L}_\Box$ in this proof. 

    \noindent \eqref{eq:Confluence1}. Assume that $\Gamma \subseteq \Delta$ and $R\Delta\Delta'$. Let $\Lambda^+ = \{A \mid \Box A \in \Gamma\}$ and $\Lambda^- = \{\neg C \mid \neg \Box C \notin \Gamma\} \cup \{B \mid B \notin \Delta'\}$. We will prove that $\Lambda^+ \nvdash \Lambda^-$. Assume, for the sake of contradiction, that this is not the case. Then, there is $\bigvee_{i=1}^n B_i \notin \Delta'$ and $ \bigvee_{j=1}^m \neg \Box C_j \notin \Gamma$ such that  $\Lambda^+ \vdash \bigvee_{i=1}^n B_i \vee \bigvee_{j=1}^m \neg C_j$. It follows (using Lemma \ref{lem:StrongToInt}) that:
    \begin{align*}
        &\Gamma \vdash\Box ( \bigvee\nolimits_{i=1}^n B_i \vee \bigvee\nolimits_{j=1}^m \neg C_j)& &\text{by \eqref{HelpBox} and definition of } \Lambda^+\\
        &\Gamma \vdash\Box \bigvee\nolimits_{i=1}^n B_i \vee  \Diamond\bigvee\nolimits_{j=1}^m \neg C_j& &\text{by \eqref{Lm:DunnBox}}\\
        &\Gamma \vdash\Box \bigvee\nolimits_{i=1}^n B_i \vee  \bigvee\nolimits_{j=1}^m \neg \Box C_j & &\text{by \eqref{Lm:DiamDis} and \eqref{Lm:DiaNeg}}
    \end{align*}
    \noindent Since $\Gamma$ is a prime $\mathbf{L}_\Box$-theory, either $\Box \bigvee_{i=1}^n B_i \in \Gamma$ or $\Diamond\bigvee_{j=1}^m \neg C_j \in \Gamma$.
    If $\Box \bigvee_{i=1}^n B_i \in \Gamma$, then $\Box \bigvee_{i=1}^n B_i \in \Delta$ by $\Gamma \subseteq \Delta$, and we would conclude that $\bigvee_{i=1}^n B_i \in \Delta'$ by  $R\Delta\Delta'$ (contradiction!).
    If $\bigvee_{j=1}^m \neg \Box C_j \notin \Gamma$, then we reach the desired contradiction. Hence, we conclude that $\Lambda^+ \nvdash \Lambda^-$. By Lindenbaum’s Lemma \ref{lem:Lindenbaum}, there is a prime $\mathbf{L}_\Box$-theory $\Gamma'$ such that $R\Gamma\Gamma'$ and $\Gamma' \subseteq \Delta'$.

    \noindent \eqref{eq:Confluence2}. Assume that $R\Gamma \Gamma'$ and $\Gamma \subseteq \Delta$. Let $\Lambda^+ = \{A \mid \Box A \in \Delta\}$ and $\Lambda^- = \{\neg C \mid \neg \Box C \notin \Delta\}$. 
    We will prove that $\Lambda^+, \Gamma' \nvdash \Lambda^-$. 
    Assume, for the sake of contradiction, that $\Lambda^+, \Gamma' \vdash \Lambda^-$. Then there is $\bigwedge^m_{i=1}B_i \in \Gamma'$  and $\bigvee_{j=1}^n \neg \Box C_j \notin  \Delta$ such that $\Lambda^+, \bigwedge^m_{i=1}B_i \vdash \bigvee_{j=1}^n \neg C_j$. By $\bigwedge^m_{i=1}B_i \in \Gamma'$ and $R\Gamma \Gamma'$ we infer (thanks to  \eqref{Lm:DiaNeg} and Lemma \ref{thm:PosNegReplacement}) that  $\Diamond \bigwedge^m_{i=1}B_i \in \Gamma \subseteq \Delta$. It follows (using Lemma \ref{lem:StrongToInt}) that:
    \begin{align*}
        &\Delta, \Diamond \bigwedge\nolimits^m_{i=1}B_i \vdash \Diamond \bigvee\nolimits_{j=1}^n \neg C_j& &\text{by \eqref{HelpDiamond} and definition of } \Lambda^+\\
        &\Delta \vdash \Diamond \bigvee\nolimits_{j=1}^n \neg C_j&  & \text{by }\Diamond \bigwedge\nolimits^m_{i=1}B_i \in \Delta \\
        &\Delta \vdash  \bigvee\nolimits_{j=1}^n \neg \Box C_j&  & \text{by \eqref{Lm:DiamDis} and  \eqref{Lm:DiaNeg}. }
    \end{align*}
\noindent Since $\Delta$ is a prime $\mathbf{L}_\Box$-theory, we would infer that $\bigvee\nolimits_{j=1}^n \neg \Box C_j \in \Delta$ (contradiction!). By Lindenbaum’s lemma, there is a prime $\mathbf{L}_\Box$-theory $\Delta'$ such that $R\Delta\Delta'$ and $\Gamma' \subseteq \Delta'$.

\noindent \eqref{eq:RF}. Recall that this condition needs to be checked only  when $\mathbf{L} \in \{\mathbf{B_{10(c)}},\mathbf{BN_{10(c)}}\}$, which contain \eqref{L:toContrap} as an axiom. If $\Delta \in F$ and $R \Gamma \Delta$, then the former implies that $\neg t \in \Delta$, which implies by the latter that $\neg \Box t \in \Gamma$. By \eqref{L:Ack}, \eqref{L:toContrap}, \eqref{Lm:Nec} and Lemma \ref{lem:StrongToInt} we infer that $\vdash \neg \Box t \To_\mathbf{L} \neg t$. Hence, we infer by  $\neg \Box t \in \Gamma$ that  $\neg t \in \Gamma$, by which we conclude that $\Gamma \in F$.
\end{proof}

In what follows, the canonical $\mathbf{L}_\Box$-structure will be referred to as the canonical $\mathbf{L}_\Box$-model and it will be denoted as $\mathfrak{M}_{\mathbf{L}_\Box}^c$.

\begin{lemma}\label{lem:SoundCompletModCanonicalModel}
 $A \in \Gamma$ iff $\Gamma \vDash_{\mathfrak{M}_{\mathbf{L}_\Box}^c} A$.
\end{lemma}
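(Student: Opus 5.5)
We prove simultaneously, by induction on the complexity of $A$, that for every $\Gamma \in K_{\mathbf{L}_\Box}$ (i) $A \in \Gamma$ iff $\Gamma \vTrue A$ and (ii) $\neg A \in \Gamma$ iff $\Gamma \vFalse A$ in $\mathfrak{M}^c_{\mathbf{L}_\Box}$. The atomic case, the case of $t$, and the cases of $\neg$, $\land$, $\lor$ and $\to$ go through exactly as in Lemma \ref{lem:SoundCompletCanonicalModel}. Those arguments use only the propositional axioms, primeness, Lindenbaum's Lemma \ref{lem:Lindenbaum} and the deduction theorem (Lemma \ref{thm:DT}), and all of these remain available for $\mathbf{L}_\Box$ because derivability is local. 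So only the steps for $\Box B$ and $\Diamond B$ are new. Using the axioms \eqref{Lm:BoxNeg} and \eqref{Lm:DiaNeg} together with Lemma \ref{lem:StrongToInt}, membership of $\neg\Box B$ and of $\neg\Diamond B$ can be traded for membership of $\Diamond\neg B$ and of $\Box\neg B$. This lets the falsity clauses mirror the truth clauses.

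\textbf{Box truth.} For the easy direction, if $\Box B \in \Gamma$ and $R\Gamma\Delta$, then $B \in \Delta$ by clause (i) of $R$. For the converse, suppose $\Box B \notin \Gamma$. Put $\Lambda^+ = \{A \mid \Box A \in \Gamma\}$ and $\Lambda^- = \{B\} \cup \{\neg C \mid \neg\Box C \notin \Gamma\}$. We show $\Lambda^+ \nvdash \Lambda^-$ by the argument used for \eqref{eq:Confluence1}. If $\Lambda^+ \vdash B \vee \bigvee_j \neg C_j$, then \eqref{HelpBox}, \eqref{Lm:DunnBox}, \eqref{Lm:DiamDis} and \eqref{Lm:DiaNeg} give $\Gamma \vdash \Box B \vee \bigvee_j \neg\Box C_j$. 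Primeness then puts either $\Box B$ or some $\neg\Box C_j$ in $\Gamma$, which is a contradiction. Lindenbaum then yields $\Delta$ with $R\Gamma\Delta$ and $B \notin \Delta$.

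\textbf{Box falsity.} If $\Delta \vFalse B$ with $R\Gamma\Delta$, then by IH $\neg B \in \Delta$, so $\neg\Box B \in \Gamma$ by clause (ii) of $R$. Conversely, assume $\neg\Box B \in \Gamma$. We build $\Delta$ from $\Lambda^+ \cup \{\neg B\}$, avoiding $\{\neg C \mid \neg\Box C \notin \Gamma\}$. Suppose $\Lambda^+, \neg B \vdash \bigvee_j \neg C_j$. Then \eqref{HelpDiamond} gives $\Box\Lambda^+, \Diamond\neg B \vdash \Diamond\bigvee_j \neg C_j$. We have $\Diamond\neg B \in \Gamma$ by \eqref{Lm:DiaNeg}, and \eqref{Lm:DiamDis} with \eqref{Lm:DiaNeg} then gives $\bigvee_j \neg\Box C_j \in \Gamma$, a contradiction by primeness.

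\textbf{Diamond.} The $\Diamond$ cases are symmetric. Truth uses $\Lambda^+ \cup \{B\}$ with the same $\Lambda^-$, via \eqref{HelpDiamond}. Falsity of $\Diamond B$ corresponds to $\Box\neg B \in \Gamma$ by \eqref{Lm:BoxNeg}; the easy direction uses clause (i), and the hard direction is the box-truth argument applied to $\neg B$.

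\textbf{Main obstacle.} The delicate step is the requirement that the Lindenbaum-built $\Delta$ satisfy clause (ii) of $R$. This is why $\Lambda^-$ must include $\{\neg C \mid \neg\Box C\notin\Gamma\}$, and why \eqref{Lm:DunnBox} and \eqref{Lm:DiamDis} are indispensable when combining it with $B$. If $\{\neg C \mid \neg\Box C\notin\Gamma\}$ is empty, we take $\Lambda^-$ to consist of the remaining formulas; in the cases where no other formula is available, we handle this with $\Delta$ built from a trivial disjunct.
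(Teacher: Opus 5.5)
Your treatment of the $\Box$ cases is the paper's own: the same $\Lambda^{\pm}$, the chain \eqref{HelpBox}, \eqref{Lm:DunnBox}, \eqref{Lm:DiamDis}, \eqref{Lm:DiaNeg} for box-truth, and \eqref{HelpDiamond} for box-falsity. You handle $\Diamond$ directly, whereas the paper reduces it to $\Box$ via \eqref{Lm:BoxNeg}, \eqref{Lm:DiaNeg} and Lemma \ref{thm:PosNegReplacement}. Your ``symmetric'' easy direction of $\Diamond$-truth actually needs $B\in\Delta\Rightarrow\neg\neg B\in\Delta\Rightarrow\neg\Box\neg B\in\Gamma\Rightarrow\Diamond\neg\neg B\in\Gamma$, plus $\Diamond$-monotonicity from \eqref{L:DN}, \eqref{Lm:Nec} and \eqref{Lm:KDiam}; that is routine.

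\textbf{The gap is your final paragraph.} In box-falsity and in the hard direction of $\Diamond$-truth, the only set to avoid is $\{\neg C\mid\neg\Box C\notin\Gamma\}$. (In box-truth and $\Diamond$-falsity, $B$ or $\neg B$ keeps $\Lambda^-$ non-empty.) Lemma \ref{lem:Lindenbaum} needs a non-empty set. When this set is empty, nothing guarantees a non-trivial prime theory containing $\{C\mid\Box C\in\Gamma\}\cup\{\neg B\}$. ``A $\Delta$ built from a trivial disjunct'' is not an argument, and no argument can work, because the case occurs and the lemma then fails.

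Take $\mathbf{L}=\mathbf{BN_1}$ and $K=\{x,y\}$ with $\leqslant$ the identity. Let $R=\{(x,y),(y,y)\}$, $V(p,y)=\{1,0\}$ and $V(p,x)=\emptyset$ for all $p$. For the $\mathbf{B}$-logics put $V(p,x)=\{0\}$, and where $F$ is present put $F=K$.
\begin{itemize}
\item By induction, $y$ verifies and falsifies every formula.
\item So by Theorem \ref{thm:ModSoundness}, $\Gamma=\{A\mid x\vTrue A\}$ is a non-trivial prime $\mathbf{L}_\Box$-theory containing both $\Box C$ and $\neg\Box C$ for every $C$.
\item A canonical $R$-successor of $\Gamma$ would have to contain every formula, so there is none.
\item Hence $\neg\Box p\in\Gamma$ while $\Gamma\not\vFalse\Box p$, and $\Diamond p\in\Gamma$ while $\Gamma\not\vTrue\Diamond p$.
\end{itemize}
So with $K$ restricted to non-trivial theories, the lemma is false for the glut-tolerant logics.

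The paper's proof has the same blind spot. It invokes Lindenbaum in (ii) with a possibly empty $\Lambda^-$, and does the same in the proof of \eqref{eq:Confluence2}. You have located a real defect, but your patch does not close it.

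\textbf{A repair that works.}
\begin{itemize}
\item For the $\mathbf{N}$-logics the case cannot arise. Since $\Box(p\to p)\in\Gamma$, \eqref{L:NoGluts} and non-triviality force $\neg\Box(p\to p)\notin\Gamma$. So $\neg(p\to p)\in\Lambda^-$, which is therefore non-empty.
\item For the $\mathbf{B}$- and $\mathbf{BN}$-logics, admit the trivial theory $\mathcal{L}_{\mathbf{L}_\Box}$ as a point of the canonical model. No frame condition forbids it there: gluts are allowed, it lies in $F$ when $F$ is present, it sits $\subseteq$-above every point, and its only $R$-successor is itself. Use it as the witness whenever $\Lambda^-$ is empty. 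Clause (ii) of $R$ then holds, because every $\neg\Box C$ already belongs to $\Gamma$.
\end{itemize}
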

\begin{proof}
By the argument of Lemma \ref{lem:SoundCompletCanonicalModel}, it suffices to show for all $\mathfrak{M}^c_{\mathbf{L}_\Box}$ that (i) $\Box B \in \Gamma $ iff $\Gamma \vTrue \Box B$ and (ii) $\neg \Box B \in \Gamma $ iff $\Gamma \vFalse \Box B$. The induction step case involving $\Diamond B$ follows from this using \eqref{Lm:BoxNeg}, \eqref{Lm:DiaNeg} and Lemma \ref{thm:PosNegReplacement}.

\noindent (i). The left-to-right direction follows from the definition of $R$. For the right-to-left direction, assume that $\Box B \notin \Gamma$. Let $\Lambda^{+} = \{ C \mid \Box C \in \Gamma \}$ and $\Lambda^{-} = \{ B \} \cup \{ \neg D \mid \neg \Box D \notin \Gamma \}$. We prove that $\Lambda^{+} \not\vdash \Lambda^{-}$; Lindenbaum's Lemma \ref{lem:Lindenbaum} then entails that there is $\Gamma'$ such that $R \Gamma  \Gamma'$ and $B \notin \Gamma'$ as desired. 
Thus, assume to the contrary that $\Lambda^{+} \vdash \Lambda^{-}$. Then there is  $\bigvee_{i=1}^n \neg \Box D_i \notin \Gamma$ such that $\Lambda^{+} \vdash B \lor \bigvee_{i = 1}^{n} \neg D_i.$
It follows (by using Lemma \ref{lem:StrongToInt}) that:
\begin{align*}
& \Gamma \vdash \Box \left( B \lor \bigvee\nolimits_{i = 1}^{n} \neg D_i\right)
	&& \text{by Lemma \ref{lem:CanonicalModel-modal-Help} and definition of } \Lambda^+\\
& \Gamma \vdash \Box B \lor \Diamond \bigvee\nolimits_{i = 1}^{n} \neg D_i
	&& \text{by \eqref{Lm:DunnBox}}\\
& 	\Gamma \vdash \Box B \lor \bigvee\nolimits_{i = 1}^{n} \neg \Box D_i
	&& \text{by \eqref{Lm:DiamDis} and \eqref{Lm:DiaNeg}}.	
\end{align*}
Since $\Gamma$ is $\mathbf{L}$-theory, either $\Box B \in \Gamma$ (contradiction!) or $ \bigvee\nolimits_{i = 1}^{n} \neg \Box D_i \in \Gamma$ (contradiction!). Hence, we conclude that indeed $\Lambda^{+} \not\vdash \Lambda^{-}$.

\noindent (ii). The right-to-left direction follows from the definition of $R$. For the left-to-right direction, assume that $\neg \Box B \in \Gamma$. Let $\Lambda^{+} = \{ \neg B \} \cup \{ C \mid \Box C \in \Gamma \}$ and $\Lambda^{-} = \{ \neg D \mid \neg \Box D \notin \Gamma \}$.  We prove that $\Lambda^{+} \not\vdash \Lambda^{-}$; the Lindenbaum lemma then entails that there is $\Gamma'$ such that $\Gamma R \Gamma'$ and $\neg B \in \Gamma'$ as desired. 
Thus, assume to the contrary that $\Lambda^{+} \vdash \Lambda^{-}$. Then there is $\bigvee_{i=1}^n \neg \Box D_i \notin \Gamma$  such that
$\{ C \mid \Box C \in \Gamma \}, \neg B \vdash \bigvee_{i = 1}^{n} \neg D_i$. 
It follows (by using Lemma \ref{lem:StrongToInt}) that:
\begin{align*}
& \Gamma, \Diamond \neg B \vdash \Diamond \bigvee\nolimits_{i = 1}^{n} \neg D_i
	&& \text{by Lemma \ref{lem:CanonicalModel-modal-Help} }\\
& \Gamma, \neg \Box B \vdash \bigvee\nolimits_{i = 1}^{n} \neg \Box D_i
	&& \text{by \eqref{Lm:DiamDis} and \eqref{Lm:DiaNeg}} \\
& \Gamma \vdash \bigvee\nolimits_{i = 1}^{n} \neg \Box D_i
	&& \text{by } \neg \Box B \in \Gamma.	
\end{align*}  
\noindent Since $\Gamma$ is a theory, we have $\bigvee\nolimits_{i = 1}^{n}\neg \Box D_i \in \Gamma$ (contradiction!). Hence, we conclude that indeed $\Lambda^{+} \not\vdash \Lambda^{-}$.
\end{proof}

\begin{theorem}[Modal completeness]\label{thm:ModCompl}
    $\Gamma \models_{\mathcal{N}_{\mathbf{L}_\Box}} A$ only if $\Gamma \vdash_{\mathbf{L}_\Box} A$.
\end{theorem}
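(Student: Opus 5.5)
We argue contrapositively, mirroring the propositional completeness theorem. Assume $\Gamma \nvdash_{\mathbf{L}_\Box} A$. Let $\Gamma_0 = \Th_{\mathbf{L}_\Box}(\Gamma)$; this is an $\mathbf{L}_\Box$-theory, and $\Gamma_0 \nvdash_{\mathbf{L}_\Box} \{A\}$, since otherwise Fact \ref{fact:vdashL}(4) would give $\Gamma \vdash_{\mathbf{L}_\Box} A$. By Lindenbaum's Lemma \ref{lem:Lindenbaum}, applied to $\mathbf{L}_\Box$ with $\Delta = \{A\}$, there is a non-trivial prime $\mathbf{L}_\Box$-theory $\Gamma'$ such that $\Gamma \subseteq \Gamma_0 \subseteq \Gamma'$ and $A \notin \Gamma'$. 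The Lindenbaum argument carries over unchanged to $\mathbf{L}_\Box$: its Lindenbaum--Tarski algebra is still a distributive lattice, because the modal axioms and the rule \eqref{Lm:Nec} leave the lattice structure given by $\land,\lor$ untouched. Hence $\Gamma' \in K_{\mathbf{L}_\Box}$.

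Next, Lemma \ref{lem:ModCanonicalModel} tells us that $\mathfrak{M}^c_{\mathbf{L}_\Box}$ lies in $\mathcal{N}_{\mathbf{L}_\Box}$. Lemma \ref{lem:SoundCompletModCanonicalModel} (the truth lemma) then yields $\Gamma' \vTrue B$ for every $B \in \Gamma$, since $\Gamma \subseteq \Gamma'$, and $\Gamma' \not\vTrue A$, since $A \notin \Gamma'$. So the point $\Gamma'$ of a model in $\mathcal{N}_{\mathbf{L}_\Box}$ verifies all of $\Gamma$ but not $A$, which means $\Gamma \not\models_{\mathcal{N}_{\mathbf{L}_\Box}} A$, as required.

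The real work sits in the two lemmas we invoke, and those are already established. At this stage the only delicate point is that derivability is local: necessitation is applied only to theorems. Consequently, $\Th_{\mathbf{L}_\Box}(\Gamma)$ is closed under elliptical modus ponens and contains $\mathbf{L}_\Box$, but it need not be closed under \eqref{Lm:Nec}. That is exactly the closure Definition \ref{def:Theories} requires of theories, so the points of the canonical model fit the local consequence relation. This matching is what allows the truth lemma to be applied at $\Gamma'$ without any further conditions.
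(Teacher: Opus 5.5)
Your proof is correct and takes the same route as the paper, which derives the theorem directly from Lemma~\ref{lem:ModCanonicalModel} and Lemma~\ref{lem:SoundCompletModCanonicalModel}. You also spell out the step the paper leaves implicit: extending $\Th_{\mathbf{L}_\Box}(\Gamma)$ via Lemma~\ref{lem:Lindenbaum} to a non-trivial prime theory that omits $A$. Your remark that this step, and the match with local derivability, carry over unchanged to $\mathbf{L}_\Box$ is accurate.
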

\begin{proof}
    It follows from Lemmas \ref{lem:ModCanonicalModel}  and  \ref{lem:SoundCompletModCanonicalModel}.
\end{proof}

\section{Related work}\label{sec:Related}

There is a fair amount of existing work on modal extensions of logics related to the Dunn-Belnap logic $\mathbf{FDE}$ and Nelson's logics $\mathbf{N4}$ and $\mathbf{N3}$. 
As mentioned in the introduction, the main difference between our contribution and such work is that our semantics is simpler and that our framework covers the whole generalised Nelson logics family. 
In this section, we comment on the differences between existing work and our framework in more detail.

Odintsov and Wansing \cite{Odintsov2010-ODIMLW-2} study $\mathbf{BK}$, a modal extension of the four-valued propositional logic $\mathbf{B4}$, i.e.~an extension of $\mathbf{FDE}$ with the weak implication of \cite{ArieliAvron1996} and the falsum constant $\bot$. It was later shown in \cite{OdintsovWansing2017} that $\mathbf{BK}$ is closely related  to the modal bilattice logic $\mathbf{MBL}$ of \cite{RivieccioEtAl2017} and to the  modal extension of $\mathbf{FDE}$  of \cite{Goble2006}, $\mathbf{KN4}$; e.g.~$\mathbf{KN4}$ is definitionally equivalent to a fragment of $\mathbf{BK}$. The semantics for $\Box$ and $\Diamond$ in $\mathbf{BK}$ is the same as ours. 
However, their models are not partially ordered and $\mathbf{B4}$ is not part of the generalised Nelson logic family. 

Odintsov and Wansing \cite{OdintsovWansing2004} study $\mathbf{NK}^{-}$, i.e. positive constructive modal logic $\mathbf{PK}$ with strong negation. Adding duality axioms $\neg \Box A \tot \Diamond \neg A$ and $\neg \Diamond A \tot \Box \neg A$ to $\mathbf{NK}^{-}$ gives rise to the logic $\mathbf{NK}$. Adding further interaction axioms between $\Box$ and $\Diamond$ 
\[ \Diamond (A \to B) \to (\Box A \to \Diamond B) \quad\text{and}\quad (\Diamond (A \to A) \to \Box B) \to \Box B \] 
\noindent 
to $\mathbf{NK}$ gives rise to the logic $\mathbf{NK}^{d}$. 
Our semantics is closely related to the semantics for $\mathbf{NK}^{d}$, since both are based on ordered sets and the semantics for $\mathbf{NK}^{d}$, unlike the one for $\mathbf{NK}$, uses only one accessibility relation $R$ for both modal operators $\Box$ and $\Diamond$. However, the verification clause for $\Box A$ and the falsification clause for $\Diamond A$ differ from ours. This difference leads to the failure in $\mathbf{NK}^{d}$  of our Dunn's interaction axiom $\Box (A \lor B) \to (\Box A \lor \Diamond B)$. 

Relatedly, Sherkhonov \cite{Sherkhonov2008} studies $\mathbf{DK}$, a modal extension  of $\mathbf{N4}$ with non-dual $\Box$ and $\Diamond$.  
Adding the interaction axioms displayed above to $\mathbf{DK}$ gives rise to the logic $\mathbf{DK}^d$, while further adding the interaction axioms
\[ \neg \Box \neg (\neg A \to \neg B) \to (\neg \Diamond A \to \neg \Box B) \quad\text{and}\quad (\neg \Box \neg (A \to A) \to \neg \Diamond B) \to \neg \Diamond B \]
\noindent to $\mathbf{DK}$ and $\mathbf{DK}^d$ gives rise to $\mathbf{DK}^{d\neg}$ and $\mathbf{DK}^{dd\neg}$, respectively. Similarly to \cite{OdintsovWansing2004}, our framework differs from  \cite{Sherkhonov2008} in the semantic clauses used for modal operators and in the shape of valid interaction axioms between $\Box$ and $\Diamond$.

Drobyshevich \cite{Drobyshevich2020} has formulated a general framework for modal logics based on $\mathbf{FDE}$, which subsumes the mentioned frameworks of \cite{Goble2006,OdintsovWansing2004,Odintsov2010-ODIMLW-2,OdintsovWansing2017,Sherkhonov2008}. The framework of \cite{Drobyshevich2020} is based on non-normal partially ordered models, where full modalities $\Box$ and $\Diamond$ are obtained by pairing partially defined modal operators.
Drobyshevich's framework does not cover modal extensions of some of the logics in the generalised Nelson family. Most notably, the framework does not cover $\mathbf{B_{10c}}= \mathbf{RM}$, which was the primary motivation for the present paper. This can be seen by noting that in \cite{Drobyshevich2020} the Ackermann constant $t$ is absent and, more importantly, the semantics does not allow one to consider what we called the escape disjunct in the falsification of strong implication. In order to accommodate the latter specific feature of $\mathbf{RM}$ in the framework of \cite{Drobyshevich2020}, one plausible strategy could be to employ partially defined operators also for implication, and define different strong implications by pairing such partial operators.


Finally, Gao, Kozhemiachenko and Olivetti \cite{Gao2025} study a paraconsistent counterpart $\mathbf{CN4K}$ of the constructive modal logic $\mathbf{CK}$ of \cite{WIJESEKERA1990271}. The logic $\mathbf{CN4K}$ extends $\mathbf{N4}$ with independent $\mathbf{K}$ axioms and rules for $\Box$ and $\Diamond$. The logic is very weak, since e.g.~no interaction principle is assumed between $\Box$ and $\Diamond$. Models for $\mathbf{CN4K}$ resemble those for $\mathbf{DK}$, in that four distinct accessibility relations are used to interpret the truth and falsity of $\Box$ and $\Diamond$. 
Our semantics is most closely related to the monorelational semantics for the extension $\mathbf{CN4K^1}$ of $\mathbf{CN4K}$. However, no condition is assumed in \cite{Gao2025}  on the interaction of modal accessibility relations and the partial order, which leads to different verification and falsification clauses for $\Box$ and $\Diamond$.

\section{Conclusion}\label{sec:Conclusion}

In this paper, we studied modal extensions of generalised Nelson logics \cite{Dunn2000}, based on a bilateral Kripke-style semantics. Our main technical result was a modular soundness and completeness result for modal generalised Nelson logics.

There are several interesting problems we would like to tackle in the future. Firstly, we will explore extensions of the framework studied in this paper to a contextual epistemic logic framework in the style of \cite{SedlarVigiani2022,SedlarVigiani2024}. In such a framework, some states in a model are `classical', meaning that non-classical propositional connectives such as $\to$ and $\neg$ behave classically in those states (i.e.~$\neg A$ means `not $A$' and $A \to B$ means `not $A$ or $B$'). This models agents who `reason non-classically' while being situated in classical possible worlds. 
Secondly, as discussed already by Dunn \cite{Dunn1976-DUNAKS}, there is an intriguing resemblance between linear models for $\mathbf{RM}$ and models of tense logic with linear time. We will explore this connection more formally, and in a modal setting. 
Thirdly, to better assess the applicability of our logics, we will study their computational properties such as decidability and complexity of their respective validity problems. Comparisons to classical modal logic will be especially interesting. 
Fourthly, we will generalise our modal extensions of generalised Nelson logics to cover non-dual and non-normal modal operators. 
Finally, it would be interesting to extend Drobyshevich's general framework \cite{Drobyshevich2020} to cover modal extensions of $\mathbf{RM}$.

\paragraph{AI declaration statement.} During the preparation of this work, the authors used Google Gemini (version 3.1 Pro) to conduct a technical and editorial review of the LaTeX manuscript. Specifically, the tool was used to verify the definitions and proofs, to search for related literature, to provide feedback on text clarity, and to suggest improvements to the writing style. It was not used to suggest the research direction undertaken here or to generate the mathematical results in the manuscript. The authors thoroughly reviewed, edited, and validated all outputs and take full responsibility for the final content and originality of the manuscript.

\paragraph{Acknowledgement.} This work was supported by the Czech Science Foundation grant no.\ 25-17958J. 
The second author also acknowledges funding from the Research Foundation – Flanders (FWO) under grant no.~1215326N  (Junior Postdoc Fellowship). Part of the work on the present article was carried out while he was affiliated with Ghent University.

\printbibliography

\end{document}